\lstdefinelanguage{pseudo}{morekeywords={init,with,or,if,then,else,fi,and,not,while,do,od,distinct,
    case, goto,local,algorithm, function, for, each, times, from, to,
    variables, procedure, recursive, return},
  morecomment=[l]{//}, morecomment=[s]{/*}{*/},
  mathescape=true,escapechar={@},
  basicstyle=\sffamily\small,
  commentstyle=\itshape\rmfamily\small,
  keywordstyle=\sffamily\bfseries\small
}
\definecolor{processblue}{cmyk}{0.96,0,0,0}
\newcommand{\nn}{\mathbb{N}}
\newcommand{\be}{\begin{enumerate}}
\newcommand{\ee}{\end{enumerate}}
\newcommand{\bc}{\begin{center}}
\newcommand{\ec}{\end{center}}
\newcommand{\bi}{\begin{itemize}}
\newcommand{\ei}{\end{itemize}}
\newcommand{\act}{\xrightarrow}
\newcommand{\gr}{\mathcal{G}}
\newcommand{\actsymb}[1]{\rightsquigarrow^{#1}}
\newcommand{\fin}{\mathit{fin}}
\newcommand{\prot}{\mathcal{P}}
\newcommand{\init}{\mathit{init}}
\newcommand\slice[2]{#1{\raise-.5ex\hbox{\ensuremath|}}_{#2}}
\newcommand{\defeq}{\stackrel{\scriptscriptstyle\text{def}}{=}}
\newcommand{\N}{\mathbb{N}}                    %% Natural numbers
\renewcommand{\vec}[1]{\bm{#1}}                %% Vectors
\newcommand{\set}[1]{\left\{#1\right\}}        %% Custom set notation
\newcommand{\Cv}{C_{\vec{v}}}
\newcommand{\multiset}[1]{\Lbag#1\Rbag}        %% Bag notation
\newcommand{\supp}[1]{{\llbracket#1\rrbracket}}  %% Support
\renewcommand{\vec}[1]{\bm{#1}}                %% Bold vectors
\newcommand{\norm}[1]{\lVert#1\rVert}          %% Norm
\renewcommand{\PP}{\mathcal{P}}                  %% Population protocol name
\newcommand{\cons}{\mathcal{C}}            %% set of consensus configurations
\newcommand{\stable}{\mathcal{ST}}        %% set of stable configurations
\newcommand{\net}{\mathcal{N}}
\newcommand{\trans}[1]{\xrightarrow{#1}}       %% Transition relation
\newcommand{\pre}{\mathit{pre}} %% 
\newcommand{\post}{\mathit{post}} %% 
\newcommand{\prestar}{\mathit{pre}^*}
\newcommand{\poststar}{\mathit{post}^*}
\renewcommand{\norm}[1]{\| {#1} \|}
\newcommand{\unorm}[1]{\|{#1}\|_u}
\newcommand{\lnorm}[1]{\|{#1}\|_l}
\newcommand{\sem}[1]{\llbracket{#1}\rrbracket} %% Semantic of CC
\newcommand{\cube}{\mathcal{C}}
\newcommand{\cC}{\Gamma}      %% Counting constraint
\newcommand{\cSet}{\mathcal{S}}  %% Counting set
\newcommand{\initcube}{\mathcal{I}} %% Initial cube (of an RBN)
\newcommand{\RBN}{\mathcal{R}}
\newcommand{\ASMS}{\mathcal{P}}
\newcommand{\oper}{\mathtt{op}}
\newcommand{\data}{\textit{data}}
\newcommand{\eqxrightarrow}[2]{%
  \mathop{%
    \vtop{%
      \m@th % no extra space around math
      \offinterlineskip % disable glue between lines
      \ialign{%
        % one column, center align
        \hfil##\hfil\cr
        % fill the row with a right arrow
        \rightarrowfill\cr
        % width is at least the superscript plus 16 mu
        \hphantom{$\scriptstyle\mskip8mu{#2}\mskip8mu$}\cr
        % and at least 1.5em
        \vrule height0pt width 1.5em\cr
        % bottom row is =
        $\scriptscriptstyle {#1}$\cr
      }%
    }%
  }\limits^{#2}%
}
\def\@citecolor{blue}%
\def\@urlcolor{blue}%
\def\@linkcolor{blue}%
\def\orcidID#1{\smash{\href{http://orcid.org/#1}{\protect\raisebox{-1.25pt}{\protect\includegraphics{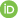}}}}}
\newcommand{\doublec}[1]{\ensuremath{[\![ #1 ]\!]}}
\newcommand{\arrowstar}[1]{\ensuremath{\xrightarrow[]{*}_{#1}}}
\newcommand{\chana}[1]{\todo[color=green!30]{\small #1}}
\newcommand{\chanain}[1]{\todo[color=green!30,inline]{#1}}
\newcommand{\bala}[1]{\todo[color=blue!30]{\small #1}}
\begin{document}
\title{Parameterized Analysis of Reconfigurable Broadcast Networks\thanks{This project has received funding from the European Research Council (ERC) under the European Union's Horizon 2020 research and innovation programme under grant agreement No 787367 (PaVeS).}}
\titlerunning{Parameterized Analysis of RBN}
% If the paper title is too long for the running head, you can set
% an abbreviated paper title here
%
\author{A. R. Balasubramanian\inst{1}\orcidID{0000-0002-7258-5445} \and
Lucie Guillou\inst{2}\orcidID{0000-0002-6101-2895} \and
Chana Weil-Kennedy\inst{3}(\Envelope)\orcidID{0000-0002-1351-8824}}
\authorrunning{A. R. Balasubramanian, L. Guillou, C. Weil-Kennedy}
\institute{Technical University of Munich
\email{bala.ayikudi@tum.de} \and 
ENS Rennes
\email{lucie.guillou@ens-rennes.fr} \and
Technical University of Munich
\email{chana.weilkennedy@in.tum.de} 
}%
\maketitle              % typeset the header of the contribution
\begin{abstract}

Reconfigurable broadcast networks (RBN) are a model of distributed computation in which agents can broadcast messages to other agents using some underlying communication topology which
can change arbitrarily over the course of executions. In this paper, we conduct
\emph{parameterized analysis} of RBN.
We consider cubes, (infinite) sets of configurations
in the form of lower and upper bounds on the number of agents in each state, 
and we show that we can evaluate boolean combinations over cubes and reachability sets of cubes in \PSPACE.
In particular, reachability from a cube to another cube is a \PSPACE-complete problem. 

To prove the upper bound for this parameterized analysis, we prove some structural properties about the reachability sets and the  \emph{symbolic graph} abstraction of RBN, which
might be of independent interest. We justify this claim by providing two applications of these
results.
First, we show that the almost-sure coverability problem is \PSPACE-complete 
for RBN, thereby closing a complexity gap from a previous paper~\cite{Gandalf21}. 
Second,
we define a computation model using RBN, \`a la population protocols, called RBN protocols.
We characterize precisely the set of predicates that can be computed by such protocols.

%Reconfigurable broadcast networks (RBN) are a model of distributed computation in which agents can broadcast messages to other agents using some underlying communication topology which
%can change arbitrarily over the course of executions. In this paper, we consider
%\emph{parameterized reachability} problems for RBN, in which we are given
%two (infinite) sets of configurations $\cube$ and $\cube'$, 
%in the form of lower and upper bounds on the number of agents in each state, and 
%we have to decide if there exists a configuration in $\cube$ which can reach
%some configuration in $\cube'$. We show that this problem is \PSPACE-complete.
%
%To prove the upper bound for parameterized reachability, we prove some structural properties about the reachability sets and the  \emph{symbolic graph} abstraction of RBN, which
%might be of independent interest. We justify this claim by providing two applications of these
%results.
%%Then, using these results, we provide two applications.
%First, we show that the almost-sure coverability problem is \PSPACE-complete 
%for RBN, thereby closing a complexity gap from a previous paper~\cite{Gandalf21}. 
%Second,
%we define a computation model using RBN, \`a la population protocols, called RBN protocols.
%We characterize precisely the set of predicates that can be computed by such protocols.

\keywords{Broadcast networks \and Parameterized reachability \and Almost-sure coverability \and Asynchronous shared-memory systems}
\end{abstract}
%
%
%%%%%%%%%%%%%%%%%%%%%%%%%%%%%%%%%
\section{Introduction}

%!TEX root = main.tex
Reconfigurable broadcast networks (RBN)~\cite{Delzanno12,AdHocNetworks} are a formalism for modelling distributed systems 
in which a set of anonymous, finite-state agents execute the same underlying protocol
and broadcast messages to their neighbors according to an underlying communication topology.
The communication topology is reconfigurable, meaning that the set 
of neighbors of an agent can change arbitrarily over the course of an execution. 
Parameterized verification of these networks concerns itself with proving
that a given property is correct, irrespective of the number of participating agents. 
Dually, it can  be viewed as the problem of finding an execution of some number
of agents which violates a given property. Ever since their introduction within this context~\cite{AdHocNetworks},
RBN have been studied extensively, with various results on (parameterized) reachability and coverability~\cite{Delzanno12,AdHocNetworks,Gandalf21,Liveness}, along with various extensions using probabilities 
and clocks~\cite{prob,probtime}. 

In this paper, we first consider the \emph{cube-reachability} problem for RBN, in which we are given
two (possibly infinite) sets of configurations $\cube$ and $\cube'$ (called \emph{cubes}), each of them defined 
by lower and upper bounds on the number of agents in each state, and we must decide
if there is a configuration in $\cube$ which can reach some configuration in $\cube'$.
The cube-reachability question covers parameterized reachability and coverability problems,
and as explained in~\cite{Gandalf21}, also covers the parameterized reachability problem
for a generalized model of RBN called \emph{RBN with leaders}. 
Moreover, a sub-problem of cube-reachability
 has already been studied for RBN in~\cite{Delzanno12}. The authors
show that this sub-problem is \PSPACE-complete. One of the  results in our paper
is that the entire cube-reachability problem is \PSPACE-complete, hence extending the sub-problem
considered in~\cite{Delzanno12}, while still retaining the same complexity upper bound.

In fact, our main result, which we call the \PSPACE{} Theorem, is a more general result. 
It subsumes the above result for cube-reachability and allows for more complex parameterized analysis of RBN. 
The \PSPACE{} Theorem roughly states
%that any boolean combination of cubes along with \emph{reachability} operators 
that any boolean combination of atoms
can be evaluated in \PSPACE{},
 where  an atom  is a finite union of cubes or the reachability set  of a finite union of cubes (i.e. $\poststar$ or $\prestar$). To prove the \PSPACE{} Theorem, we first consider the so called \emph{symbolic graph} of a RBN~(\cite{Delzanno12}, Section 5). We prove some structural properties about these graphs,  using
results from~\cite{Delzanno12}. Next, using these structural properties,
we show that the set of reachable configurations of a cube $\cube$ can be expressed as a finite union of cubes,  each  having a norm  exponentially bounded in the size of the given RBN and $\cube$.  
This result then allows us to give an on-the-fly exploration algorithm for proving the \PSPACE{} Theorem.

We believe that the \PSPACE{} Theorem and the results leading to it that we have proven in this paper have
further applications to problems concerning RBN. 
To justify this claim, we provide two applications.
First, we show that the almost-sure coverability problem for RBN is \PSPACE-complete, thereby
closing a complexity gap from a previous paper~(\cite{Gandalf21}, Section 5.3). Second, we define 
a computation model using RBN, called RBN protocols, which is similar in spirit to the population protocols model~\cite{First-Pop-Prot,Comp-Power-Pop-Prot}. 
We characterize precisely the set of predicates that can be computed using 
RBN protocols. 
%Moreover, we also show that the \emph{correctness} problem for RBN protocols is \PSPACE-complete.
%These two results generalize the corresponding results for IO protocols, which 
%are a sub-class of population protocols that can be simulated by RBN protocols,
%as shown in~(\cite{Gandalf21}, Section 6.2).
This result generalizes the corresponding result for IO protocols, which 
are a sub-class of population protocols that can be simulated by RBN protocols,
as shown in~(\cite{Gandalf21}, Section 6.2).

Finally, by the reduction given in~(\cite{Gandalf21}, Section 4.2), our results on  cube-reachability and almost-sure coverability can  be transferred
to another model of distributed computation called asynchronous shared memory systems (ASMS),
giving a \PSPACE-completeness result for both of these problems. This solves an open problem
from~(\cite{ICALPPatricia}, Section 6).

To summarize, we have shown that many important parameterized problems of RBN can be solved in PSPACE, 
 that the sub-problem of the cube-reachability problem defined in \cite{Delzanno12} can be generalized while retaining the same upper bounds, 
 and that the almost-sure 
coverability problems for RBN and ASMS are \PSPACE-complete, thereby solving open problems
from \cite{Gandalf21,ICALPPatricia}. We believe that our other results
%, especially the \PSPACE{} Theorem,
 might
be of independent interest, and we provide an application
% of this theorem
by introducing RBN protocols and characterizing the set of predicates that they can compute. 

The paper is organized as follows. 
Section 2 contains preliminaries,  including the definition of RBN.
Section 3 defines the  symbolic graph of a RBN, and proves the properties of this graph needed to derive our main result.
Section 4 contains the main result that a host of parameterized problems over cubes,  including cube-reachability, is \PSPACE-complete for RBN.
Finally, Sections 5 and 6 give applications of our main results: 
Section 5 solves the complexity gap for the almost-sure coverability problem, and Section 6 introduces RBN protocols and characterizes their expressive power.
Due to lack of space, full proofs of some of the results can be found in the appendix.

%%%%%%%%%%%%%%%%%%%%%%%%%%%%%%%%%
\section{Preliminaries}

%!TEX root = main.tex

The definitions and notations in this section are taken from~\cite{Gandalf21}.

%%%%%%%%%%%%%%%%%%%%
\subsection{Multisets}

%\paragraph*{Multisets.}
A \emph{multiset} on a finite set \(E\) is a mapping \(C \colon E \rightarrow \N\), i.e. for any $e\in E$, \(C(e)\) denotes the number of occurrences of element \(e\) in \(C\).
We let $\mathbb{M}(E)$ denote the set of all multisets on $E$.
Let $\multiset{e_1,\ldots,e_n}$ denote the multiset $C$ such that $C(e)=|\{j\mid e_j=e\}|$.
We sometimes write multisets using set-like notation. 
For example, $\multiset{2 \cdot a,b}$ and $\multiset{a,a,b}$ denote the same multiset.
Given $e \in E$, we denote by $\vec{e}$ the multiset consisting of one occurrence of element
$e$, that is $\multiset{e}$. 
Operations on \(\N\) like addition or comparison are extended to multisets by defining them component wise on each element of \(E\).
Subtraction is allowed as long as each component stays non-negative.
%Given a multiset $C$ on $E$  and a multiset $C'$ on $E'$ 
%such that $E \cap E'=\emptyset$, we denote by 
%$C \cdot C'$ the multiset on $E \cup E'$ equal
%to $C$ on $E$ and  to $C'$ on $E'$. 
We call $|C| \defeq\sum_{e\in E} C(e)$ the \emph{size} of $C$.
%, and $\support{C} \defeq \{ e \mid C(e)>0 \}$ the \emph{support} of $C$. 
%Given \(E'\subseteq E\) define \(C(E')\defeq\sum_{e\in E'} C(e)\).

%%%%%%%%%%%%%%%%%%%%
\subsection{Reconfigurable Broadcast Networks}

Reconfigurable broadcast networks (RBN) are networks consisting of finite-state, anonymous agents
and a communication topology which specifies for every pair of processes, whether or not there
is a communication link between them. During a single step, a single agent can broadcast a message
which is received by all of its neighbors, after which both the agent and its neighbors change their
state according to some transition relation. Further, in between two steps, the communication
topology can change in an arbitrary manner. For the problems that
we consider in this paper, it is  easier to forget
the communication topology and define the semantics of an RBN directly in terms of collections
of agents.

%from gandalf21
\begin{definition}
\label{def:rbn}
A \emph{reconfigurable broadcast network} is a tuple 
$\RBN = (Q, \Sigma,\delta)$ 
where $Q$ is a finite set of states,
$\Sigma$ is a finite alphabet 
and $\delta \subseteq Q \times \set{!a,?a \ | \ a \in \Sigma} \times Q$ is the transition relation.
\end{definition}

If $(p,!a,q)$ (resp. $(p,?a,q)$) is a transition in $\delta$, we will denote it by $p \act{!a} q$ (resp. $p \act{?a} q$).
A \emph{configuration} $C$ of a RBN $\RBN$ is a multiset over $Q$, which
intuitively counts the number of processes in each state. 
Given a letter $a\in \Sigma$ and two configurations $C$ and $C'$
we say  that there is a \emph{step} $C \trans{a} C'$
if there exists a multiset $\multiset{t, t_1, \ldots, t_k}$ of $\delta$ for some $k\ge 0$
satisfying the following: $t=p \trans{!a} q$, each $t_i =p_i \trans{?a} q_i$,
$C \ge \vec{p} + \sum_i \vec{p_i}$, and $C' = C - \vec{p} - \sum_i \vec{p_i} + \vec{q} + \sum_i \vec{q_i}$. 
We sometimes write this as $C \trans{t+t_1,\ldots, t_n} C'$ or $C \trans{a} C'$. 
Intuitively it means that a process at the state $p$ broadcasts the message $a$ and moves to $q$,
and for each $1 \le i \le k$, there is a process at the state $p_i$ which receives this message and moves to $q_i$.
We denote 
%by $\trans{t+t_1 \ldots t_k}$ such a step, and 
by $\trans{*}$ the reflexive and transitive closure of the step relation. 
A \emph{run} is then a sequence of steps.

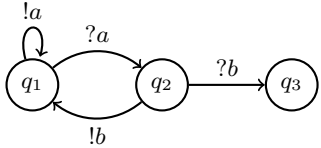
\begin{figure}
\begin{center}
    \begin{tikzpicture}[->, thick]
      \node[place] (a1) {$q_1$};
	  \node[place] (a2) [right =of a1] {$q_2$};
	  \node[place] (a3) [right =of a2] {$q_3$};

      \path[->]
      (a1) edge[bend left = 40] node[above] {$?a$} (a2)
      (a2) edge[bend left = 40] node[below] {$!b$} (a1)
      (a2) edge node[above] {$?b$} (a3)
      (a1) edge[loop above] node[above] {$!a$} (a1)
      ;
    \end{tikzpicture}
\end{center}
\caption{An RBN $\RBN$ with three states.}
\label{fig:rbn}
\end{figure}

%\paragraph*{Reachability.}
Let $\RBN= (Q, \Sigma,\delta)$ be an RBN.
Given configurations $C$ and $C'$, 
we say $C'$ is \emph{reachable} from $C$ if $C \trans{*} C'$.
We say $C'$ is \emph{coverable} from $C$ if there exists $C''$ such that $C \trans{*} C''$ and
$C'' \ge C'$.
The \emph{reachability} problem consists of deciding, 
given a RBN $\RBN$ and configurations $C,C'$, 
whether $C'$ is reachable from $C$ in $\RBN$.
The \emph{coverability} problem consists of deciding, 
given a RBN $\RBN$ and configurations $C,C'$, 
whether $C'$ is coverable from $C$ in $\RBN$.
Let $\cSet$ be a set of configurations. 
The \emph{predecessor set} of $\cSet$ is 
$\pre^*(\cSet) \defeq \{ C' | \exists C \in \cSet \, . \, C' \xrightarrow{*} C \}$, and the \emph{successor set} of $\cSet$ is
$\post^*(\cSet) \defeq \{ C | \exists C' \in \cSet \, . \, C' \xrightarrow{*} C \}$.

\begin{example}
\label{ex:rbn}
Figure \ref{fig:rbn} illustrates a RBN $\RBN=(Q, \Sigma,\delta)$ with $Q=\set{q_1,q_2,q_3}$. 
Configuration $\multiset{3 \cdot q_1}$ can reach $\multiset{2 \cdot q_1, q_3}$ in two steps. 
First, a process broadcasts $a$, the two other processes receive it and move to $q_2$. 
Then, one of the processes in $q_2$ broadcasts $b$ and moves to  $q_1$, while the other one receives $b$ and moves to $q_3$.
Notice that $\multiset{ q_3}$ is only coverable from a configuration $\multiset{k \cdot q_1}$ if $k\ge 3$.
\end{example}

%%%%%%%%%%%%%%%%%%%%
\subsection{Cubes and Counting Sets}

%\chanain{either already put subsection on [cubes + counting set over $\N^k$ + norm definition + norm of boolean combinations of cubes] here, or above 3.3 (but thats already a lonnng section)}

%\paragraph*{Cubes and counting sets.}
Given a finite set $Q$, a \emph{cube} $\cube$ is a subset of $\mathbb{M}(Q)$ described 
by a lower bound $L \colon Q \rightarrow \N$ 
and an upper bound $U \colon Q \rightarrow \N \cup \{\infty\}$ 
such that $\cube = \{C : L \le C \le U\}$.
Abusing notation, we identify the set $\cube$ with the pair $(L,U)$.
Notice that since $U(q)$ can be $\infty$ for some state  $q$, a cube can contain an infinite number of configurations.
All the results in this paper are true irrespective of whether the constants in a given input cube
are encoded in unary or binary.

%We consider cubes over a finite set $Q$.
A
%formal
finite union of cubes $\bigcup_{i=1}^m (L_i,U_i)$ is called a \emph{counting constraint}
and the set of configurations $\bigcup_{i=1}^m \cube_i$ it describes is called a \emph{counting set}.
%We write $[\cC]$ for the counting set described by the counting constraint $\cC$.
Notice that two different counting constraints may describe the same counting set.
For example, let $Q=\set{q}$ and let $(L,U)=(1,3)$, $(L',U')=(2,4)$, $(L'',U'')=(1,4)$. 
The counting constraints $(L,U)\cup(L',U')$ and $(L'',U'')$ define the same counting set.
%the following sentence is directly from PN2019
It is easy to show (see also Proposition 2 of~\cite{EsparzaGMW18})
that counting constraints and counting sets are closed under Boolean operations.

\paragraph*{Norms.}
Let $\cube=(L,U)$ be a cube.
Let $\lnorm{\cube}$ be the the sum of the components of $L$.
Let $\unorm{\cube}$ be the sum of the finite components of $U$ if there are any, and $0$ otherwise.
The \emph{norm} of $\cube$ is the maximum of $\lnorm{\cube}$ and $\unorm{\cube}$, denoted by $\norm{\cube}$.
We define the norm of a counting constraint $\cC= \bigcup_{i=1}^m \cube_i$ as
$\norm{\cC} \defeq \displaystyle \max_{i\in [1,m]} \{ \norm{\cube_i} \}.$
%the next few lines are almost those from the DC submission
The norm of a counting set $\cSet$ is the smallest norm of a counting constraint representing $\cSet$, that is, 
$\norm{\cSet} \defeq \displaystyle \min_{\cSet = \sem{\cC}} \{ \norm{\cC} \}$.
Proposition 5 of~\cite{EsparzaGMW18} entails the following results for the norms of the union, intersection and complement.
\begin{proposition}%
\label{prop:oponconf}
Let $\cSet_1, \cSet_2$ be counting sets.
The norms of the union, intersection and complement satisfy:
%\begin{align*}
$ \norm{\cSet_1 \cup \cSet_2} \leq \max \{\norm{\cSet_1}, \norm{\cSet_2} \},
 \norm{\cSet_1 \cap \cSet_2} \leq \norm{\cSet_1} + \norm{\cSet_2}$, and  
 $\norm{\overline{\cSet_1}} \leq |Q| \cdot \norm{\cSet_1} + |Q|.$
%\end{align*}
%\begin{itemize}
%\item The norm of  counting set $\cSet = \cSet_1 \cup \cSet_2$ is such that
%$\norm{\cSet} \leq \max \{\norm{\cSet_1}, \norm{\cSet_2} \}$.
%\item  The norm of  counting set $\cSet = \cSet_1 \cap \cSet_2$ is such that
%$\norm{\cSet} \leq \norm{\cSet_1} + \norm{\cSet_2}$.
%\item The norm of  counting set $\cSet = \N^n \setminus \cSet_1$ is such that
%$\norm{\cSet} \leq n\norm{\cSet_1} + n$.
%\end{itemize}
\end{proposition}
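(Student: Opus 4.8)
The plan is to establish all three bounds at the level of the defining cubes, exploiting that cubes are closed under intersection and that the complement of a single cube is a union of ``single-coordinate'' cubes. Throughout I fix minimal counting constraints $\cC_1 = \bigcup_{i=1}^m \cube_i$ and $\cC_2 = \bigcup_{j=1}^n \cube'_j$ with $\norm{\cC_1} = \norm{\cSet_1}$ and $\norm{\cC_2} = \norm{\cSet_2}$, writing $\cube_i = (L_i, U_i)$. The union bound is immediate: $\cC_1 \cup \cC_2$ is a counting constraint for $\cSet_1 \cup \cSet_2$ whose norm is the maximum of the norms of its cubes, so $\norm{\cSet_1 \cup \cSet_2} \le \max\{\norm{\cC_1}, \norm{\cC_2}\} = \max\{\norm{\cSet_1}, \norm{\cSet_2}\}$.

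For the intersection I first observe that the intersection of two cubes $(L,U)$ and $(\widetilde L, \widetilde U)$ is again a cube, namely $(\max(L,\widetilde L), \min(U,\widetilde U))$ taken componentwise (empty if some lower bound exceeds the corresponding upper bound, in which case the cube can be discarded). On the lower bounds the inequality $\max(a,b) \le a+b$ shows that the sum of $\max(L,\widetilde L)$ is at most $\lnorm{(L,U)} + \lnorm{(\widetilde L, \widetilde U)}$; on the upper bounds a coordinate is finite only if at least one of $U, \widetilde U$ is finite there, and the minimum is then bounded by that finite value, so the sum of the finite components of $\min(U,\widetilde U)$ is at most $\unorm{(L,U)} + \unorm{(\widetilde L,\widetilde U)}$. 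Hence the intersection cube has norm at most the sum of the two norms. Distributing $\cSet_1 \cap \cSet_2 = \bigcup_{i,j}(\cube_i \cap \cube'_j)$ and taking the worst pair yields $\norm{\cSet_1 \cap \cSet_2} \le \norm{\cC_1} + \norm{\cC_2} = \norm{\cSet_1} + \norm{\cSet_2}$.

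The complement is the delicate case. The complement of a single cube $(L,U)$ is the union, over coordinates $q \in Q$, of the single-coordinate cubes $\{C : C(q) \le L(q)-1\}$ (present when $L(q) \ge 1$) and $\{C : C(q) \ge U(q)+1\}$ (present when $U(q) < \infty$). Since $\overline{\cSet_1} = \bigcap_{i=1}^m \overline{\cube_i}$, I distribute the intersection over these unions, obtaining a union of cubes indexed by choice functions $g$ that select, for each $i$, one single-coordinate constraint of $\overline{\cube_i}$. The crucial point is that the intersection $\bigcap_i \cube_{i,g(i)}$ is computed coordinate-wise, so grouping the chosen constraints by the coordinate they constrain, each coordinate $q$ receives one effective lower bound, the maximum of the selected values $U_i(q)+1 \le \norm{\cSet_1}+1$, and one effective upper bound, the minimum of the selected values $L_i(q)-1 \le \norm{\cSet_1}-1$. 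Thus every per-coordinate bound is at most $\norm{\cSet_1}+1$, and summing over the $|Q|$ coordinates shows that both the sum of the lower bounds and the sum of the finite upper bounds of each such cube are at most $|Q|\,(\norm{\cSet_1}+1)$. Therefore $\norm{\overline{\cSet_1}} \le |Q|\cdot\norm{\cSet_1} + |Q|$.

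The main obstacle is precisely this last step: obtaining a factor of $|Q|$ rather than the number $m$ of cubes in $\cC_1$. The naive strategy of iterating the additive intersection bound over the $m$ complemented cubes would give norm roughly $m\,(\norm{\cSet_1}+1)$, which is useless since $m$ is not controlled by $|Q|$ or $\norm{\cSet_1}$. What rescues the estimate is the saturation of per-coordinate constraints: intersecting arbitrarily many single-coordinate constraints on one and the same coordinate collapses to a single max/min still bounded by $\norm{\cSet_1}+1$, so only the number $|Q|$ of coordinates, and not $m$, enters the final bound.
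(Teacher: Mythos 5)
Your proof is correct. Note that the paper does not actually prove this proposition: it simply invokes Proposition~5 of~\cite{EsparzaGMW18}, so your argument is a self-contained elementary replacement for that citation rather than a variant of an in-paper proof. Your three steps are each sound: the union bound is immediate from the definition of the norm of a counting constraint as a maximum over its cubes; the intersection bound correctly uses that $\cube \cap \cube' = (\max(L,\widetilde L), \min(U,\widetilde U))$ componentwise, with the observation that a coordinate of $\min(U,\widetilde U)$ is finite only when one of the two is, so the finite part of the minimum is dominated by $\unorm{\cube} + \unorm{\cube'}$; and for the complement you correctly identify the one point where a naive argument fails. Iterating the intersection bound over the $m$ complemented cubes would give a bound proportional to $m$, which is not controlled by $|Q|$ or $\norm{\cSet_1}$; your saturation observation --- that after distributing, all chosen single-coordinate constraints on the same coordinate $q$ collapse to one lower bound $\max_i (U_i(q)+1) \le \norm{\cSet_1}+1$ and one upper bound $\min_i (L_i(q)-1) \le \norm{\cSet_1}-1$, so only $|Q|$ coordinates contribute to the sums --- is exactly the idea that yields $|Q|\cdot\norm{\cSet_1}+|Q|$. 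The only cosmetic caveat is that the resulting representation of $\overline{\cSet_1}$ may have exponentially many cubes, but since the norm of a counting constraint depends only on the maximum norm of its cubes and not on their number, this does not affect the stated bound.
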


\paragraph*{Reachability.}
The reachability problem can be generalized to the \emph{cube-reachability} problem which consists of deciding, given an RBN $\RBN$ and two cubes $\cube, \cube'$,
whether there exists configurations  $C \in \cube$ and $C' \in \cube'$ such that $C'$ is reachable from $C$ in $\RBN$.
If this is the case, we say $\cube'$ is reachable from $\cube$.
The \emph{counting set-reachability} problem asks, given an RBN $\RBN$ and two counting sets $\cSet, \cSet'$,
whether there exists cubes  $\cube \in \cSet$ and $\cube' \in \cSet'$ such that $\cube'$ is reachable from $\cube$ in $\RBN$.
We define \emph{cube-coverability}  and \emph{counting set-coverability} in an analoguous way.

\begin{remark}\label{remark:FSTTCS12}
	In the paper~\cite{Delzanno12}, the authors define a sub-class of the cube-reachability problem,
	which is called the \emph{unbounded initial cube-reachability} problem in~\cite{Gandalf21}. 
	More precisely, the sub-class considered in~\cite{Delzanno12} is the following:
	We are given an RBN and two cubes $\cube = (L,U)$ and $\cube' = (L',U')$ with
	the special property that $L(q) = 0$ and $U(q) \in \{0,\infty\}$ for every state $q$.
	We then have to decide if $\cube$ can reach $\cube'$. This problem was shown to be
	\PSPACE-complete~(\cite{Delzanno12}, Theorem 5.5), whenever the numbers in the input are given in unary. As we shall show later in this paper,
	the cube-reachability problem itself is in \PSPACE, even when the input numbers are encoded in binary, thereby generalizing the upper
	bound results given in that paper.
\end{remark}

%%%%%%%%%%%%%%%%%%%%%%%%%%%%%%%%%
\section{Reachability sets of counting sets}\label{sec:main-sec}

%\chanain{we state pspace lemma as being our main result, then state post* of a cube is a cube then prove pspace result. Rest is proving that post* of a cube is a cube}

%!TEX root = main.tex
In this section, we set the stage for proving the  main result of this paper. 
This main result is given in two stages: First, we show that 
given a RBN with state set $Q$ and a counting set $\cSet$, the set $\poststar(\cSet)$ is also a counting set and $\norm{\poststar(\cSet)} \le 2^{p(\norm{\cSet} \cdot |Q|)}$ where
$p$ is some fixed polynomial. 
Using this, we then prove that a host of cube-parameterized problems  for RBN can be solved in \PSPACE. 

%In fact, our second main result is a much more general result. More specifically, it shows that we can evaluate predicates with boolean and reachability operators over counting sets in polynomial space. We call this \emph{\PSPACE \ Theorem}. 
%The intuition behind the \PSPACE{} Theorem is
%that due to our first result, the norms of the counting sets obtained
%by such predicates are ``small'' and so we only need to examine small configurations to verify such predicates, thus yielding a \PSPACE \ algorithm for checking correctness.

The rest of this section is organized as follows: To prove the first result, we  recall the notion of a \emph{symbolic graph} of a RBN from~\cite{Delzanno12}.
In the symbolic graph, each node is a \emph{symbolic configuration} of the RBN, 
which intuitively represents
an infinite set of configurations in which 
the number of agents is fixed in some states, and arbitrarily big in the others.
Next, by exploiting the special structure of the symbolic graph, we prove some properties which allow us to show that 
whenever two nodes in this graph are reachable, they are reachable by a path having a  special structure.
Finally, using these properties and the connection between symbolic configurations and  configurations of the RBN,
we prove the desired first result. Once we have shown the first result, we then show how the \PSPACE{} Theorem can be obtained from it.

Throughout this section, we fix an RBN $\RBN = (Q,\Sigma,\delta)$.

\subsection{Symbolic graph}

In this subsection, we recall the notion of a \emph{symbolic graph} of an RBN from~\cite{Delzanno12}. 
Here, for the sake of convenience, we define it in a slightly different way, but the underlying notion is the same as~\cite{Delzanno12}. Throughout this subsection and the next,
we fix a number $k \in \nn$.

The \emph{symbolic graph of index} $k$ associated with the RBN $\RBN$ is an edge-labelled graph
$\gr_k = (N,E,L)$ where $N = \mathbb{M}_k(Q) \times 2^Q$ is the set of nodes. 
Here $\mathbb{M}_k(Q)$ denotes
the set of multisets on $Q$ of size at most $k$. 
$E$ is the set of edges 
and $L : E \to \Sigma$ is the labelling function.
Each node of $\gr_k$ is also called a \emph{symbolic configuration}.
Intuitively, in each symbolic configuration $(v,S)$, the multiset $v$ (called the \emph{concrete part}) is used to 
keep track of a fixed set of at most $k$ agents, and the subset $S$ (called the \emph{abstract part}) is used to keep track of the \emph{support}
of the remaining agents. 

Let $\theta = (v,S)$ and $\theta' = (v',S')$ be two symbolic configurations. There is an edge labelled by $a$ between $\theta$ and $\theta'$ if and only if the following is satisfied: There exists a transition $(q,!a,q') \in \delta$ such that at least one 
of the following two conditions holds
\begin{itemize}
	\item \textit{(Broadcast from $v$)} There exists a multiset of transitions $\multiset{(p_1,?a,p_1'),\dots,\allowbreak (p_l,?a,p_l')}$ such that 
	$v' = v - \sum_i \vec{p_i} + \sum_i \vec{p_i'} - \vec{q} + \vec{q'}$, and for each $q_s \in Q$:
	\begin{itemize}
		\item If $q_s \in S' \setminus S$ then there exists $q'_s \in S$ and $(q'_s, ?a, q_s) \in R$,
		\item If $q_s \in S \setminus S'$ then there exists $q'_s \in S'$ and $(q_s, ?a, q'_s) \in R$.
	\end{itemize}
	\item \textit{(Broadcast from $S$)} There exists a multiset of transitions $\multiset{(p_1,?a,p_1'),\dots,\allowbreak (p_l,?a,p_l')}$ such that
	$v' = v - \sum_i \vec{p_i} + \sum_i \vec{p_i'}$, $q \in S, q' \in S'$, and for each $q_s \in Q \setminus \{q,q'\}$:
	\begin{itemize}
		\item if $q_s \in S' \setminus S$ then there exists $q'_s \in S$ and $(q'_s, ?a, q_s) \in R$,
		\item if $q_s \in S \setminus S'$ then there exists $q'_s \in S'$ and $(q_s, ?a, q'_s) \in R$.
	\end{itemize}
\end{itemize}

An edge labelled by $a$ between $\theta$ and $\theta'$ is denoted by 
$\theta \rightsquigarrow^a_{\gr_k} \theta'$.
The relation $\rightsquigarrow^*_{\gr_k}$ is the reflexive and transitive closure of $\rightsquigarrow_{\gr_k} := \cup_{a \in \Sigma}\rightsquigarrow^a_{\gr_k}$. Whenever the index $k$ is clear, we will drop
the subscript $\gr_k$ from these notations.
%Remark that we do not need to specify the index of the graph in which this transition happens because of the following.

\begin{remark}
\label{rmk:symbgraph-reach}
Let $\theta = (v,S),\theta' = (v',S')$ be two symbolic configurations.
By construction, $\theta$ can only reach $\theta'$ if $|v|=|v'|$.
\end{remark}

To give an intuition behind the edges in $\gr_k$, recall the intuition that in a symbolic configuration, the concrete part is used to keep track
of a fixed set of at most $k$ processes and the abstract part is used to keep track of the support of the remaining processes. The first 
condition for the existence of an edge asserts the following: 1) In the concrete part, some process broadcasts the message $a $ and some subset of processes receive $a$, 2) In the abstract part, any new state added or any old state deleted comes because of receiving $a$.
The second condition asserts exactly the same, except we now require the process broadcasting the message $a$ to be from the abstract part.

The symbolic graph of index $k$ can be thought of as an abstraction of the set of configurations of $\RBN$, where only a fixed number of processes
are explicitly represented and the rest are abstracted by means of their support alone. To formalize this, given a symbolic configuration
$\theta = (v,S)$, we let $\supp{\theta}$ denote the following (infinite) set of configurations: $C \in \supp{\theta}$ if and only if
$C(q) = v(q)$ for $q \notin S$ and $C(q) \ge v(q)$ for $q \in S$.
%\chana{can we have "$C(q) \ge v(q)$ if $q \in S$"? so that the resulting cube is the natural one with $v$ the lower bound}

\begin{figure}
%!TEX root = main.tex
\begin{center}
    \begin{tikzpicture}[->, thick, node distance=1.3cm]
      \node[rounded corners=1mm,draw]  (a1) {$\set{q_1}$};
      \node[rounded corners=1mm,draw] (a2) [right =of a1] {$\set{q_1,q_2}$};
      \node[rounded corners=1mm,draw] (a3) [right =of a2] {$\set{q_1,q_2,q_3}$};
      \node[rounded corners=1mm,draw] (a4) [below =of a1] {$\set{q_2}$};
      \node[rounded corners=1mm,draw] (a5) [right =of a4] {$\set{q_1,q_3}$};
      \node[rounded corners=1mm,draw] (a6) [below =of a3] {$\set{q_2,q_3}$};
      \node[rounded corners=1mm,draw] (a7) [right =of a6] {$\set{q_3}$};

      \path[->]
      (a1) edge[loop above] node[above] {$a$} (a1)
      (a1) edge[bend left = 20] node[above] {$a$} (a2)
      (a2) edge[bend left = 20] node[below] {$b$} (a1)
      (a4) edge node[left] {$b$} (a1)
      (a2) edge node[above] {$b$} (a3)
      (a2) edge node[above left] {$b$} (a5)
      (a2) edge[loop above] node[above] {$a,b$} (a2)
      (a4) edge node[below] {$b$} (a5)
      (a4) edge node[below right] {$b$} (a3)
      (a5) edge node[below] {$a$} (a3)
      (a5) edge[loop below] node[below] {$a$} (a5)
      (a6) edge node[right] {$b$} (a3)
      (a6) edge node[below] {$b$} (a5)
      ;
    \end{tikzpicture}
\end{center}
\caption{Symbolic graph $\gr_0$ of index $0$ of the RBN of Example \ref{ex:rbn}.}
\label{fig:G0}
\end{figure}
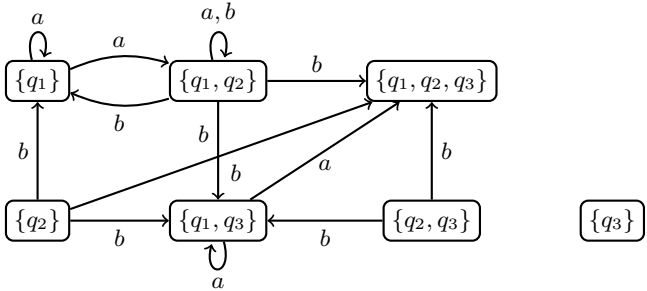

\begin{example}
The symbolic graph $\gr_0$ of index $0$ of the RBN of Example \ref{ex:rbn} is illustrated in Figure \ref{fig:G0}.
At this index, the graph only keeps track of a subset $S \subseteq Q$, and the edges correspond to broadcasts from $S$.
Consider the edges from $\set{q_1}$. 
The self-loop corresponds to a broadcast of $a$ that is not received.
The edge to $\set{q_1,q_2}$ corresponds to a broadcast of $a$ received by at least one process in $q_1$.
There is no edge from $\set{q_3}$ because there is no broadcast transition from $q_3$.
%There is no edge from $\set{q_2,q_3}$ to some $\theta=(S)$ because  a broadcast from $q_2$.
\end{example}

We then have the following lemma, which asserts that runs between two configurations in an RBN
 induce corresponding runs in the symbolic graph. The proof of the lemma is easily obtained
from the definition of the symbolic graph.

\begin{restatable}{lemma}{LemmaRuns}\label{lm:runs}
	Let $C,C'$ be two configurations of $\RBN$ such that $C \act{a} C'$.
	Then, for every $\theta$ such that $C \in \supp{\theta}$, 
	there exists $\theta'$ such that $C' \in \supp{\theta'}$
	and $\theta \actsymb{a} \theta'$.
\end{restatable}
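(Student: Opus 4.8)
The plan is to prove this by direct inspection of the symbolic graph's edge definition, showing that a concrete step $C \act{a} C'$ can be ``lifted'' to a symbolic edge by choosing the target symbolic configuration $\theta'$ appropriately. Fix $\theta = (v,S)$ with $C \in \supp{\theta}$. By definition of $\supp{\theta}$, we have $C(q) = v(q)$ for $q \notin S$ and $C(q) \ge v(q)$ for $q \in S$. The concrete step uses a multiset of transitions $\multiset{t,t_1,\dots,t_k}$ with $t = (p \act{!a} q)$ the broadcaster and each $t_i = (p_i \act{?a} q_i)$ a receiver. The first thing I would do is partition the participating agents (the broadcaster and each receiver) according to whether the state they occupy \emph{before} the step is tracked concretely (in $v$, i.e.\ genuinely ``used up'' from the concrete budget) or only abstractly (coming from the $S$-part). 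Since $C \ge v$ componentwise on $S$ and $C = v$ off $S$, every agent sitting in a state $q \in Q \setminus S$ must be accounted for by $v$, whereas agents in $S$-states may be drawn either from the concrete part $v$ or from the ``arbitrarily many'' abstract pool.

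First I would handle the broadcaster. There are two cases matching the two bullets of the edge definition. If $p \notin S$, or more generally if we decide to take the broadcaster from the concrete multiset, we are in the \emph{Broadcast from $v$} case; if $p \in S$ and we take it from the abstract pool, we are in the \emph{Broadcast from $S$} case. Either case is available, and I would simply commit to one (e.g.\ always take it from the concrete part if $p \in \supp{v}$, otherwise from $S$). Next, for each receiver transition $t_i$, I would again decide whether to charge it against $v$ or against the abstract part, consistently with how $C$ realizes $\supp{\theta}$. This induces a natural candidate for $v'$: set $v' = v - \vec{p} + \vec{q} - \sum_{i \in J}\vec{p_i} + \sum_{i \in J}\vec{q_i}$ in the first case (or the analogous expression without the broadcaster terms in the second), where $J$ indexes those receivers charged to the concrete part. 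The remaining receivers, charged to the abstract part, justify any change to $S$: I would define $S'$ so that a state $q_s$ is added to $S'$ (i.e.\ $q_s \in S' \setminus S$) only when some receiver moves a surplus $S$-agent into $q_s$, and a state is dropped (i.e.\ $q_s \in S \setminus S'$) only when all its surplus agents are receivers moving out — and then one checks these are exactly the side conditions required on $S,S'$ in the edge definition. Concretely the cleanest choice is $S' = \support{C'} \cap \big(S \cup \{\text{states newly populated by receivers from } S\}\big)$, adjusted so that $C' \in \supp{(v',S')}$.

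The bookkeeping in the previous paragraph is the main obstacle: one must verify \emph{simultaneously} that (a) the constructed $(v', S')$ is a legal target of an $a$-edge from $(v,S)$, meaning all the ``if $q_s \in S' \setminus S$ \dots'' and ``if $q_s \in S \setminus S'$ \dots'' conditions hold, and (b) the resulting $C'$ indeed lies in $\supp{(v',S')}$, i.e.\ $C'(q) = v'(q)$ off $S'$ and $C'(q) \ge v'(q)$ on $S'$. The subtle points are the edge cases where an $S$-state becomes empty, or where a concrete agent and an abstract agent occupy the same state, forcing a choice about which pool each receiver is drawn from; the partition must be chosen so that the $v'$-update and the $S \to S'$ update are mutually consistent. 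I expect this to reduce to a finite case analysis on each state $q_s$ comparing $C(q_s)$, $v(q_s)$, $C'(q_s)$, $v'(q_s)$ and membership in $S, S'$, each case being routine once the partition is fixed.

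Since the step relation $\act{a}$ is built from exactly the broadcast/receive transitions that the symbolic edge definition mirrors (with the concrete part simulating genuine agents and the abstract part simulating a growable support), no genuinely new idea beyond this careful charging argument is needed. I would therefore organize the proof as: fix the realization witnessing $C \in \supp{\theta}$; classify each participating transition as concrete or abstract; define $v'$ and $S'$ accordingly; then discharge (a) and (b) by the per-state case analysis. The lemma as stated only asserts \emph{existence} of some $\theta'$, so we have full freedom in the construction, which is what makes the argument go through smoothly.
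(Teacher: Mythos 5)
Your proposal follows essentially the same route as the paper's proof: classify the broadcaster and each receiver as drawn from the concrete part $v$ or from the abstract pool $S$ (forced to be concrete when the source state lies outside $S$), and use that classification to define $v'$ and $S'$ witnessing the symbolic edge. The paper's instantiation simply charges every receiver whose source state is in $S$ to the abstract pool and never removes states from $S$ (so $S'\supseteq S$), which makes the ``$q_s\in S\setminus S'$'' side conditions vacuous and sidesteps the delicate dropping rule you sketch for emptied $S$-states.
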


%\input{prop-symbolic-graph}
%\input{poststar-cube}
%\input{pspace-lemma}
%!TEX root = main.tex
\subsection{Properties of the symbolic graph}

In this subsection, we prove some properties of the symbolic graph (of any index $k$). 
The first two properties that we prove exhibit some structural properties on the paths of the symbolic graph.
The next two properties relate paths over the symbolic graph to runs over the configurations of the given RBN.
These four properties will ultimately lead us to prove our two main contributions in the next section.

\subsubsection*{First property: Monotonicity.}

Let $k \in \nn$ and let $\gr_k$ be the symbolic graph of index $k$ associated with $\RBN$. 
The first key property of $\gr_k$ is the following 
property, which we call \emph{monotonicity}. 

\begin{proposition}\label{prop-monotone}
	Let $\theta = (v,S)$ and $\theta' = (v', S')$ be symbolic configurations of $\gr_k$. Then the following are true:
	\begin{itemize}
		\item If $Z \subseteq S$ and $\theta \actsymb{a} \theta'$, then $(v, S) \actsymb{a} (v', Z \cup S')$. 
		\item If $Z \subseteq Q$ and $\theta \actsymb{a} \theta'$, then $(v, Z \cup S) \actsymb{a} (v', Z \cup S')$.
	\end{itemize}	
\end{proposition}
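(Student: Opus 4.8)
The plan is to prove a single, slightly more general edge-preservation claim and then obtain both bullets as instances of it. The claim is: fix a step $\theta = (v,S) \actsymb{a} \theta' = (v',S')$, and suppose we replace the abstract parts $S,S'$ by supersets $A^* \supseteq S$ and $B^* \supseteq S'$ satisfying $B^* \setminus A^* \subseteq S' \setminus S$ and $A^* \setminus B^* \subseteq S \setminus S'$. Then $(v, A^*) \actsymb{a} (v', B^*)$, witnessed by the very same broadcast transition $(q,!a,q')$ and the same receive multiset $\multiset{(p_1,?a,p_1'),\dots,(p_l,?a,p_l')}$ that witnessed the original step. The first observation is that the concrete part is unaffected: $v'$ is defined by the same formula in terms of $v$ and the reused transitions (in both the broadcast-from-$v$ and broadcast-from-$S$ cases), so the concrete-part requirement holds automatically.

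It then remains to re-verify the side conditions on the abstract parts, and here the point is that every newly added state of the modified step was already a newly added state of the original step, and likewise for deleted states. Concretely, for a state $q_s \in B^* \setminus A^*$ we have $q_s \in S' \setminus S$ by hypothesis, so the original step supplies some $q_s' \in S \subseteq A^*$ with $(q_s', ?a, q_s) \in \delta$; symmetrically, for $q_s \in A^* \setminus B^*$ we have $q_s \in S \setminus S'$, so the original step supplies some $q_s' \in S' \subseteq B^*$ with $(q_s, ?a, q_s') \in \delta$. This is exactly what the definition of an edge demands. In the broadcast-from-$S$ case the abstract-part conditions are only quantified over $q_s \in Q \setminus \{q,q'\}$, and one additionally needs $q \in A^*$ and $q' \in B^*$; both hold since $q \in S \subseteq A^*$ and $q' \in S' \subseteq B^*$, and the restriction to $Q \setminus \{q,q'\}$ causes no difficulty because the containments above hold for every relevant $q_s$. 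This split over the two kinds of edges is the only genuine bookkeeping, and I expect it to be the main (though routine) obstacle, precisely because of the excluded pair $\{q,q'\}$ in the broadcast-from-$S$ rule.

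Finally, I would instantiate the claim twice. For the first bullet, take $A^* = S$ and $B^* = Z \cup S'$ with $Z \subseteq S$: then $B^* \setminus A^* = (Z \cup S') \setminus S = S' \setminus S$ because $Z \subseteq S$, and $A^* \setminus B^* = S \setminus (Z \cup S') \subseteq S \setminus S'$, so both hypotheses hold. For the second bullet, take $A^* = Z \cup S$ and $B^* = Z \cup S'$ with $Z \subseteq Q$ arbitrary: then $B^* \setminus A^* = (S' \setminus S) \setminus Z \subseteq S' \setminus S$ and $A^* \setminus B^* = (S \setminus S') \setminus Z \subseteq S \setminus S'$, again satisfying both hypotheses. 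Applying the claim yields $(v,S) \actsymb{a} (v', Z \cup S')$ and $(v, Z \cup S) \actsymb{a} (v', Z \cup S')$ respectively, which are exactly the two asserted statements.
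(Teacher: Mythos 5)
Your proof is correct and takes essentially the same route as the paper, which simply asserts that both bullets ``follow immediately from the definition of $\actsymb{a}$''; you have filled in that verification explicitly. Packaging the two bullets as instances of one edge-preservation claim (with the conditions $B^* \setminus A^* \subseteq S' \setminus S$ and $A^* \setminus B^* \subseteq S \setminus S'$) is a clean way to organize the case analysis, and your handling of the excluded pair $\{q,q'\}$ in the broadcast-from-$S$ case is the right observation.
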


\begin{proof}
	The two points follow immediately from the definition of $\actsymb{a}$.
	%For the third point, note that if $(v,S) \actsymb{} (v_1,S_1) \actsymb{} (v_2,S_2) \dots (v_{m-1},S_{m-1}) \actsymb{} (v',S')$ then
	%by the first two points, we have $(v,S) \actsymb{} (v_1, Z \cup S_1) \actsymb{} (v_2, Z \cup S_2) \dots (v_{m-1}, Z \cup S_{m-1}) \actsymb{} (v', Z \cup S')$.
\end{proof} 

\subsubsection*{Second property: Normal Form.}

To state the second property, we first need a small definition.
\begin{definition}
	Let $(v_0, S_0) \actsymb{} \cdots \actsymb{} (v_m, S_m)$ a path in $\gr_k$.
	A pair of indices $0 \le i < j \le m$ is called a \emph{bad pair} if $(S_i \setminus S_{i+1}) \cap S_j \neq \emptyset$. 
	A path is said to be in \emph{normal form} if it contains no bad pairs, i.e., 
	for all $0 \le i < m$ and any $j > i$,
	$(S_i \setminus S_{i+1}) \cap S_j = \emptyset$. 
\end{definition}
Intuitively, a path is in normal form if during each step, the states that disappear from the
abstract part never reappear again.  The following lemma asserts that whenever there is a path between two symbolic configurations,
then there is a path between them that is in normal form. 

\begin{restatable}{lemma}{LemmaNF}\label{lemma-nf}
	Let $\theta, \theta'$ be symbolic configurations of $\gr_k$ such that there is a path between $\theta$ and $\theta'$ of length $m$.
	Then, there is a path in normal form between $\theta$ and $\theta'$ of length $m$.
\end{restatable}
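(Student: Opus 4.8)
The plan is to prove the Normal Form lemma by induction on the path length $m$, using the Monotonicity property (Proposition~\ref{prop-monotone}) as the key tool. The idea is that the essential obstruction to being in normal form is a \emph{bad pair}: some state $q$ that drops out of the abstract part at step $i \to i+1$ (so $q \in S_i \setminus S_{i+1}$) but reappears later (so $q \in S_j$ for some $j > i$). The strategy for eliminating such a reappearance is to simply force $q$ to \emph{stay} in the abstract part throughout the entire suffix from $i+1$ onwards, which is legitimate precisely because enlarging the abstract part is always permitted by monotonicity.

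Concretely, I would proceed as follows. For the base case $m = 0$ (or $m=1$) the path is trivially in normal form. For the inductive step, suppose we have a path $\theta = (v_0,S_0) \actsymb{} \cdots \actsymb{} (v_m,S_m) = \theta'$. By the induction hypothesis applied to the prefix ending at $(v_{m-1},S_{m-1})$, I may assume the prefix $(v_0,S_0) \actsymb{} \cdots \actsymb{} (v_{m-1},S_{m-1})$ is already in normal form; then reattach the final edge to get a path whose only possible bad pairs are those of the form $(i,m)$, i.e.\ some $q \in S_i \setminus S_{i+1}$ with $q \in S_m$. To repair such a violating state $q$, I add $q$ to every abstract part along the suffix: I replace $(v_j,S_j)$ by $(v_j, S_j \cup \{q\})$ for all $j \in \{i+1,\dots,m\}$. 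The second bullet of Proposition~\ref{prop-monotone} guarantees each replaced edge $(v_j, S_j \cup \{q\}) \actsymb{} (v_{j+1}, S_{j+1} \cup \{q\})$ still exists in $\gr_k$, while the edge out of $(v_i,S_i)$ is handled by the first bullet (taking $Z = \{q\}$, which is a subset of $S_i$ since $q \in S_i$). Crucially, the endpoints are preserved: $S_0$ is untouched, and $S_m \cup \{q\} = S_m$ because $q \in S_m$ already; the concrete parts $v_j$ are never altered, so $\theta$ and $\theta'$ remain the endpoints, and the length stays exactly $m$.

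The one subtlety to handle carefully is that repairing one bad state $q$ must not introduce new bad pairs. Since I only ever \emph{add} states to abstract parts and never remove them, this monotone surgery can only \emph{shrink} the set differences $S_i \setminus S_{i+1}$, so it cannot create fresh reappearances; I would iterate the repair over all offending states (there are finitely many, at most $|Q|$), arguing that each repair strictly reduces the number of bad pairs — or more cleanly, simultaneously add \emph{every} state that appears in $S_m$ and also disappears earlier. I expect the main obstacle to be the bookkeeping in this iteration: making precise that the repairs do not interfere and that after treating all such $q$ the resulting path genuinely contains no bad pair at all, including verifying the argument that a state disappearing and reappearing is the \emph{only} way normal form can fail given the inductive assumption on the prefix. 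This is exactly the reasoning sketched in the commented-out earlier version of the proof, and formalizing ``adding states never hurts'' via the second bullet of monotonicity is where the care is needed, though no hard computation is involved.
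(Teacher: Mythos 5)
Your proposal is correct and follows essentially the same route as the paper's proof: induction on $m$ with the prefix normalized by the induction hypothesis, so that all bad pairs have the form $(i,m)$, followed by using the two bullets of the Monotonicity proposition to absorb the offending states into every abstract part of the suffix, and an argument that this surgery strictly reduces the number of bad pairs. The only difference is cosmetic (you add one offending state $q$ at a time where the paper adds the whole set $Z=(S_w\setminus S_{w+1})\cap S_m$ at once), and the "care needed" step you flag is exactly the four-case analysis the paper carries out to show the repaired prefix stays in normal form.
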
	

\begin{proof}[Proof Sketch]
	Let $\theta = \theta_0 \actsymb{} \theta_1 \actsymb{} \theta_2 \actsymb{} \dots \theta_{m-1} \actsymb{} \theta_m = \theta'$ be the path between $\theta$ and $\theta'$.
	We proceed by induction on $m$. The claim is clearly true for $m = 0$. Suppose $m > 0$ and the claim is true for $m-1$. By induction hypothesis, we can assume 
	that the path $\theta_0 \actsymb{} \theta_1 \actsymb{} \dots \actsymb{} \theta_{m-1}$ is already in normal form.
	
	Let each $\theta_i = (v_i,S_i)$. Let $l$ be the number of bad pairs in the path between $\theta_0$ and $\theta_m$. If $l = 0$, then the path is already in normal form and we are done.
	Suppose $l > 0$ and let $(w,w')$ be a bad pair. Since the path between $\theta_0$ and $\theta_{m-1}$ is already in normal form, it has to be the case that $w' = m$.
	Hence, we have $Z := (S_{w} \setminus S_{w+1}) \cap S_m \neq \emptyset$.
	
	By Proposition~\ref{prop-monotone}, the following is a valid path: $(v_{w},S_{w}) \actsymb{} (v_{w+1},S_{w+1} \cup Z) \actsymb{} (v_{w+2},S_{w+2} \cup Z) \dots (v_{m-1}, S_{m-1} \cup Z)\actsymb{} (v_m, S_m \cup Z) = (v_m,S_m)$. 
	Let $\theta'_j := \theta_j$ if $j \le w$ and $(v_j,S_j \cup Z)$ otherwise. 
	Hence, we get a path $\theta'_0 \actsymb{} \theta'_1 \actsymb{} \dots \theta'_{m-1} \actsymb{} \theta'_m$.
	
	Let each $\theta'_e = (v_e',S_e')$ and let $0 \le i < j \le m-1$. By a case analysis on where $i$ and $j$ are relative to the index $w$, we can prove that $(S_i' \setminus S_{i+1}') \cap S_j' = \emptyset$.
	Having proved this, it is then clear by construction, 
	that this new path from $\theta'_0 := \theta_0$ to $\theta'_m := \theta_m$ has at most $l-1$ bad pairs only.
	Hence, we now have a path from $\theta_0$ to $\theta_m$ such that the prefix of length $m-1$ is in normal form and the number of bad pairs has been strictly reduced to $l-1$.
	Repeatedly applying this procedure leads to a path in normal form between $\theta_0$ and $\theta_m$.
\end{proof}

\subsubsection*{Third property: Refinement.}

Before we state the third property, we need a small definition. Recall that, given a symbolic configuration $\theta = (v,S)$, 
the set $\supp{\theta}$ denotes the set of configurations $C$ such that $C(q) = v(q)$ if $q \notin S$ and $C(q) \ge v(q)$ otherwise.
The following definition refines the set $\supp{\theta}$.
\begin{definition}
	Given a symbolic configuration $\theta = (v,S)$ and a number $N \in \nn$, let $\supp{\theta}_N$ denote the set of configurations
	$C$ such that $C(q) = v(q)$ if $q \notin S$ and $C(q) \ge v(q) + N$ otherwise. Note that $\supp{\theta} = \supp{\theta}_0$.
\end{definition}

This definition along with the above two properties now enable us to prove the  third property. It roughly states that if a symbolic configuration $\theta'$ can be reached from another symbolic configuration $\theta$, then 
there is a ``small'' $N$ such that any configuration in $\supp{\theta'}_N$ can be reached from some configuration in $\supp{\theta}$. 

\begin{restatable}{theorem}{TheoremUpperBoundN}\label{theorem-upperboundN}
	Let $\theta, \theta'$ be symbolic configurations of $\gr_k$ such that $\theta \actsymb{*} \theta'$.
	Then there exists $N \leq k \times (2k)^{|Q|} \times (|Q|+1)^{|Q|+1} + 1$ such that for all $C' \in \supp{\theta'}_N$, there exists $C \in \supp{\theta}$ such that $C \act{*} C'$.	
\end{restatable}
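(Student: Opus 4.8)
The plan is to combine the normal-form property (Lemma~\ref{lemma-nf}) with a backward simulation of the symbolic path in the concrete RBN. First I would apply Lemma~\ref{lemma-nf} to replace the given path by a normal-form path $\theta = \theta_0 \actsymb{} \theta_1 \actsymb{} \cdots \actsymb{} \theta_m = \theta'$, writing $\theta_i = (v_i,S_i)$ and letting $a_i$ be the label of the $i$-th edge. Given an arbitrary target $C' \in \supp{\theta'}_N$, the goal is to build a sequence $C_m = C', C_{m-1}, \dots, C_0$ with $C_i \act{a_i} C_{i+1}$ for every $i$ and $C_0 \in \supp{\theta}$. I would maintain the invariant $C_i \in \supp{\theta_i}_{N_i}$ for an auxiliary threshold sequence with $N_0 = 0$ and $N = N_m$, defined by a recurrence of the form $N_{i+1} = (|S_i \setminus S_{i+1}| + 1)(N_i + k)$; since $\supp{\theta}_0 = \supp{\theta}$, the endpoint condition $N_0 = 0$ exactly yields $C_0 \in \supp{\theta}$.

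The core of the proof is the single backward step: from $C_{i+1} \in \supp{\theta_{i+1}}_{N_{i+1}}$ produce a predecessor $C_i \in \supp{\theta_i}_{N_i}$ with $C_i \act{a_i} C_{i+1}$. Reading off from the definition of $\theta_i \actsymb{a_i} \theta_{i+1}$ the available broadcast and receive transitions, I would reverse the step in three parts: (i) for each state in $S_{i+1} \setminus S_i$ (which, going forward, is populated by some state of $S_i$ receiving $a_i$), push its surplus agents back into that source state; (ii) for each state $q_s \in S_i \setminus S_{i+1}$ (which, going forward, is emptied into some $q'_s \in S_{i+1}$ via a receive transition), pull enough agents out of $q'_s$ to restore $C_i(q_s) \ge v_i(q_s) + N_i$; and (iii) undo on the at most $k$ tracked concrete agents the receives and the single broadcast recorded in the change from $v_i$ to $v_{i+1}$. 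Step (ii) is where the quantitative content lies: reconstructing a disappearing state requires reclaiming up to $v_i(q_s) + N_i \le k + N_i$ agents, and when several disappearing states draw on the same $q'_s$ one must have stockpiled $(|S_i \setminus S_{i+1}|+1)(N_i + k)$ agents there, which is precisely what the recurrence and the invariant on $C_{i+1}$ guarantee. Agent conservation, together with Remark~\ref{rmk:symbgraph-reach} (equal concrete-part sizes along the path), ensures all the counts balance.

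It then remains to bound $N = N_m$. The normal-form property is exactly what keeps the blow-up under control: once a state leaves the abstract part it never returns, so each state lies in $S_i \setminus S_{i+1}$ for at most one index $i$, whence $\sum_i |S_i \setminus S_{i+1}| \le |Q|$ and at most $|Q|$ of the $m$ steps carry a multiplicative factor exceeding $1$. Unrolling $N_{i+1} = (|S_i \setminus S_{i+1}|+1)(N_i + k)$ and bounding each nontrivial factor by $|Q|+1$ therefore gives $N_m \le (|Q|+1)^{|Q|} \cdot k \cdot m$. Finally I would bound the path length $m$ by the number of symbolic configurations of $\gr_k$, namely $|\mathbb{M}_k(Q)| \cdot 2^{|Q|} \le (k+1)^{|Q|} \cdot 2^{|Q|}$; substituting this in and absorbing the small discrepancies into the spare factor $(|Q|+1)$ yields the stated bound $N \le k \cdot (2k)^{|Q|} \cdot (|Q|+1)^{|Q|+1} + 1$.

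The step I expect to be the main obstacle is making the single backward step fully rigorous. One has to simultaneously reverse the support changes in the abstract part and the concrete moves of the broadcast on the tracked agents while respecting global agent conservation, and then verify that the surplus prescribed by the recurrence always suffices to refill the disappearing states --- in particular in the awkward case where several disappearing states compete for agents sitting in a single target state. Keeping this bookkeeping of ``which agent moves where'' consistent is the delicate point; by contrast, the normal-form reduction is already established in Lemma~\ref{lemma-nf} and the final estimate of $N$ is routine counting.
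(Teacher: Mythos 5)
Your proof has the same skeleton as the paper's: reduce to a normal-form path via Lemma~\ref{lemma-nf}, run the threshold recurrence backwards along the path, observe that normal form forces $\sum_i |S_i\setminus S_{i+1}|\le |Q|$ so that at most $|Q|$ steps carry a multiplicative factor, and bound the path length by the number of symbolic configurations of $\gr_k$. The one substantive difference is the middle step: the paper does not prove the backward one-step simulation at all --- it imports it wholesale as a quoted fact from Lemma 5.3 of~\cite{Delzanno12} (Lemma 6 of the long version), with the recurrence $N_i=(N_{i-1}+1)(|S_{i-1}\setminus S_i|+1)$ and invariant $C_i\in\supp{\theta_i}_{N_i+1}$, and its own contribution is purely the quantitative analysis of that recurrence under normal form. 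You instead sketch the simulation inline, with the cosmetically different recurrence $N_{i+1}=(|S_i\setminus S_{i+1}|+1)(N_i+k)$ (your $+k$ absorbs the concrete part $v_i(q_s)\le k$ that the cited lemma handles separately); your construction --- one concrete broadcast step per symbolic edge, with the multiset of receives chosen to empty the disappearing states, populate the appearing ones, and adjust the tracked agents --- is consistent with what that lemma establishes, and the competing-sources issue you flag is exactly what the factor $(|S_i\setminus S_{i+1}|+1)$ is there for. So this is the same route, with the imported lemma re-derived rather than cited; the final arithmetic differences (your $(k+1)^{|Q|}$ versus the paper's $k\cdot k^{|Q|}$ bound on $|\mathbb{M}_k(Q)|$) are at the level of slack both arguments already tolerate.
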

 
\begin{proof}[Proof Sketch]
	Suppose $\theta \actsymb{*} \theta'$. If the length of the path is 0, then there is nothing to prove.
	Hence, we restrict ourselves to the case when the length of the path is bigger than 0.
	By Lemma~\ref{lemma-nf}, there is a path in normal from from $\theta$ to $\theta'$ (say)
	$\theta = \theta_0 \actsymb{} \theta_1 \actsymb{} \theta_2 \dots \theta_{m-1} \actsymb{} \theta_m = \theta'$ with each
	$\theta_i := (v_i,S_i)$. 
	
	Let $N_0 = 0$ and let $N_i = (N_{i-1}+1) \cdot (|S_{i-1} \setminus S_i|+1)$ for every $1 \le i \le m$. 
	In Lemma 5.3 of~\cite{Delzanno12} (more precisely in its proof, in Lemma 6 of the long version \cite{Delzanno12Long}), the following fact has been proved: 
	\begin{quote}
		For every $1 \le i \le m$ and for every $C' \in \supp{\theta_i}_{N_i+1}$, there exists $C \in \supp{\theta_{i-1}}_{N_{i-1}+1}$ such that
		$C \act{*} C'$.
	\end{quote}
	
	This immediately proves that for all $C' \in \supp{\theta'}_{N_m + 1}$, there exists $C \in \supp{\theta}$ such that $C \act{*} C'$. If we prove $N_m \le k \times (2k)^{|Q|} \times (|Q|+1)^{|Q|+1}$, then the proof of the theorem will be complete.

	Notice that if $(v,\emptyset) \actsymb{} (v',S')$ is an edge in $\gr_k$ then $S' = \emptyset$.
	This fact, along with the definition of a path in normal form, allows us to easily conclude 
	that the number of indices $i$ such that $|S_{i-1} \setminus S_i| > 0$
	is at most $|Q|$. 
	It then follows that except for at most $|Q|$ indices, each index $N_i$ is obtained from $N_{i-1}$ by simply adding 1
	and in the remaining indices, $N_i$ is obtained from $N_{i-1}$ by adding 1 and then multiplying by a number which is at most $|Q|+1$. 
	Using this, we can deduce that the maximum value for $N_m$ is at most 
	$(m-|Q|+1)|Q|(|Q|+1)^{|Q|}$. 
	Since $m$ is itself the length of the path between $\theta_0$ and $\theta_m$,
	$m$ is upper bounded by the number of symbolic configurations in $\gr_k$ which is at most $k \times k^{|Q|} \times 2^{|Q|}$.
	Overall we get that $N_m \le k \times (2k)^{|Q|} \times (|Q|+1)^{|Q|+1}$.
\end{proof}

\begin{remark}
	A similar result was proved in Lemma 5.3 of~\cite{Delzanno12}, but there it was just stated that
	there exists an $N$ satisfying this property. Moreover from the proof of that lemma, 
	only a doubly exponential bound on $N$ could be inferred.
\end{remark}

\subsubsection*{Fourth property: Compatibility.}

To describe the fourth property, we need the following notion of order on configurations, \emph{relative} to a given symbolic configuration.

\begin{definition}
	\label{def:symbconfig-order}
	Let $\theta = (v,S)$ be a symbolic configuration, and let $C, C'$ be two configurations of $\RBN$.
	We define an order $\preceq_{\theta}$ such that $C \preceq_{\theta} C'$ if and only if $C ,C' \in \doublec{ \theta}$, and $\forall q \in S$, $C(q) \leq C'(q)$.
\end{definition}

This definition enables us to state our next property, which we dub \emph{compatibility}. It intuitively says that
the order that we have defined is, in some sense, compatible with the edges of the symbolic configurations.

\begin{lemma}
	\label{lemma-poststar}
	Let $\theta $ be a symbolic configuration of $\gr_k$, 
	and let $C, C'$ be two configurations of $\RBN$.
	If $C \in \doublec{\theta}$ and $C \act{*} C'$,
	then there exists a symbolic configuration $\theta'$ such that 
	1) $C' \in \doublec{\theta'}$, 2) $\theta \actsymb{*} \theta'$ and 3) 
	for all $C'_1$ such  that $C'_1 \succeq_{\theta'} C'$,  
	there exists $C_1 \in \doublec{\theta}$ such that $C_1 \act{*} C'_1$.
\end{lemma}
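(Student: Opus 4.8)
The plan is to induct on the length $m$ of the run $C \act{*} C'$. For the base case $m=0$ we have $C=C'$ and we simply take $\theta' := \theta$: properties (1) and (2) are immediate, and for (3), any $C_1' \succeq_{\theta} C'$ already satisfies $C_1' \in \doublec{\theta}$ by definition of $\preceq_{\theta}$, so $C_1 := C_1'$ works. For the inductive step, write the run as $C \act{*} C'' \act{a} C'$ where the prefix has length $m-1$. Applying the induction hypothesis to $C \act{*} C''$ yields a symbolic configuration $\theta'' = (v'',S'')$ satisfying (1)--(3) for $C''$, and applying Lemma~\ref{lm:runs} to the last step $C'' \act{a} C'$ yields a $\theta' = (v',S')$ with $C' \in \doublec{\theta'}$ and $\theta'' \actsymb{a} \theta'$. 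Chaining $\theta \actsymb{*} \theta'' \actsymb{a} \theta'$ gives (1) and (2); the entire content lies in re-establishing (3).

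To prove (3) I would isolate a one-step pull-back claim: for every $C_1' \succeq_{\theta'} C'$ there is a $C_1'' \succeq_{\theta''} C''$ with $C_1'' \act{*} C_1'$. Granting this, (3) follows by applying the induction hypothesis to $C_1''$ (producing $C_1 \in \doublec{\theta}$ with $C_1 \act{*} C_1''$) and concatenating $C_1 \act{*} C_1'' \act{*} C_1'$. To build $C_1''$, observe first that $D := C_1' - C'$ is a nonnegative multiset supported on $S'$ (both configurations agree off $S'$ since they lie in $\doublec{\theta'}$). I would then reproduce the step $C'' \act{a} C'$ with the same broadcaster and receivers, and account for each surplus agent of $D$ sitting at a state $q^* \in S'$ by adding exactly one extra agent to $C''$ in a state of $S''$ and letting it land on $q^*$: (i) if $q^* \in S''$, add the agent directly at $q^*$ and let it abstain from the broadcast, so it stays put; (ii) if $q^* \in S' \setminus S''$ is an added state that is \emph{not} the exempted broadcaster target, then the edge definition of $\theta'' \actsymb{a} \theta'$ guarantees a receive transition $(r,?a,q^*)$ with $r \in S''$, so add the agent at $r$ and let it receive the main broadcast and move to $q^*$; (iii) if $q^* \in S' \setminus S''$ is exactly the broadcaster's target in a \emph{broadcast from $S$} edge (the one state the edge definition exempts from the receive-transition requirement), add the agent at the broadcaster's source state, which lies in $S''$, and after the main step let it perform an additional receiver-free $a$-broadcast that carries it to $q^*$. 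In every case the added agent lives in an $S''$-state, so $C_1'' \succeq_{\theta''} C''$, and the constructed run ends in exactly $C' + D = C_1'$.

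The routine verifications are that the surplus agents never disturb the reproduced step — extra receivers and abstaining agents are always permissible, and a broadcast may have zero receivers, so the follow-up broadcasts of case (iii) affect only the source and target states — and that the net increments land precisely on the states prescribed by $D$. The one genuinely delicate point, and the place where a naive one-step argument breaks, is case (iii): because a \emph{broadcast from $S$} edge does not force an incoming receive transition into the broadcaster's target, surplus agents there cannot in general be created by a single reception and must instead be manufactured by repeating the broadcast. This is exactly why the conclusion is stated with $\act{*}$ rather than with a single step, and it is the step I expect to require the most care to make airtight.
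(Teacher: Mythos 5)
Your proof is correct and follows the same overall strategy as the paper's: induction on the length of the run, with the inductive step reduced to a one-step pull-back in which each surplus agent of $C_1' - C'$ is planted at a ``predecessor'' state of the abstract part of $\theta''$ and then transported to its destination. The paper realizes this by defining a map $\emph{pred}$ sending each $q^* \in S'$ either to itself (if $q^* \in S''$) or to some $r \in S''$ with a receive transition $(r,?a,q^*)$, and then reproducing the step $C'' \act{a} C'$ \emph{once}, augmented with $n_{q^*}$ extra copies of each such receive transition --- so the pull-back is a single step rather than your $\act{*}$. Where you genuinely diverge is your case (iii): you correctly observe that for a \emph{broadcast from $S$} edge the target $q'$ of the broadcasting transition is exempted from the requirement that new abstract states be entered via a receive transition, so $\emph{pred}$ need not be definable there; you repair this by parking the surplus agents at the broadcaster's source (which lies in $S''$) and emitting additional receiver-free $a$-broadcasts after the main step. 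The paper's proof asserts without qualification that a suitable receive transition into every $q^* \in S_m \setminus S_{m-1}$ exists ``by definition of edges in the symbolic graph,'' which is not justified for precisely this exempted state; your treatment closes that small gap, at the cost of needing several steps instead of one (which the statement's $\act{*}$ accommodates). All other verifications in your sketch (extra receivers and abstaining agents do not disturb the reproduced step, zero-receiver broadcasts are legal, all planted agents sit in $S''$ so that $C_1'' \succeq_{\theta''} C''$) match the paper's argument.
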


\begin{proof}
	Let $\theta$ be a symbolic configuration and $C, C'$ be configurations such that $C\in \doublec{\theta}$ and $C\act{*}C'$.
	Let $C = C_0 \act{} \cdots\act{} C_{m-1} \act{} C_m = C'$ denote the run between $C$ and $C'$.
	We prove the property by induction on $m$. 
	For $m =0$, we have $C= C'$. The property is easily seen to hold with $\theta' = \theta$.
	
	Suppose now that $m \geq 1$, and that the property holds for all $n \leq m$. 
	By induction hypothesis, for the configuration $C_{m-1}$, there exists a symbolic configuration $\theta_{m-1}$ satisfying the  property, in particular $\theta \actsymb{*} \theta_{m-1}$.
	Since $C_{m-1} \act{a} C_m$ for some $a \in \Sigma$, by Lemma \ref{lm:runs},  
	there exists a symbolic configuration $\theta_m$ such that $C_m \in \doublec{\theta_m}$, and $\theta_{m-1} \actsymb{a} \theta_m$.
	Using $\theta \actsymb{*} \theta_{m-1}$, we obtain that $\theta \actsymb{*} \theta_m$.
	
	Let $\theta_{m-1} =(v_{m-1},S_{m-1})$ and $\theta_m = (v_m, S_m)$.
	Let $C'_m \in \doublec{\theta_m}$ be such that $C'_m \succeq_{\theta_m} C_m$. 
	We will construct a configuration $C'_{m-1}\in \doublec{\theta_{m-1}}$ such that $C'_{m-1} \succeq_{\theta_{m-1}} C_{m-1}$ and $C'_{m-1} \act{*} C'_m$. If we construct such a configuration, then by induction
	hypothesis, there is a $C_1 \in \doublec{\theta}$ such that $C_1 \act{*} C'_{m-1} \act{*} C'_m$, which will conclude the proof.

	Let $C'_{m-1}(q) =   C_{m-1}(q)$	for all $q \not \in S_{m-1}$.
	To define $C'_{m-1}$ on $S_{m-1}$, 
	we first define a  mapping $\emph{pred}$ from states in $S_m$ to states of $ S_{m-1} \cup  \overline{S_{m-1}}=Q$ as follows. Given $q'\in S_m$:
	\begin{itemize}
		\item If $q' \in S_{m-1}$, $\emph{pred}(q') = q'$;
		\item If $q' \not \in S_{m-1}$, by definition of edges in the symbolic graph, there exists $q \in S_{m-1}$ such that $(q,?a,q')$ is a transition. 
		Then $\emph{pred}(q') = q$ for one (arbitrary  but fixed) such $q$.
	\end{itemize} 
	By definition, $ C'_m(q)= C_m(q) $ for all $q \not \in S_m$. 
	For all $q \in S_m$, let $n_q =   C'_m(q) -   C_m(q)$. 
	Intuitively, we want to place these $n_q$ processes in the right places of $C'_{m-1}$ so that $C'_{m-1} \act{} C'_m$. 	
	For all $q \in S_{m-1}$, let $  C'_{m-1}(q) =   C_{m-1}(q) + \sum_{q' \in S_m,\emph{pred}(q') = q} n_{q'}$. 
	By definition, $C'_{m-1} \succeq_{\theta_{m-1}} C_{m-1}$.
	So all that remains is to prove  that $C'_{m-1} \act{*} C'_m$.
	
	Let $C_{m-1} \trans{t+t_1,\dots,t_n} C_m$ where $t = (p,!a,p')$ and
	each $t_i = (p_i,?a,p_i')$. If we let $S_m \setminus S_{m-1} = \{q_1',\dots,q_w'\}$, then
	by definition there is a transition $t_i' := (pred(q_i'),?a,q_i')$ for each $i$.
	Additionally,  $C'_{m-1}(pred(q_i')) \ge C_{m-1}(pred(q_i')) + n_{q_i'} $. 
%I suppose this is what you meant; you had written $C_{m-1} \ge n_{q_i'} \cdot pred(q_i')$. 
	This allows us to
	do $C'_{m-1} \trans{t+t_1,\dots,t_n,n_{q_1'} \cdot t_1', n_{q_2'} \cdot t_2', \dots, 
	n_{q_w'} \cdot t_w'} C'_{m}$, which concludes the proof.

	%Let $(p, !a, p')$ be the broadcast transition between $C_{m-1}$ and $C_m$.
	%From $C'_{m-1}$, apply the broadcast transition $(p, !a, p')$. There are two cases. First, if $  C'_{m-1}(p) =   C_{m-1}(p)$, we consider the following transition in the RBN: all processes taking a reception transition between $C_{m-1}$ and $C_m$ do the same, and, in addition, for all $q$ such that $ C'_{m-1}(q) >   C_{m-1}(q)$, for all $q' \in S_m$ such that $\emph{pred}(q') = q$, $n_{q'}$ processes take the transition $(q, ?a, q')$. This way, we reach $C'_m$. Now, if $  C'_{m-1}(p) >   C_{m-1}(p)$, we do the same to reach a configuration that we name $C''$ such that $\forall q \in Q \setminus \{p, p'\}$, $  C''(q) =   C'_m(q)$ but $  C''(p) =   C'_m(p) + n_{p'}$ and $  C''(p') =   C'_m(p') - n_{p'} $. We can apply the broadcast transition $(p, !a ,p')$ $n_{p'}$ times with no  receives, and  reach $C'_m$. Hence, $C_1 \act{*} C'_m$ which concludes the proof.
%	\bala{Check if what I wrote is correct}\chana{checked and changed some}
\end{proof}

%!TEX root = main.tex
\newcommand{\symbcube	}{\Delta_\cube}
\newcommand{\cubemin}{\cube_C^\theta}
%%%%%%%%%%%%%%%%%%%%%%%%%%%%
%\subsection{Parameterized Problems over Counting Sets are in PSPACE}
\section{The \PSPACE{} Theorem}

In this section, we prove our two main contributions. 
First, we show that given a cube $\cube$, $\poststar(\cube)$ is a counting set of bounded size. 
Using this, we show our main result:
any boolean combination of atoms
can be evaluated in \PSPACE{},
 where  an atom  is a counting set or the reachability set of a counting set.
 % (i.e. $\poststar$ or $\prestar$). 
%we can evaluate predicates with boolean and reachability operators over counting sets in polynomial space. 
We call this the \emph{\PSPACE \ Theorem}. 
The intuition behind the \PSPACE{} Theorem is that 
%due to our first result, 
the norms of the counting sets obtained
by such combinations are ``small'', and so we only need to examine small configurations to verify them, thus yielding a \PSPACE \ algorithm for checking correctness.
In particular, the \PSPACE{} Theorem will show that the cube-reachability problem is in \PSPACE.
We fix an arbitrary RBN $\RBN=(Q, \Sigma, \delta)$ for the rest of the section.

We start by drawing links between cubes and symbolic configurations.
\begin{itemize}
\item Given a symbolic configuration $\theta=(v,S)$,
we let $\cube_\theta$ be the cube $(L,U)$ 
where $L=v$, and $U(q)=v(q)$ if $q \notin S$ and $U(q)=\infty$ otherwise.
Then $\cube_\theta = \doublec{\theta}$.
\item Given a cube $\cube=(L,U)$,
we define $\symbcube$ to be the set of symbolic configurations
$\theta=(v,S)$ with $S = \set{q \ | \ U(q)= \infty}$ and $L(q) \le v(q) \le U(q)$ if $q \notin S$ and $v(q)=L(q)$ otherwise. 
Then $\doublec{\symbcube} = \cube$.
\end{itemize}
%\begin{definition}
%Let $\theta=(v,S)$ be a symbolic configuration. 
%We define $\cube_\theta$ to be the cube $(L,U)$ 
%such that $L=v$, and $U(q)=v(q)$ if $q \notin S$ and $U(q)=\infty$ otherwise.
%This set is such that $\cube_\theta = \doublec{\theta}$.
%
%Let $\cube=(L,U)$ be a cube.
%We define $\symbcube$ to be the set of symbolic configurations
%$\theta=(v,S)$ with $S = \set{q | U(q)= \infty}$ and with $L(q) \le v(q) \le U(q)$ if $q \notin S$ and $v(q)=L(q)$ otherwise. 
%This set is such that $\doublec{\symbcube} = \cube$.
%\end{definition}

%\begin{remark}\label{rmk:index}
%Given a cube $\cube$, the sets $\symbcube$ and $\poststar(\symbcube)$
%are included in the symbolic graph of index $2 \norm{\cube}$.
%\end{remark}

Notice that the set $\symbcube$ is included in the symbolic graph of index $2 \norm{\cube}$.
Indeed, if $\cube = (L,U)$ and $(v,S) \in \symbcube$, then $|v| \le |L| + |U_f|$ where
$U_f(q)=0$ if $U(q)=\infty$ and $U_f(q)=U(q)$ otherwise. Since 
$\norm{\cube} = \max(|L|,|U_f|)$, we have the desired result.
By Remark \ref{rmk:symbgraph-reach}, 
we know that symbolic configurations in the graph of index $2\norm{\cube}$ 
can only reach symbolic configurations which are also in  the graph of index $2\norm{\cube}$.

\begin{lemma}\label{lm:index}
Given a cube $\cube$, the sets $\symbcube$ and $\poststar(\symbcube)$
are included in the symbolic graph of index $2 \norm{\cube}$.
\end{lemma}

There are only a finite number of symbolic configurations in the graph of a given index.
Therefore $\poststar(\symbcube)$ is a finite set of symbolic configurations $\theta$.
It follows that $\doublec{\poststar(\symbcube)}$ is the finite union of the cubes $\cube_\theta$, and thus a counting set.

Unfortunately, it is in general not the case that $\poststar(\cube)= \doublec{\poststar(\symbcube)}$, which would close our argument.
However, we will show that for each symbolic configuration $\theta$ in $\poststar(\symbcube)$,
there is a counting set $\cSet_\theta \subseteq \doublec{\theta}$ 
such that the finite union of these counting sets is equal to $\poststar(\cube)$.
This will then show our first important result, namely that the reachability set of a counting set
is also a counting set with ``small'' norm.
%To define these counting sets, we first give ourselves an order on symbolic configurations and then prove a ``compatibility" lemma. 

\begin{theorem}
\label{thm:poststar}
Let $\cube$ be a cube. \chana{how detailed do we want this bound?}
Then $\poststar(\cube)$ is a counting set and
$$
\norm{\poststar(\cube)} \in O((\norm{\cube} \cdot |Q|)^{|Q|+2})
$$
The same holds for $\prestar$ by using the given RBN with reversed transitions.
\end{theorem}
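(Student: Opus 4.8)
The plan is to reduce the computation of $\poststar(\cube)$ to an analysis of the finite symbolic graph $\gr_k$ with $k = 2\norm{\cube}$. By Lemma~\ref{lm:index}, both $\symbcube$ and $\poststar(\symbcube)$ live inside $\gr_k$, and since this graph has only finitely many nodes, $\poststar(\symbcube)$ is a finite set of symbolic configurations. First I would establish coverage: every $C' \in \poststar(\cube)$ lies in $\doublec{\theta}$ for some $\theta \in \poststar(\symbcube)$. Indeed, if $C \act{*} C'$ with $C \in \cube = \doublec{\symbcube}$, then $C \in \doublec{\theta_0}$ for some $\theta_0 \in \symbcube$, and Lemma~\ref{lemma-poststar} (compatibility) yields a $\theta$ with $C' \in \doublec{\theta}$ and $\theta_0 \actsymb{*} \theta$, so $\theta \in \poststar(\symbcube)$. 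Hence $\poststar(\cube) \subseteq \bigcup_{\theta \in \poststar(\symbcube)} \doublec{\theta}$.

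The core of the argument is to attach to each $\theta \in \poststar(\symbcube)$ a counting set $\cSet_\theta \subseteq \doublec{\theta} \cap \poststar(\cube)$ whose union is exactly $\poststar(\cube)$. The right definition uses compatibility in full: I would set $\cSet_\theta$ to be the set of $C' \in \doublec{\theta} \cap \poststar(\cube)$ whose entire $\preceq_{\theta}$-cone $\{C_1 : C_1 \succeq_{\theta} C'\}$ is contained in $\poststar(\cube)$. Soundness ($\cSet_\theta \subseteq \poststar(\cube)$) is immediate, and coverage holds because Lemma~\ref{lemma-poststar} guarantees, for the $\theta$ it produces, that the cone above $C'$ is reachable, so $C' \in \cSet_\theta$. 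Crucially, $\cSet_\theta$ is upward-closed for $\preceq_{\theta}$: raising the coordinates in $S$ keeps us inside $\poststar(\cube)$ and only shrinks the cone. Since $\preceq_{\theta}$ fixes the coordinates outside $S$ (to the value $v(q)$) and is the product order on the coordinates in $S$, Dickson's lemma shows that $\cSet_\theta$ has finitely many $\preceq_{\theta}$-minimal elements, and $\cSet_\theta$ is the finite union, over these minimal elements $m$, of the cubes $(m, U_m)$ with $U_m(q) = v(q)$ off $S$ and $U_m(q) = \infty$ on $S$. In particular $\cSet_\theta$ is a counting set, and so is $\poststar(\cube) = \bigcup_\theta \cSet_\theta$.

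It remains to bound the norm of these cubes, which amounts to bounding the minimal configurations $m$. The off-$S$ coordinates of $m$ equal $v(q)$, and since $\theta$ belongs to $\gr_k$ their sum is at most $k = 2\norm{\cube}$. For the coordinates in $S$, I would invoke Theorem~\ref{theorem-upperboundN}: there is a threshold $N \le k \cdot (2k)^{|Q|} \cdot (|Q|+1)^{|Q|+1} + 1$ such that every configuration of $\supp{\theta}_N$ is reachable, so $\supp{\theta}_N \subseteq \cSet_\theta$, and one argues that no $\preceq_{\theta}$-minimal element can have an $S$-coordinate exceeding $v(q) + N$. This is the main obstacle: the subtle point is that a minimal element could a priori be large in one unbounded direction while sitting at the floor in another, so that the uniform threshold does not immediately apply; resolving it requires the finer, per-coordinate reachability statement underlying Theorem~\ref{theorem-upperboundN} (Delzanno's Lemma~6), ensuring that a large coordinate can always be lowered back to the threshold while staying reachable. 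Granting this, every coordinate of $m$ is at most $\max_q v(q) + N$, so each cube has norm in $O(|Q|\cdot(k+N))$.

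Putting the pieces together, $\norm{\cSet_\theta} \in O(|Q|\cdot(k+N))$ for every $\theta$, and by Proposition~\ref{prop:oponconf} the norm of the union $\poststar(\cube) = \bigcup_\theta \cSet_\theta$ is the maximum of these, hence also in $O(|Q|\cdot(k+N))$. Substituting $k = 2\norm{\cube}$ and the bound on $N$ and simplifying yields the stated bound $\norm{\poststar(\cube)} \in O((\norm{\cube}\cdot|Q|)^{|Q|+2})$. Finally, for the claim about $\prestar$, I would observe that reversing every transition of $\RBN$ (replacing $p \act{!a} q$ and the matching receptions by the symmetric rule) turns predecessor reachability into successor reachability, so that $\prestar$ in $\RBN$ equals $\poststar$ in the reversed network; the bound then follows verbatim.
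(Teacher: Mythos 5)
Your proposal follows the same route as the paper's proof: pass to the symbolic graph of index $2\norm{\cube}$, decompose $\poststar(\cube)$ into per-$\theta$ pieces represented by their $\preceq_{\theta}$-minimal elements, bound those minimal elements via the threshold $N$ of Theorem~\ref{theorem-upperboundN}, and handle $\prestar$ by reversing the transitions. Your variant of the decomposition --- taking $\cSet_\theta$ to be the configurations whose entire $\preceq_{\theta}$-cone lies in $\poststar(\cube)$ rather than all of $\doublec{\theta}\cap\poststar(\cube)$ --- is a sound refinement of the paper's $\min(\theta,\cube)$: it makes $\cSet_\theta\subseteq\poststar(\cube)$ tautological while coverage still follows from Lemma~\ref{lemma-poststar}, and the explicit appeal to Dickson's lemma is harmless.

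The one place where you stop short is exactly the step you flag yourself: showing that a $\preceq_{\theta}$-minimal element $m$ of $\cSet_\theta$ satisfies $m(q)\le v(q)+N$ for $q\in S$. You correctly observe that upward-closure together with $\doublec{\theta}_N\subseteq\cSet_\theta$ does \emph{not} by itself bound the minimal elements (an upward-closed subset of $\N^2$ containing $[N,\infty)^2$ can have a minimal element such as $(0,100N)$), and you defer the resolution to an unstated ``per-coordinate'' strengthening of Theorem~\ref{theorem-upperboundN}. That deferral is a genuine gap: the property you would need is never formulated, let alone proved. For comparison, the paper closes this step by truncation: given a minimal $C$ with some coordinate exceeding the threshold, it sets $C_N(q)=\min(C(q),v(q)+N)$, asserts $C_N\in\doublec{\theta}_N\subseteq\poststar(\cube)$, and contradicts minimality via $C_N\preceq_{\theta}C$. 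Note, however, that $C_N\in\doublec{\theta}_N$ requires $C(q)\ge v(q)+N$ for \emph{every} $q\in S$, which does not follow from the assumption that \emph{some} coordinate exceeds $v(q)+N$; so the paper's truncation, read literally, assumes away precisely the mixed case (one coordinate huge, another at the floor) that you identify as the obstacle. Your diagnosis of where the difficulty sits is therefore accurate, but to have a complete proof you must actually supply the missing lowering argument --- for instance by proving a coordinate-wise reachability threshold, or by restructuring the minimality argument so that truncation is only ever applied to configurations already above the threshold on all of $S$. As written, your sketch establishes that $\poststar(\cube)$ is a counting set but not the claimed norm bound.
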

\begin{proof}
We start by defining a counting set $\mathcal{M}$ of configurations, 
which we will then prove to be equal to $\poststar(\cube)$.\bala{Check my changes}\chana{done, some typos corrected}
Given a symbolic configuration $\theta$ of $\poststar(\symbcube)$,
we define the set $\min(\theta,\cube)$ to be the set of 
configurations $C \in \doublec{\theta}$ such that $C$ is minimal 
for the order $\preceq_{\theta}$ over the configurations of $\poststar(\cube)$, i.e.
$$
\min(\theta,\cube) = \min_{\preceq_{\theta}} \set{C \in \doublec{\theta} \ | \ C \in \poststar(\cube)}
$$
We can now define $\mathcal{M}$ to be the following set
$$
\mathcal{M}= \bigcup_{\theta \in \poststar(\symbcube)} \ \bigcup_{C \in \min(\theta,\cube)} \cubemin,
$$
where $\cubemin$ is the cube $\cube_{(C,S)}$ for $S$ such that $\theta=(v,S)$.
Since $\mathcal{M}$ is  a finite union of cubes, it is a counting set.

We show that $\poststar(\cube) \subseteq \mathcal{M}$.
Let $C \in \poststar(\cube)$.
There exists $C_0 \in \cube$ such that $C_0 \act{*} C$, 
and there exists $\theta_0 \in \symbcube$ such that $C_0 \in \doublec{\theta_0}$.
Applying Lemma \ref{lm:runs},
we obtain the existence of $\theta \in \poststar(\theta_0)\subseteq \poststar(\symbcube)$ such that $C \in \doublec{\theta}$.
Now, there exists a configuration $C' \in \min(\theta,\cube)$ 
such that $C' \preceq_{\theta}C$.
By definition of $\cube_{C'}^\theta$, $C$ is in $\cube_{C'}^\theta$ and thus in $\mathcal{M}$.

Now we show that $\mathcal{M} \subseteq \poststar(\cube)$. 
Let $C \in \mathcal{M}$. By definition, there must be a 
symbolic configuration $\theta \in \poststar(\symbcube)$ and a configuration
$C' \in \poststar(\cube)$ such that $C' \preceq_{\theta}C$.
By the Compatibility Lemma (Lemma \ref{lemma-poststar}), $C$ is in $\poststar(\cube)$ as well.

All that remains is to bound the norm of $\mathcal{M}$. To do this, let 
$\theta = (v,S) \in \poststar(\symbcube)$ and let $C \in \min(\theta,\cube)$. 
If we bound the norm of $\cubemin$ by the desired quantity, then the proof will be complete.
Noticing that $\norm{\cubemin} = |C|$, it suffices to bound $|C|$ by
the desired quantity, which is what we shall do now.

By Theorem \ref{theorem-upperboundN} and Lemma \ref{lm:index},
there exists an $N\le 2\norm{\cube} \times (4\norm{\cube})^{|Q|} \times (|Q|+1)^{|Q|+1}$ such that
$\doublec{\poststar(\symbcube)}_N \subseteq \poststar(\doublec{\symbcube}) = \poststar(\cube)$.
By definition of $C$, there must be a smallest $N'$ such that $C(q) \le v(q) + N'$ for every state  $q$. 
If $N' > N$, then let $C_N$ be the configuration given by
$C_N(q) = \min(C(q),v(q)+N)$. 
We get that $C_N \in \doublec{\theta}_N \subseteq
\doublec{\poststar(\symbcube)}_N \subseteq \poststar(\cube)$, and so $C_N \preceq_{\theta}
C$ and $C_N \in \poststar(\cube)$, which is a contradiction to the
minimality of $C$. Hence $N' \le N$ and so $|C| \le |v| + |Q| \cdot N$.
Since $\theta = (v,S)$ is in $\poststar(\symbcube)$, by Lemma~\ref{lm:index}, we have
that $|v| \le 2 \norm{\cube}$. Substituting the upper bounds for $|v|$ and $N$ in the
inequality $|C| \le |v| + |Q| \cdot N$ then gives the required upper bound for $|C|$,
thereby finishing the proof.

%Since$\doublec{\poststar(\symbcube)}_N \subseteq \poststar(\cube)$,
%we have that 
%and by definition of $\min(\theta,\cube)$, 
% every $C \in \min(\theta,\cube)$ is
%such that $C \preceq_\theta C_N$. 
%Thus we have that $C(q) = v(q)$ if $q \notin S$ and $C(q) \le v(q) + N$ otherwise. \chana{check coherence of $<$}
%Recall that a configuration $C$ is in $\doublec{(v,S)}_N$ 
%if  $C(q) = v(q)$ if $q \notin S$ and $C(q) > v(q) + N$ otherwise.
%Since $\theta = (v,S)$ is in $\poststar(\symbcube)$, $|v| \le 2\norm{\cube}$
%and so $|C| \le 2\norm{\cube} + |Q|N$.
%The norm $\norm{\cubemin}$ is equal to $|C|$.
%Therefore $\norm{\cubemin}\le 2\norm{\cube} + |Q|N$.
%Finally, using that the norm of a union of cubes is the maximum of the the cubes' norms, we obtain 
%$$
%\norm{ \poststar(\cube)} \le 2\norm{\cube} + |Q|N.
%$$
%By replacing $N$ by its upper bound we obtain our result. 

This result also holds for $\prestar(\cube)$. 
If $\RBN = (Q,\Sigma,R)$ is the given RBN, consider
the ``reverse" RBN $\RBN_r$, defined as $\RBN=(Q,\Sigma,R_r)$ where $R_r$ has a transition $(q, \star a, q')$ for $\star \in \set{!,?}$ 
if{}f $R_r$ has a transition $(q', \star a, q)$. 
Notice that $\RBN_r$ is still an RBN and that $\post^*(\cube)$ in $\RBN$ is equal to $\pre^*(\cube)$ in $\RBN_r$.
\end{proof}

Recall that counting sets are closed under boolean operations. With the above theorem, plus the fact that counting sets are finite unions of cubes, we obtain the following closure result.

\begin{corollary}[Closure]
\label{coro:closure}
Counting sets are closed under $\poststar$, $\prestar$ and boolean operations.
\end{corollary}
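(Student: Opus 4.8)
The goal is to prove Corollary~\ref{coro:closure}: that counting sets are closed under $\poststar$, $\prestar$ and boolean operations. The plan is to assemble this directly from two ingredients already established in the excerpt. The first ingredient is Theorem~\ref{thm:poststar}, which shows that for any single cube $\cube$, both $\poststar(\cube)$ and $\prestar(\cube)$ are counting sets (with a concrete norm bound that we do not even need here — only the qualitative ``is a counting set'' part matters). The second ingredient is the closure of counting sets under boolean operations, which the excerpt states is already known (it references Proposition~2 and Proposition~5 of~\cite{EsparzaGMW18}, summarized in Proposition~\ref{prop:oponconf}).

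First I would handle the boolean operations. These hold immediately: a counting set is by definition a finite union of cubes, and the excerpt has already asserted (just before Proposition~\ref{prop:oponconf}) that counting constraints and counting sets are closed under the Boolean operations of union, intersection, and complement. So nothing new is required for that part of the statement beyond invoking this fact.

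Next I would handle $\poststar$ and $\prestar$. The key observation is that both $\poststar$ and $\prestar$ distribute over unions: for any collection of configuration sets, $\poststar\!\left(\bigcup_i \cube_i\right) = \bigcup_i \poststar(\cube_i)$, and likewise for $\prestar$. This is immediate from the definitions of $\post^*$ and $\pre^*$ as existential statements over a single source configuration. Now let $\cSet = \bigcup_{i=1}^m \cube_i$ be an arbitrary counting set, written as a finite union of cubes. Then $\poststar(\cSet) = \bigcup_{i=1}^m \poststar(\cube_i)$. By Theorem~\ref{thm:poststar}, each $\poststar(\cube_i)$ is itself a counting set, hence a finite union of cubes. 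A finite union of finitely many counting sets is again a finite union of cubes, i.e.\ a counting set. The argument for $\prestar$ is identical (and Theorem~\ref{thm:poststar} explicitly covers $\prestar$ via the reverse-RBN construction). This completes all cases.

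I do not anticipate a genuine obstacle here, since Corollary~\ref{coro:closure} is essentially a packaging of Theorem~\ref{thm:poststar} together with the already-cited boolean closure. The only point requiring a sentence of care is the distributivity of $\poststar$ and $\prestar$ over finite unions, which lets us reduce the cube-union case to the single-cube case handled by Theorem~\ref{thm:poststar}; everything else is a direct appeal to previously established results.
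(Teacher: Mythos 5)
Your proof is correct and follows essentially the same route as the paper, which derives the corollary directly from Theorem~\ref{thm:poststar} together with the previously cited boolean closure of counting sets, implicitly using the same distributivity of $\poststar$ and $\prestar$ over finite unions of cubes that you spell out. No gaps.
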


We are now ready to show our  main result, the \PSPACE \ Theorem.
We show that there exist \PSPACE{} algorithms to evaluate boolean combinations over counting sets and  reachability set of counting sets.
%This allows many parameterized problems for RBN to be solved in \PSPACE.
This result and its proof are adapted from a similar result for population protocols in~\cite{DC}. 

Given a counting constraint $\Gamma$, we let $[\Gamma]$ denote the counting set described by $\Gamma$. To state our result, we first define some ``nice'' expressions.
\begin{definition}
A \emph{nice} expression is any expression that is constructed by the following syntax:
$$E := \Gamma \ | \ \poststar(\Gamma) \ | \ \prestar(\Gamma) \ | \ E \cap E \ | \ E \cup E \ | \ \overline{E}$$
where $\Gamma$ is any counting constraint. 

If $E$ is a nice expression, then the \emph{size} of $E$, denoted by $|E|$, is defined as follows:
\begin{itemize}
\item If $E = \Gamma$ or $\poststar(\Gamma)$ or $\prestar(\Gamma)$, then $|E| = 1$;
\item If $E = E_1 \cup E_2$ or $E = E_1 \cap E_2$, then $|E| = |E_1| + |E_2|$;
\item If $E = \overline{E_1}$, then $|E| = |E_1| + 1$.
\end{itemize}
 The set of configurations that is described by a nice expression $E$ can be defined in a straightforward manner, and is denoted as $[E]$.
%Let $\mathcal{S}$ be a function
%that takes as arguments an RBN $\RBN$ and a counting set $X$, 
%and returns a counting set $\mathcal{S}(\RBN,X)$.
%Function $\mathcal{S}$ is called a \emph{nice function} if
% $\mathcal{S}(\RBN,X)$ has norm at most exponential in the size of $(\RBN,X)$,  
% and  membership in $\mathcal{S}(\RBN,X)$ 
% (given input $(C,\RBN,X)$, decide whether $C \in \mathcal{S}(\RBN,X)$)
% is \PSPACE{}-decidable.
\end{definition}

Notice that any nice expression $E$ is a counting constraint, and $[E]$ is a  counting set, by the Closure Corollary \ref{coro:closure}.

\begin{theorem}[\PSPACE{} Theorem]
\label{thm:pspace-lemma}
%        Let $\mathcal{S}_1$ and $\mathcal{S}_2$ be two functions 
%        that take as arguments a RBN $\RBN$ and a counting set $X$, and return counting sets $\mathcal{S}_1(\RBN,X)$ and $\mathcal{S}_2(\RBN,X)$ respectively.
%  Assume that $\mathcal{S}_1(\RBN,X)$  and $\mathcal{S}_2(\RBN,X)$ 
%        have
%        norms at most exponential in the size of the $(\RBN,X)$,
%        as well as
%        \PSPACE{}-decidable membership
%        (given input $(C,\RBN,X)$, decide whether $C \in \mathcal{S}_i(\RBN,X)$).
%        
%        Then the same is true about the counting sets
%        $\mathcal{S}_1(\RBN,X)\cap\mathcal{S}_2(\RBN,X)$,
%        $\mathcal{S}_1(\RBN,X)\cup\mathcal{S}_2(\RBN,X)$,
%        $\compl{\mathcal{S}_1(\RBN,X)}$, and
%        $\prestar(\mathcal{S}_1(\RBN,X))$.
%        Furthermore, given $\RBN$ and $X$, the emptiness problem for these sets is in \PSPACE.   
Let $E$ be a nice expression and let $N$ be the maximum norm of the counting constraints appearing in $E$. 
Then $[E]$ is a counting set of norm  at most exponential in $N, |E|$ and $|Q|$.
Further, the membership and emptiness problems for $[E]$ are in \PSPACE{}.

% Let $\mathcal{S}_1$ and $\mathcal{S}_2$ be two nice functions.
%Then the following functions are nice: 
%\begin{align*}
 %       &(\RBN,X) \mapsto \mathcal{S}_1(\RBN,X)\cap\mathcal{S}_2(\RBN,X), \qquad \qquad \qquad
  %      (\RBN,X) \mapsto \mathcal{S}_1(\RBN,X)\cup\mathcal{S}_2(\RBN,X), \\
   %     &(\RBN,X) \mapsto \compl{\mathcal{S}_1(\RBN,X)}, \quad
    %    (\RBN,X) \mapsto \poststar(\mathcal{S}_1(\RBN,X)), \quad
     %   (\RBN,X) \mapsto \prestar(\mathcal{S}_1(\RBN,X)).
%\end{align*}        
 %       Furthermore, given $\RBN$, $X$ and a nice function $\mathcal{S}$, the emptiness problem for the  counting set $\mathcal{S}(\RBN,X)$ is in \PSPACE.
\end{theorem}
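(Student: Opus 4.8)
The plan is to bound the norm of $[E]$ by structural induction on the nice expression $E$, and then derive the $\PSPACE$ claims for membership and emptiness as corollaries of this norm bound. The base cases are where the real work sits: for $E = \Gamma$, the norm is simply $\norm{\Gamma} \le N$ by hypothesis. For $E = \poststar(\Gamma)$ (and symmetrically $\prestar(\Gamma)$), I would first write $\Gamma = \bigcup_{i=1}^m \cube_i$ as a finite union of cubes, observe that $\poststar(\Gamma) = \bigcup_i \poststar(\cube_i)$, and then invoke Theorem~\ref{thm:poststar} on each $\cube_i$ together with the union bound from Proposition~\ref{prop:oponconf}. This gives $\norm{\poststar(\Gamma)} \le \max_i \norm{\poststar(\cube_i)} \in O((\norm{\cube_i} \cdot |Q|)^{|Q|+2}) = O((N \cdot |Q|)^{|Q|+2})$, which is exponential in $|Q|$ and polynomial in $N$, hence at most exponential in $N, |E|, |Q|$ as required.

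For the inductive step I would apply Proposition~\ref{prop:oponconf} directly. For $E = E_1 \cup E_2$, $\norm{[E]} \le \max\{\norm{[E_1]}, \norm{[E_2]}\}$; for $E = E_1 \cap E_2$, $\norm{[E]} \le \norm{[E_1]} + \norm{[E_2]}$; and for $E = \overline{E_1}$, $\norm{[E]} \le |Q| \cdot \norm{[E_1]} + |Q|$. I expect the main obstacle to be bookkeeping the precise form of the induction hypothesis so that it composes cleanly. The complement operation multiplies the norm by $|Q|$ and each complement increments $|E|$ by one, so a nesting of complements can multiply the norm by $|Q|$ repeatedly; the safe formulation is to carry an explicit hypothesis of the shape $\norm{[E]} \le 2^{c \cdot |E| \cdot (N + |Q| \log |Q|)}$ (or some similar bound of the form ``$2$ raised to a polynomial in $N, |E|, |Q|$'') and check that each of the three inductive operations preserves it: $\max$ trivially does, sum of two exponentials is at most the exponential with the exponents added (absorbed because $|E| = |E_1| + |E_2|$), and the complement multiplies by a polynomial factor $|Q|$ while $|E|$ grows by one, which is absorbed by the exponential. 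Verifying that the constant $c$ in the exponent can be chosen uniformly over all three cases is the delicate point, and I would make it explicit at the start.

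Having established that $[E]$ is a counting set of norm at most $2^{q(N, |E|, |Q|)}$ for a fixed polynomial $q$, the decision problems follow. For membership of a configuration $C$ in $[E]$, I would give a recursive procedure over the structure of $E$ that evaluates whether $C \in [E]$: the Boolean connectives are immediate, and for the $\poststar$/$\prestar$ atoms I would decide $C \in \poststar(\Gamma)$ by an on-the-fly search. The key observation making this fit in polynomial space is that every intermediate counting set has norm at most $2^{q(N,|E|,|Q|)}$, so any witnessing configuration can be taken to have each coordinate bounded by this exponential value, i.e.\ representable in polynomially many bits; thus the search space has entries of polynomial bit-size and an on-the-fly (e.g.\ nondeterministic, then Savitch) reachability exploration over configurations whose coordinates are capped at the norm bound uses only polynomial space.

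For emptiness, I would reduce it to membership: $[E]$ is nonempty if and only if there exists a configuration $C$ of polynomially-bounded bit-size (again by the norm bound, since a nonempty counting set of norm $r$ contains a configuration of size at most $O(|Q| \cdot r)$) with $C \in [E]$. Guessing such a $C$ nondeterministically and verifying membership in $\PSPACE$ puts emptiness in $\mathsf{NPSPACE} = \PSPACE$ by Savitch's theorem. The only subtlety to flag here is that all of this must never materialize the full counting constraint for $[E]$ (which could have exponentially many cubes); the whole point of the on-the-fly approach is to evaluate membership of a single small configuration by recursing through $E$ without ever constructing the explicit cube decomposition, so that the space stays polynomial in $N$, $|E|$, and $|Q|$.
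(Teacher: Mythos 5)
Your proposal follows essentially the same route as the paper: the norm bound comes from combining Theorem~\ref{thm:poststar} with Proposition~\ref{prop:oponconf} by induction on the expression (the paper states this in one line, you spell out the induction hypothesis), membership in $\poststar(\Gamma)$ is decided by a nondeterministic on-the-fly guess-an-initial-configuration-and-simulate algorithm closed off by Savitch's theorem, and emptiness reduces to guessing a witness of size bounded by the norm and invoking the membership routine.

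One point in your membership argument is misattributed and, taken literally, would be wrong: you justify the polynomial-space bound of the reachability search by capping coordinates at the norm bound $2^{q(N,|E|,|Q|)}$. But the input configuration $C$ may itself have coordinates exceeding that bound, and every configuration on a run ending at $C$ has total size exactly $|C|$ (RBN steps preserve the number of agents), so capping at the norm would cut off valid runs. The correct reason the search fits in polynomial space --- and the one the paper uses --- is precisely this size preservation: the guessed $C_0$ and all intermediate configurations have size $|C|$, hence each coordinate is representable in polynomially many bits of the input, and the run length can be bounded by the number of configurations of size $|C|$ using a polynomial-size counter. The norm bound is needed only for emptiness, where you use it correctly. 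With that repair your proof coincides with the paper's.
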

\begin{proof}%\chana{careful this is very like DC)}
		Recall that $[E]$ is a counting set , by the Closure Corollary (Corollary~\ref{coro:closure}).
        The exponential bounds for the norms follow immediately from 
        Proposition~\ref{prop:oponconf} and Theorem~\ref{thm:poststar}.
        The membership complexity for union, intersection and 
        complement is easy to see. 
        %The membership complexity for counting sets then also follows.
        Without loss of generality it suffices to prove that membership in
        $\poststar(\Gamma)$ is in \PSPACE, where $\Gamma$ is a counting constraint.

        By Savitch's Theorem \NPSPACE=\PSPACE, so we provide a nondeterministic algorithm.
        Given $(C,\Gamma)$, we want to decide whether $C\in \poststar(\Gamma)$.
        The algorithm
        first guesses a configuration $C_0 \in \Gamma$ of the same size as $C$,
        verifies that $C_0$ belongs to $\Gamma$, 
        and then simply guesses an execution starting at $C_0$, step by step.
        The algorithm stops if either the configuration reached at some step is $C$, 
        or if it has guessed more steps than the number of configurations of size $|C|$.
        This concludes the discussion regarding the membership complexity.
        
        %Notice that all intermediate configurations of such an execution have the same size as $C$.
        %At any moment in time the algorithm only stores three configurations,
        %the current one, the next configuration in the execution, and the input one. 
        %This concludes the proof of the claim.

		To see that checking emptiness of $E$ is in \PSPACE,
		notice that if $E$ is nonempty, then it has an element 
		of size at most $\norm{E}$. We can guess such an element $C$
		in polynomial space (by  representing each coefficient in binary), and verify that $C$ is indeed in $E$ by means 
		of the \PSPACE \ membership algorithm.
\end{proof}

This result is a powerful tool which can be used to prove that a host of problems are in \PSPACE \ for RBN.
For instance, the cube-reachability problem for cubes $\cube$ and $\cube'$ is just
checking if $\poststar(\cube) \cap \cube'$ is empty, which by the  \PSPACE \ Theorem can
be done in \PSPACE. 
%If we let $\cube''$ be the cube obtained by replacing all the upper bounds of $\cube'$ with $\infty$, checking emptiness of $\poststar(\cube) \cap \cube''$is equivalent to checking whether $\cube'$ is coverable from $\cube$.
%Combining this with the fact that the reachability problem between two configurations of an RBN net is \PSPACE-hard, as shown in \cite{}\chana{cite RBN paper}
Combining this with Remark~\ref{remark:FSTTCS12}, we obtain the following result.
\begin{theorem}\label{thm:cube-reachability}
Cube-reachability is \PSPACE-complete for RBN.
\end{theorem}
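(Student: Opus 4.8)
The plan is to prove Theorem~\ref{thm:cube-reachability} by establishing the two directions of \PSPACE-completeness separately, both of which are essentially corollaries of machinery already in place. For membership in \PSPACE, I would reduce cube-reachability directly to the \PSPACE{} Theorem (Theorem~\ref{thm:pspace-lemma}). Given two cubes $\cube$ and $\cube'$, observe that $\cube'$ is reachable from $\cube$ if and only if there exist $C \in \cube$ and $C' \in \cube'$ with $C \act{*} C'$, which is exactly the condition that $\poststar(\cube) \cap \cube'$ is nonempty. Since $\poststar(\cube)$ is the successor set of the counting constraint describing $\cube$, and $\cube'$ is itself a counting constraint, the expression $E := \poststar(\cube) \cap \cube'$ is a nice expression with $|E| = 2$ and maximum constituent norm $N = \max\{\norm{\cube},\norm{\cube'}\}$. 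The \PSPACE{} Theorem then tells us both that $[E]$ is a counting set of norm at most exponential in $N$, $|E|$ and $|Q|$, and that emptiness of $[E]$ is decidable in \PSPACE. Deciding cube-reachability is precisely deciding non-emptiness of $[E]$, so this places the problem in \PSPACE.

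For \PSPACE-hardness, I would invoke Remark~\ref{remark:FSTTCS12}, which records that the unbounded initial cube-reachability problem studied in~\cite{Delzanno12} is already \PSPACE-complete (in particular \PSPACE-hard) when the input numbers are given in unary. That problem is a syntactic special case of cube-reachability: it asks exactly whether $\cube$ can reach $\cube'$ for cubes $\cube = (L,U)$ and $\cube' = (L',U')$ satisfying the side condition $L(q)=0$ and $U(q) \in \{0,\infty\}$ for every state $q$. Since every instance of this sub-problem is literally an instance of the general cube-reachability problem, hardness transfers by the trivial (identity) reduction, and the unary encoding of the hard instances only makes the reduction easier to realize in logarithmic space. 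Combining the upper and lower bounds yields \PSPACE-completeness.

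The main conceptual obstacle here is not in the hardness direction, which is immediate once the right prior result is cited, but in correctly framing the membership argument so that it really is an instance of the \PSPACE{} Theorem rather than requiring a fresh analysis. The subtle point worth stating explicitly is the equivalence between ``there exist $C \in \cube$, $C' \in \cube'$ with $C \act{*} C'$'' and ``$\poststar(\cube) \cap \cube' \neq \emptyset$''; the left-to-right direction is by definition of $\poststar$, and the right-to-left direction holds because any $C' \in \poststar(\cube) \cap \cube'$ witnesses both a reachable configuration and membership in $\cube'$. Once this reformulation is in hand, everything else is a direct appeal to Theorem~\ref{thm:pspace-lemma}, and no new combinatorial bounds need to be derived. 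Because the emptiness procedure in the \PSPACE{} Theorem works even when the input constants are encoded in binary, the upper bound we obtain is in fact stronger than the unary upper bound of~\cite{Delzanno12}, which is the sense in which this theorem generalizes the earlier result while retaining the same complexity class.
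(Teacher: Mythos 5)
Your proposal is correct and follows essentially the same route as the paper: membership is obtained by observing that cube-reachability amounts to non-emptiness of the nice expression $\poststar(\cube) \cap \cube'$ and invoking the \PSPACE{} Theorem, while hardness is inherited from the unbounded initial cube-reachability sub-problem of~\cite{Delzanno12} recorded in Remark~\ref{remark:FSTTCS12}. Your explicit justification of the equivalence with non-emptiness of $\poststar(\cube)\cap\cube'$ is a detail the paper leaves implicit, but the argument is the same.
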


By the reduction given in Section 4.2 of ~\cite{Gandalf21}, this result also proves
that cube-reachability is \PSPACE-complete for asynchronous shared-memory systems (ASMS), which is another model of distributed computation where agents communicate by a shared register. Due to lack of space, we
defer a discussion of this result to the appendix.

We will demonstrate further applications of the \PSPACE \ Theorem in the next  section.

%Since counting sets are finite unions of cubes, this is also true for counting set-reachability and counting set-coverability.

%%%%%%%%%%%%%%%%%%%%%%%%%%%%%%%%%
\section{Application 1: Almost-sure coverability}\label{sec:almost-sure}

%!TEX root = main.tex
Having presented our \PSPACE{} Theorem and the closure property for reachability sets of counting sets,
 we now provide two applications.
 % of these results. 
 For the first one, we consider the \emph{almost-sure coverability} problem for RBN. %which is known to be in \EXPSPACE{} (Section 5.3 of~\cite{Gandalf21}).
Using our new results, we  prove that this problem is \PSPACE{}-complete. 

The rest of the section is as follows: We first recall the definition of the almost-sure coverability problem, 
give a characterization of it in terms of counting sets and then prove \PSPACE{}-completeness.
%Finally, we comment on how our results can also be used to prove \PSPACE{}-completeness of almost-sure coverability for a related model
%called ASMS, thereby solving an open problem from \bala{Need to cite ICALP paper}.
Throughout this section, we fix a RBN $\RBN = (Q,\Sigma,\delta)$ with two special states $\init, \fin \in Q$, which will respectively be called the initial and final states.

\subsection{The almost-sure coverability problem}

Let $\uparrow \fin$ denote the set of all configurations $C$ of $\RBN$ such that $C(\fin) \ge 1$. For any $k \ge 1$, 
we say that the configuration $\multiset{k \cdot \init}$ \emph{almost-surely covers} $\fin$ if and only if 
$\post^*(\multiset{k \cdot \init}) \subseteq \pre^*(\uparrow \fin)$. The reason behind calling this the almost-sure coverability relation is that the definition given
here is equivalent to covering the state $\fin$ from $\multiset{k \cdot \init}$ with probability 1
under a probabilistic scheduler which picks agents uniformly at random at each step.

The number $k$ is called a \emph{cut-off} if one of the following is true: Either, 1) for all $h \ge k$, the configuration $\multiset{h \cdot \init}$ almost-surely covers $\fin$, in which case $k$ is called a \emph{positive} cut-off;
or, 2) for all $h \ge k$, the configuration $\multiset{h \cdot \init}$ does not almost-surely cover $\fin$, in which case $k$ is called a \emph{negative} cut-off.
%Note that from the definition alone, it is not clear that a cut-off must exist for every RBN. 
The following was proved in Theorem 9 of~\cite{Gandalf21}.

\begin{theorem}
	Given an RBN with two states $\init,\fin$, a cut-off always exists. 
	Whether the cut-off is positive or negative can be decided in \EXPSPACE{}.
\end{theorem}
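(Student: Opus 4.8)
The plan is to reduce almost-sure coverability to a (non-)reachability question for a single fixed counting set. For any configuration, covering $\fin$ can only become easier when agents are added, since by reconfigurability a surplus agent can be kept isolated so that it neither broadcasts nor receives; hence $\pre^*(\uparrow \fin)$ is upward closed, and its complement $D := \overline{\pre^*(\uparrow \fin)}$ is downward closed. Moreover $D$ is forward closed under steps: if $C \in D$ and $C \to C'$, then $\post^*(C') \subseteq \post^*(C)$ misses $\uparrow \fin$, so $C' \in D$. By the Closure Corollary (Corollary~\ref{coro:closure}) $D$ is a counting set, and by Theorem~\ref{thm:poststar} (for $\prestar$) together with Proposition~\ref{prop:oponconf} its norm is at most exponential in $|Q|$. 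The payoff of this reformulation is the equivalence that $\multiset{k \cdot \init}$ almost-surely covers $\fin$ if and only if $\post^*(\multiset{k \cdot \init}) \cap D = \emptyset$, i.e. no ``dead'' configuration is reachable.

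First I would establish that a cut-off exists, i.e. that the predicate $g(k) := [\,\post^*(\multiset{k \cdot \init}) \cap D \neq \emptyset\,]$ is eventually constant. Two structural facts drive this: every RBN step conserves the total number of agents, and in the symbolic graph of index $0$ only finitely many supports $S$ are reachable from $(\vec 0, \set{\init})$. By the Refinement Theorem (Theorem~\ref{theorem-upperboundN}) there is a bounded $N_0$ such that any configuration whose support is one of these reachable $S$, and all of whose $S$-coordinates exceed $N_0$, is reachable from the all-$\init$ configuration of the same size (conservation fixes that size). Thus for large $k$ the reachable configurations of size $k$ are governed by this fixed finite family of supports, and whether any of them falls into the downward-closed set $D$ ceases to depend on $k$ beyond its total size; combined with the ``idle surplus agent'' embedding $C \mapsto C + \vec{\init}$ relating consecutive sizes, this should yield eventual constancy of $g$ together with an at most exponential bound $B$ on the cut-off. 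I expect this to be the main obstacle: the Refinement Theorem only delivers reachability of configurations all of whose relevant coordinates are large, whereas a configuration can belong to $D$ precisely because some coordinate is small, so bridging this gap — and thereby ruling out oscillation of $g$ — is where the real work lies.

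Finally, to decide the type of the cut-off in \EXPSPACE, I would compute the counting-set representation of $D$ (from the basis of the upward-closed set $\pre^*(\uparrow \fin)$ via coverability and Proposition~\ref{prop:oponconf}), fix a size $h = B$ beyond the cut-off, and decide whether $\post^*(\multiset{h \cdot \init}) \cap D = \emptyset$. By the existence of a cut-off this single test equals the eventual value of $g$, so emptiness certifies a positive cut-off and non-emptiness a negative one. This last query is a counting-set emptiness instance whose data — the set $D$ and the initial configuration $\multiset{h \cdot \init}$ — have norm and size at most exponential; since the \PSPACE{} Theorem (Theorem~\ref{thm:pspace-lemma}) solves such instances in space polynomial in these norms, the whole procedure runs in \EXPSPACE.
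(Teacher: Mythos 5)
Your reformulation is the same one the paper uses: almost-sure coverability of $\fin$ from $\multiset{k \cdot \init}$ is equivalent to $\poststar(\multiset{k \cdot \init}) \cap D = \emptyset$ with $D = \overline{\prestar(\uparrow \fin)}$, and $D$ is indeed a downward-closed, forward-closed counting set. (For the record, the paper itself only quotes this theorem from Theorem~9 of~\cite{Gandalf21}; its own machinery reproves it via Theorem~\ref{thm:charac-almost-sure-cov} and the remark following it.) But the core of the argument --- that $g(k)$ is eventually constant, with an exponential bound on the switching point --- is exactly what you have not established, and you say so yourself. Two concrete problems with your proposed route. First, the ``idle surplus agent'' embedding $C \mapsto C + \vec{\init}$ does not relate $g(k)$ and $g(k+1)$ in the direction you need: $D$ is downward closed, not upward closed, so a dead reachable $C$ of size $k$ yields a reachable $C + \vec{\init}$ of size $k+1$ that may well be \emph{live} (the extra agent can enable covering $\fin$). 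So $g$ is not obviously monotone, and oscillation is not ruled out. Second, as you anticipate, Theorem~\ref{theorem-upperboundN} only certifies reachability of configurations all of whose abstract coordinates exceed some $N$, whereas membership in $D$ can hinge on a coordinate being small; working support-by-support in $\gr_0$ does not bridge this.

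The missing idea is to stop analysing $\poststar(\multiset{k \cdot \init})$ one $k$ at a time and instead apply Theorem~\ref{thm:poststar} to the entire initial cube $[\init]$ at once: $\cSet := \poststar([\init]) \cap D$ is a counting set of norm $N$ at most exponential in $|Q|$ (Corollary~\ref{coro:closure} and Proposition~\ref{prop:oponconf}). Since steps conserve the number of agents, a configuration of size $k$ lies in $\poststar([\init])$ iff it is reachable from $\multiset{k \cdot \init}$, so $g(k) = 1$ iff $\cSet$ contains a configuration of size $k$. Now the dichotomy is immediate from the counting-set structure: if $\cSet$ is finite, every element has size at most $|Q| \cdot N$, so $g(k)=0$ for all $k > |Q| \cdot N$ (positive cut-off); if $\cSet$ is infinite, some cube $(L,U)$ in a norm-minimal representation has $U(p) = \infty$ for some state $p$, and pumping the $p$-coordinate produces elements of $\cSet$ of every size $\ge |L|$, so $g(k)=1$ for all large $k$ (negative cut-off). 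This is precisely Theorem~\ref{thm:charac-almost-sure-cov}; it simultaneously gives existence of the cut-off, the exponential bound $B$, and then your final step (test $\poststar(\multiset{B \cdot \init}) \cap D = \emptyset$ via the \PSPACE{} Theorem on exponential-norm data) goes through in \EXPSPACE{} --- in fact the paper gets \PSPACE{} by testing finiteness of $\cSet$ directly.
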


Our main result of this section is that
\begin{theorem}
	Deciding whether the cut-off of a given RBN is positive or negative is \PSPACE{}-complete. Moreover, a given RBN always has 
	a cut-off which is at most exponential in its number of states.	
\end{theorem}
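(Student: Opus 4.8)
The goal is to characterize the almost-sure coverability relation in terms of counting sets so that the \PSPACE{} Theorem applies, and then separately establish the exponential cut-off bound. First I would unfold the definition: $\multiset{k \cdot \init}$ almost-surely covers $\fin$ iff $\post^*(\multiset{k \cdot \init}) \subseteq \pre^*(\uparrow \fin)$. The set $\uparrow \fin$ is itself a cube (lower bound $1$ on $\fin$, everything else unconstrained), so by the Closure Corollary (Corollary~\ref{coro:closure}) the set $\pre^*(\uparrow \fin)$ is a counting set, and by Theorem~\ref{thm:poststar} its norm is at most exponential in $|Q|$. The key reformulation I would aim for is to move from the family of initial configurations $\multiset{k \cdot \init}$ to a single cube. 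Define the cube $\initconf$ of all configurations supported only on $\init$, i.e. the cube $(L,U)$ with $L = U = \vec{0}$ off $\init$ and $L(\init)=1$, $U(\init)=\infty$. Then I claim the set of ``bad'' starting populations --- those $k$ for which covering fails --- corresponds exactly to whether $\post^*(\initconf)$ is contained in $\pre^*(\uparrow \fin)$, refined by the size of the initial segment. Since both $\post^*(\initconf)$ and $\pre^*(\uparrow \fin)$ are counting sets by the Closure Corollary, their difference $\post^*(\initconf) \cap \overline{\pre^*(\uparrow \fin)}$ is again a counting set of exponentially bounded norm.

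\textbf{From the counting set to the cut-off.} The subtlety is that almost-sure coverability is a property of a \emph{fixed} initial size $k$, whereas $\post^*(\initconf)$ mixes all sizes together. To handle this I would exploit monotonicity of RBN: adding an extra agent in state $\init$ can only help (any run from $\multiset{k \cdot \init}$ lifts to a run from $\multiset{(k{+}1) \cdot \init}$ by letting the extra agent stay idle, and any covering witness is preserved), so the set of $k$ for which $\multiset{k \cdot \init}$ almost-surely covers $\fin$ is upward closed once it becomes nonempty --- or, dually, the failing $k$'s form a downward-closed-in-failure pattern --- which is precisely why a cut-off exists. Concretely, failure at size $k$ means there is a configuration $C \in \post^*(\multiset{k \cdot \init})$ with $C \notin \pre^*(\uparrow \fin)$, and every configuration reachable from $\multiset{k\cdot\init}$ has size exactly $k$. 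Thus failure at size $k$ is witnessed by an element of $\post^*(\initconf) \cap \overline{\pre^*(\uparrow \fin)}$ of size exactly $k$. Whether any such witness exists of a given size, and whether witnesses exist for all sufficiently large sizes, is decidable by inspecting this counting set.

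\textbf{Extracting the exponential cut-off and deciding sign.} Let $\mathcal{B} := \post^*(\initconf) \cap \overline{\pre^*(\uparrow \fin)}$ be the counting set of all reachable-from-$\init$ configurations that fail to cover. By the \PSPACE{} Theorem, $\mathcal{B}$ is a counting set of norm $M$ at most exponential in $|Q|$; write $\mathcal{B}$ as a finite union of cubes $\bigcup_i (L_i, U_i)$ with all finite constants bounded by $M$. For each cube I can read off the set of sizes it contains: a cube $(L_i,U_i)$ contains a configuration of size $s$ exactly when $|L_i| \le s$ and ($s \le |U_i|$ if all components of $U_i$ are finite, else $s$ unbounded above). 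Hence the set of failing sizes is a finite union of intervals, each of whose finite endpoints is bounded by $|Q| \cdot M$, so it is eventually constant: either every sufficiently large size fails (negative cut-off) or every sufficiently large size succeeds (positive cut-off), and the threshold is at most $|Q|\cdot M$, which is exponential in $|Q|$. This yields the exponential cut-off bound. To decide the sign in \PSPACE, I would pick a size threshold $T$ one above this exponential bound --- $T$ is a number with polynomially many bits --- and test whether $\mathcal{B}$ contains a configuration of size $T$; by the \PSPACE{} membership/emptiness machinery of the \PSPACE{} Theorem (guess a configuration of size $T$, written in binary, and check membership in the nice expression $\post^*(\initconf) \cap \overline{\pre^*(\uparrow \fin)}$), this is a \PSPACE{} query, and its answer determines whether the cut-off is negative or positive. \PSPACE{}-hardness I would obtain by reduction from the \PSPACE{}-hard coverability/reachability problems for RBN established earlier (Theorem~\ref{thm:cube-reachability}), adapting the reduction so that almost-sure covering encodes the existence of a covering run.

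\textbf{The main obstacle.} The delicate point is the passage from the aggregate statement ``$\post^*(\initconf) \subseteq \pre^*(\uparrow\fin)$ fails'' to the size-indexed statement ``$\multiset{k\cdot\init}$ fails for this particular $k$''. These coincide only because every run conserves the total number of agents, so reachability from $\initconf$ decomposes cleanly by size; I must verify that $\post^*(\multiset{k\cdot\init})$ is exactly the size-$k$ slice of $\post^*(\initconf)$, and that $\pre^*(\uparrow\fin)$ likewise respects size (which it does, since $\pre^*$ only relates equal-size configurations). Establishing this slicing cleanly, together with the monotonicity argument guaranteeing that the failing sizes are eventually constant so that a genuine cut-off --- not merely a non-monotone pattern --- emerges, is where the real work lies; the bounds and the \PSPACE{} membership tests then follow mechanically from Theorem~\ref{thm:poststar} and the \PSPACE{} Theorem.
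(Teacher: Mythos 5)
Your upper-bound argument and the exponential cut-off bound are correct and follow essentially the paper's own route: you form the same counting set $\cSet = \poststar([\init]) \cap \overline{\prestar(\uparrow\fin)}$, use conservation of the number of agents to slice it by size, observe that the failing sizes form a finite union of intervals whose finite endpoints are bounded by the (exponential) norm of $\cSet$, and decide the sign by a single membership query of controlled size. The paper phrases the dichotomy as ``$\cSet$ finite vs.\ infinite'' (Theorem~\ref{thm:charac-almost-sure-cov}) and tests infiniteness by guessing a configuration with one coordinate pinned to $N+1$ rather than a configuration of one fixed large total size, but these tests are interchangeable; both rest on the same ingredients (Corollary~\ref{coro:closure}, Theorem~\ref{thm:poststar}, Theorem~\ref{thm:pspace-lemma}).

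The genuine gap is in the hardness half. ``Adapting the reduction so that almost-sure covering encodes the existence of a covering run'' skips over the quantifier mismatch that is the entire difficulty: coverability asks for the \emph{existence} of one run covering $\fin$, whereas almost-sure coverability demands that \emph{every} reachable configuration can still cover $\fin$. A generic coverability-hard instance will typically have reachable dead branches from which $\fin$ is no longer coverable, so the naive reduction is unsound in one direction. The paper resolves this by starting not from cube-reachability but from instances that are effectively \emph{deterministic}: the IO-net encoding of linear-space Turing machines from~\cite{EsparzaRW19} admits exactly one execution from its initial configuration, so coverability and almost-sure coverability coincide there, giving \PSPACE-hardness of \emph{fixed-configuration} almost-sure coverability. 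It then needs two further nontrivial gadget stages to reach the cut-off problem: one replacing the structured initial configuration $C$ (agents spread over several states, $|C|\le|Q|$) by the uniform configuration $\multiset{n\cdot\init}$, policed so that any deviation from the intended one-agent-per-copy distribution trivially covers $\fin$; and one absorbing surplus agents into a sink so that the equivalence holds for \emph{all} $h\ge n$ rather than only $h=n$, as the definition of a cut-off requires. None of these ideas appears in your sketch, and without them the hardness direction does not go through.
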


\subsection{A characterization of almost-sure coverability}

We now rewrite the definition of almost-sure coverability in terms of counting sets. Let $[\init]$ be the cube such that $L(q) = U(q) = 0$ if $q \neq \init$
and $L(\init) = 0, U(\init) = \infty$. 
Notice that by definition, $\uparrow \fin$ is a cube.
We now consider the set of configurations defined by $\cSet := \poststar([\init]) \cap \overline{\prestar(\uparrow \fin)}$. 
By our \PSPACE{} Theorem \ref{thm:pspace-lemma}, $\cSet$ is a counting set such that the norm of $\cSet$ is at most $2^{p(|Q|)}$ for some fixed
polynomial $p$. We now claim the following.

\begin{theorem}~\label{thm:charac-almost-sure-cov}
	$\RBN$ has a positive cut-off if and only if $\cSet$ is finite. Moreover, $|Q| \cdot |\cSet|$ is an upper bound on the size of the cut-off for $\RBN$ and
	so $\RBN$ has a cut-off which is exponential in its number of states.
\end{theorem}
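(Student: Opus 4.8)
The plan is to reduce almost-sure coverability of each configuration $\multiset{h \cdot \init}$ to a purely quantitative condition on the counting set $\cSet$, exploiting that every RBN step conserves the number of agents: in a step $C \trans{} C'$ one sender and some receivers merely change state, creating and destroying nothing, so $|C| = |C'|$ and $|C|$ is invariant along runs. The immediate consequence I would record first is that a configuration $D \in \poststar([\init])$ can be reached only from $\multiset{|D| \cdot \init}$, and is in fact reachable from it. Writing $C_h := \multiset{h \cdot \init}$ and noting $\post^*(C_h) \subseteq \poststar([\init])$, this gives $\post^*(C_h) \cap \overline{\prestar(\uparrow \fin)} = \post^*(C_h) \cap \cSet$, and by conservation this intersection is nonempty precisely when $\cSet$ contains a configuration of size $h$. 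Hence $C_h$ almost-surely covers $\fin$ if and only if $\cSet$ has no configuration of size exactly $h$. The conceptual crux, and the step I would be most careful about, is establishing this size characterization correctly; everything afterwards is bookkeeping about counting sets.

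From this characterization the first claim follows directly. If $\cSet$ is finite, let $B$ be the largest size of a configuration in $\cSet$ (taking $B = 0$ if $\cSet = \emptyset$); then every $h > B$ makes $C_h$ almost-surely cover $\fin$, so $B+1$ is a positive cut-off. Conversely, if $\RBN$ has a positive cut-off $k$, then for all $h \ge k$ the set $\cSet$ contains no configuration of size $h$, so every configuration of $\cSet$ has size below $k$; since there are only finitely many multisets over the finite set $Q$ of size less than $k$, the set $\cSet$ is finite. Hence $\RBN$ has a positive cut-off if and only if $\cSet$ is finite.

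For the quantitative bound I would analyze which sizes a counting set realizes, using that $\cSet$ is a finite union of cubes $(L_i,U_i)$ of norm at most $\norm{\cSet}$ (Closure Corollary~\ref{coro:closure}). A cube all of whose upper bounds are finite realizes only sizes in the interval $[\lnorm{\cube_i}, \unorm{\cube_i}]$, whereas a cube with some $U_i(q) = \infty$ realizes every size at least $\lnorm{\cube_i}$: fix the other coordinates at their lower bounds and let the unbounded coordinate grow. Consequently the sizes occurring in $\cSet$ are bounded when $\cSet$ is finite and otherwise contain a full tail $[s,\infty)$ with $s \le \norm{\cSet}$; this also recovers the dichotomy that a cut-off always exists. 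In the positive case the cut-off $B+1$ is bounded by observing that each coordinate of a configuration of $\cSet$ is at most $\norm{\cSet}$, so its size is at most $|Q| \cdot \norm{\cSet}$; in the negative case $s \le \norm{\cSet}$ is itself a negative cut-off. In both cases the cut-off is at most $|Q| \cdot \norm{\cSet}$, and since $\norm{\cSet} \le 2^{p(|Q|)}$ by the \PSPACE{} Theorem~\ref{thm:pspace-lemma}, the cut-off is exponential in the number of states. The only real technical care needed is in the sizes-of-a-cube analysis and in pinning the cut-off to the norm rather than to the potentially much larger individual coefficients.
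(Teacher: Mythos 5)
Your proof is correct and takes essentially the same route as the paper's: both rest on conservation of the number of agents along runs, reduce almost-sure coverability of $\multiset{h\cdot\init}$ to whether $\cSet$ contains a configuration of size $h$, and extract the $|Q|\cdot\norm{\cSet}$ bound from the structure of a norm-minimal counting constraint (finite cubes bound the sizes, a cube with an infinite coordinate yields a full tail of sizes). The only cosmetic difference is that you argue the "positive cut-off $\Rightarrow$ $\cSet$ finite" direction by contrapositive on sizes rather than by directly exhibiting a negative cut-off, which you recover anyway in the quantitative part.
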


\begin{proof}
	Let $N$ be the norm of $\cSet$. Suppose $\cSet$ is finite. If $C \in \cSet$, 
	then $\sum_{q \in Q} C(q) \le |Q| \cdot N$. So, 
	if $C$ is any configuration of size $h > |Q| \cdot N$ such that $C \in \poststar(\multiset{h \cdot \init})$ then $C \in \prestar(\uparrow \fin)$. Hence, $|Q| \cdot N$ is a positive cut-off for $\RBN$.
	
	Suppose $\cSet$ is infinite, and let $\cup_i \cube_i$ be a counting constraint for $\cSet$
	whose norm is $N$. Then there must exist an index $i$ with $\cube_i := (L,U)$ and a state $p$ such that $U(p) = \infty$.
	For each $h \ge N$, consider the configuration $C_h$ given by $C_h(q) = L(q)$ if $q \neq p$ and $C_h(p) = h$. 
	Notice that $C_h \in \cSet$ and so $C_h \in \poststar([\init]) \cap \overline{\prestar(\uparrow \fin)}$.
	Hence, for every $h \ge |Q| \cdot N$, we have exhibited a configuration of size $h$, reachable from $(\multiset{h \cdot \init}$ but 
	from which $\fin$ is not coverable. Thus $N$ is a negative cut-off for $\RBN$.
\end{proof}

\begin{remark}
	Notice that we have shown that if $\cSet$ is finite, then $\RBN$ has a positive cut-off and if $\cSet$ is infinite, then $\RBN$ has a negative cut-off. This gives
	an alternative proof of the fact that a cut-off always exists for a given RBN.
\end{remark}

\subsection{\PSPACE{}-completeness of the almost-sure coverability problem}

Because of Theorem~\ref{thm:charac-almost-sure-cov}, we now have the following result.
%Using the characterization presented in the previous section, we now provide complexity bounds on the almost-sure coverability problem.

\begin{restatable}{lemma}{LemAlmostSureUpperBound}\label{lem:almost-sure-upper-bound}
	Deciding whether the cut-off of a given RBN is positive or negative can be done in \PSPACE{}.	
\end{restatable}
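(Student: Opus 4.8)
The plan is to reduce the decision problem to testing finiteness of the counting set $\cSet := \poststar([\init]) \cap \overline{\prestar(\uparrow \fin)}$, whose characterization we already have from Theorem~\ref{thm:charac-almost-sure-cov}: the RBN has a positive cut-off if and only if $\cSet$ is finite. So the whole task collapses to deciding, in \PSPACE, whether this particular counting set is finite. First I would observe that $\cSet$ is a nice expression in the sense of the \PSPACE{} Theorem: it is built from the cubes $[\init]$ and $\uparrow\fin$ using $\poststar$, $\prestar$, complement and intersection. Hence by the \PSPACE{} Theorem (Theorem~\ref{thm:pspace-lemma}), $\cSet$ is a counting set whose norm $N$ is at most exponential in $|Q|$, and both membership and emptiness for $\cSet$ are decidable in \PSPACE.

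The key remaining step is to turn the \emph{finiteness} question into something the \PSPACE{} Theorem can answer directly, since that theorem is phrased in terms of membership and emptiness rather than finiteness. Here I would exploit that a counting set is infinite exactly when some representing cube $(L,U)$ has an unbounded coordinate, i.e. $U(p)=\infty$ for some state $p$. The clean way to detect this without having to compute a minimal representation is the following: $\cSet$ is infinite if and only if it contains a configuration of size strictly greater than $|Q| \cdot N$, where $N$ is the (exponentially bounded) norm of $\cSet$. Indeed, if every representing cube is bounded, then every element has size at most $|Q|\cdot N$; conversely, an unbounded coordinate immediately furnishes configurations of arbitrarily large size, in particular one exceeding $|Q|\cdot N$. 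This is essentially the bound already used inside the proof of Theorem~\ref{thm:charac-almost-sure-cov}.

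Given this equivalence, the algorithm is: nondeterministically guess a configuration $C$ whose size exceeds $|Q|\cdot N$ but is bounded by, say, $|Q|\cdot N + 1$ (so that its coefficients can be written in binary in polynomial space, since $N$ is at most exponential in $|Q|$), and then use the \PSPACE{} membership test for the nice expression $\cSet$ to check that $C \in \cSet$. By Savitch's theorem, \NPSPACE = \PSPACE, so this nondeterministic procedure yields a deterministic \PSPACE{} decision procedure. If such a $C$ is found, $\cSet$ is infinite and the cut-off is negative; otherwise $\cSet$ is finite and the cut-off is positive.

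I expect the main obstacle to be the finiteness-versus-size-bound argument in the middle paragraph: one must be careful that the bound $|Q|\cdot N$ is computed from the \emph{norm} $N$ of $\cSet$ and that $N$ is genuinely at most exponential in $|Q|$, so that the guessed witness $C$ has a polynomial-size binary encoding. The norm bound comes from composing Proposition~\ref{prop:oponconf} (norms of boolean combinations grow only polynomially in $|Q|$ and additively) with Theorem~\ref{thm:poststar} (the $\poststar$ and $\prestar$ of a cube are counting sets of norm polynomial in $\norm{\cube}\cdot|Q|$, hence here constant-input norm giving an exponential-in-$|Q|$ bound). Once that bound is in hand, the rest is a routine on-the-fly guess-and-verify argument already licensed by the \PSPACE{} Theorem.
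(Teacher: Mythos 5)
Your proposal is correct and follows essentially the same route as the paper: reduce to finiteness of $\cSet$ via Theorem~\ref{thm:charac-almost-sure-cov}, use the exponential norm bound to identify a single polynomial-space-encodable witness configuration just above the bound, and decide its membership with the \PSPACE{} Theorem plus Savitch. The only cosmetic difference is that the paper's witness fixes one coordinate to $N+1$ and bounds the others by $N$, while you bound the total size by $|Q|\cdot N+1$; both variants are valid for the same reason.
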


\begin{proof}[Proof Sketch]
	By Theorem~\ref{thm:charac-almost-sure-cov}, it follows that a given RBN has a negative cut-off iff $\cSet = \poststar([\init]) \cap \overline{\prestar(\uparrow \fin)}$ is infinite. 
	We have already seen that $\cSet$ is a counting set such that the norm of $\cSet$ is at most $N := 2^{p(|Q|)}$ for some fixed
	polynomial $p$. 
	
	Let $\cup_i \cube_i$ be a counting constraint for $\cSet$ which minimizes its norm
	and let each $\cube_i = (L_i,U_i)$. Hence, $L_i(q) \le N$ for every state $q$. Further, 
	$\cSet$ is infinite iff there is an index $i$ and a state $q$ such that $U_i(q) = \infty$.
	Using these two facts, we can then show that $\cSet$ is infinite iff there is a state $q$ and a configuration $C \in \cSet$ such that
	$C(q') \le N$ for every $q' \neq q$ and $C(q) = N+1$. 
	
	Hence, to check if $\cSet$ is infinite, we just have to guess a state $q$ and a configuration $C$ such that $C(q') \le N$ for every $q' \neq q$ and $C(q) = N+1$ and check if $C \in \cSet$.
	Since guessing $C$ can be done in polynomial space (by representing every number in binary), by the \PSPACE{} Theorem (Theorem \ref{thm:pspace-lemma}), we can check if $C \in \cSet$ in polynomial space as well,
	which concludes the proof of the theorem.
\end{proof}

We also have the accompanying hardness result.

\begin{restatable}{lemma}{LemAlmostSureLowerBound}
	Deciding whether the cut-off of a given RBN is positive or negative is \PSPACE{}-hard.	
\end{restatable}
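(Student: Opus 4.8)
The plan is to reduce from cube-reachability, which is \PSPACE-complete by Theorem~\ref{thm:cube-reachability}. Given an instance consisting of an RBN $\RBN_0$ and two cubes $\cube = (L,U)$ and $\cube' = (L',U')$, I would build an RBN $\RBN'$ with distinguished states $\init,\fin$ so that $\RBN'$ has a negative cut-off if and only if $\cube'$ is reachable from $\cube$ in $\RBN_0$. By the characterization of Theorem~\ref{thm:charac-almost-sure-cov}, a negative cut-off is equivalent to the counting set $\cSet = \poststar([\init]) \cap \overline{\prestar(\uparrow\fin)}$ being infinite, so it suffices to arrange that $\cSet$ is infinite exactly when $\cube \trans{*} \cube'$.

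The construction of $\RBN'$ would have three ingredients. First, from $\init$ the agents distribute, via unreceived self-broadcasts $\init \xrightarrow{!a_q} q$, into the states of $\RBN_0$ so as to realize an arbitrary configuration of $\cube$, parking all surplus agents in a dedicated state with no route to $\fin$; since we control this phase, producing a configuration of $\cube$ requires no verification. Second, $\RBN'$ simulates $\RBN_0$ step for step on the distributed agents. Third, and crucially, $\fin$ is made coverable from a configuration $C$ precisely when $C$ \emph{violates} an upper bound of $\cube'$: for each bounded state $q$ I add a broadcast moving one agent to $\fin$, guarded so that it fires only when at least $U'(q)+1$ agents occupy $q$ (a lower threshold, realized by demanding $U'(q)$ receivers). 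Thus ``too many agents in some bounded state'' makes $\fin$ coverable, and $\overline{\prestar(\uparrow\fin)}$ collapses to the set of configurations respecting every upper bound of $\cube'$. The lower bounds $L'$ are not tested but maintained as invariants of the simulation once it enters the ``target region'', and one coordinate of $\cube'$ is kept unbounded (a \emph{pump} state) that is populated only in that region, so that a genuine witness to $\cube \trans{*} \cube'$ yields reachable configurations of unbounded size, all lying in $\overline{\prestar(\uparrow\fin)}$. With these pieces, $\poststar([\init]) \cap \overline{\prestar(\uparrow\fin)}$ contains configurations of unbounded size if and only if $\cube'$ is reachable from $\cube$, which by Theorem~\ref{thm:charac-almost-sure-cov} is exactly the negative-cut-off case.

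The main obstacle is the third ingredient, together with its interaction with the first two. A broadcast network can test lower thresholds locally --- ``at least $c$ agents in a state'' --- but never upper bounds, which is precisely why single-state coverability for RBN is tractable whereas cube-reachability is \PSPACE-hard; the reduction therefore cannot internalize $\cube'$-membership into a local trigger, and must instead route the upper bounds of $\cube'$ through the complement $\overline{\prestar(\uparrow\fin)}$ via the threshold-to-$\fin$ gadget, while enforcing the lower bounds $L'$ structurally. The delicate point I expect to spend most effort on is verifying the equivalence ``reachable configurations in $\overline{\prestar(\uparrow\fin)}$ of unbounded size'' $\Leftrightarrow$ ``$\cube'$ reachable from $\cube$'': one must check that $\fin$ has no unintended route outside the threshold gadgets, that configurations reached before the target region are either $\fin$-coverable or bounded (so they cannot spuriously inflate $\cSet$), and that the pump state fills only in the target region where $L'$ is already guaranteed. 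Establishing these invariants, and confirming that $\cSet$ is infinite in and only in the positive instances, is the technical heart of the argument.
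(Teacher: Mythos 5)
Your reduction has a fatal structural flaw: it needs $\fin$ to be coverable from a configuration exactly when that configuration is \emph{not} headed into $\cube'$, and this requires testing both the upper \emph{and} the lower bounds of $\cube'$ — but your construction only ever tests the upper bounds. The lower bounds $L'$ are ``maintained as invariants once in the target region,'' yet nothing forces a run to enter the target region: any reachable configuration that happens to respect every upper bound of $\cube'$ while violating some lower bound (hence not witnessing cube-reachability) lies in $\overline{\prestar(\uparrow\fin)}$ and inflates $\cSet$. The degenerate case makes this stark: if $U'(q)=\infty$ for all $q$ but $L'$ is not reachable from $\cube$, your construction has no threshold gadgets at all, $\fin$ is never coverable, and $\cSet = \poststar([\init])$ is infinite — the reduction answers ``negative cut-off'' on a negative cube-reachability instance. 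Testing a violated lower bound (``fewer than $L'(q)$ agents in $q$'') is an upper-bound occupancy test, which is precisely what RBN cannot do; so the quantifier inversion you are attempting (turning the $\exists$-type cube-reachability question into the $\forall\exists$-type almost-sure question) cannot be pushed through this way. Two further problems compound this: (i) the threshold gadget itself is not implementable as stated, since the step relation allows $k=0$ receivers — a broadcast can never ``demand'' $U'(q)$ receivers, and any counting-chain workaround only detects configurations that can \emph{eventually} accumulate $U'(q)+1$ agents in $q$ (because $\prestar$ is predecessor-closed), not configurations that currently violate $U'$; (ii) the surplus agents parked in a dead state yield reachable configurations of unbounded size outside the target region, which must all be $\fin$-coverable to keep $\cSet$ finite in negative instances, yet the target configurations carry the very same parked surplus, so any escape route for parked agents contaminates positive instances too.

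The paper avoids all of this by not reducing from cube-reachability at all. It reduces from the \emph{fixed-configuration almost-sure coverability} problem ($\poststar(C)\subseteq\prestar(\uparrow q_f)$ for a concrete small $C$), whose \PSPACE-hardness comes from deterministic IO nets simulating linear-space Turing machines — there, ordinary coverability coincides with almost-sure coverability because only one execution exists. This keeps the quantifier structure of source and target problems aligned. The remaining work is then purely about replacing the fixed configuration $C=\multiset{q_1,\dots,q_n}$ by the parameterized family $\multiset{h\cdot\init}$: the paper makes $n$ disjoint copies of the RBN so each agent of $C$ lives in its own copy, adds trap transitions sending any configuration with two agents in one copy (or a stray $\init$) straight to $\fin$, and appends a chain $s_1,\dots,s_n$ that absorbs the $h-n$ surplus agents reversibly. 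If you want to salvage a hardness proof, you should start from a problem with the same $\forall$-flavour rather than from reachability.
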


Similar to the cube-reachability problem,
our result on almost-sure coverability also applies to the related model of ASMS. 
This solves an open problem from~\cite{ICALPPatricia}.
For lack of space, we once again defer this discussion to the appendix.

%%%%%%%%%%%%%%%%%%%%%%%%%%%%%%%%%
\section{Application 2: Computation by RBN}\label{sec:computation}

%!TEX root = main.tex
In this section we give another application of our results.
We introduce a model of computation using RBN called \emph{RBN protocols}.
We take inspiration from the extensively-studied model of population protocols~\cite{First-Pop-Prot,Comp-Power-Pop-Prot,DC}.
The reader can consult the above references for more details on population protocols.

%A population protocol is equipped with a set of initial configurations, and each state of the protocol is labelled $0$ or $1$. 
%Intuitively, an initial configuration corresponds to an input. OR

%A population protocol is equipped with a set of initial states, and each state of the protocol is labelled $0$ or $1$. 
%An initial configuration is a distribution of processes on the initial states, and it intuitively corresponds to an input. 
%If the protocol computes  a predicate, 
%then all runs starting from an initial configuration eventually reach only configurations in which all processes are in states with the same label, $0$ or $1$
%(given a certain fairness condition on the runs). 
%That is, all runs from a given ``input" eventually ``agree" on a boolean valued ``output".

%Population protocols compute exactly the semilinear predicates \cite{} \chana{cite Angluin}.
%The main verification problem for population protocols is the \emph{correctness} problem: 
%Given a protocol and a predicate, it asks whether the protocol computes the predicate. 
%This problem was proved to be decidable \cite{} \chana{cite Jav Verif of PP}, and at least as hard as Petri net reachability, i.e. at least Ackermann-hard \cite{} \chana{cite Wojtek or Jérôme}.

%We give ourselves a similar structure of inputs and outputs for RBN. 
In our model, reconfigurable networks of identical, anonymous agents interact 
to compute a predicate $\varphi: \N^k \rightarrow \set{0,1}$. We show that RBN protocols compute exactly the threshold predicates, which we will define more formally below.
%Further, similar to the ``correctness'' problem for population protocols~\cite{DC}, we define a correctness problem for RBN protocols and show that it is \PSPACE-complete.
%These results extend the corresponding results for immediate observation (IO) protocols~\cite{EsparzaRW19}, which
%constitute a subclass of RBN protocols (Section 6.2 of ~\cite{Gandalf21}).

%compare to IOPP: better expressivity (?) but same correctness complexity - towards the end of this section or in the conclusion

%%%%%%%%%%%%%%%%%%%%%%%%%%%%%%%%
\subsection{RBN Protocols}

We introduce our computation model. The notation mimics that of \cite{EsparzaRW19}. 
\bala{By PN19, did you mean Petri Nets 19?}

\begin{definition}
An \emph{RBN protocol} is a tuple $\PP=(Q, \Sigma, \delta, I, O)$
where $(Q, \Sigma, \delta)$ is an RBN, $I=\set{q_1, \ldots, q_k}$ is a set of \emph{input states}, and $O:Q \rightarrow \set{0,1}$ is an \emph{output function}.
\end{definition}

\emph{Configurations} and \emph{runs} of $\PP$ are the same as that of the underlying RBN.
A configuration $C$ is called a \emph{0-consensus} (respectively a \emph{1-consensus}) if $C(q)>0$ implies $O(q)=0$ (respectively $O(q)=1$).
%below a (slightly modified) copy of PN'19
For $b \in \{0,1\}$, a $b$-consensus $C$ is \emph{stable} if every configuration reachable from $C$ is also a $b$-consensus.
A run $C_0 \act{} C_1 \act{} C_2 \cdots$ of $\PP$ is \emph{fair} if it is finite and cannot be extended by any step\chana{same as in PN19, stronger or same than in DC (could change to DC)}
, or if it is infinite and the following condition holds for all configurations $C,C'$: if  $C \act{} C'$ and $C=C_i$ for infinitely many $i \geq 0$, then the step $C \act{} C'$ appears
infinitely along the run.
In other words, if a fair run reaches a configuration infinitely often, then all the configurations reachable in a step from that configuration will be reached infinitely often from it.

A fair run $C_0 \act{} C_1 \act{} \dots$ \emph{converges to $b$} if there is $i \geq 0$ such that $C_{j}$ is a $b$-consensus for every $j \geq i$.
For every $\vec{v} \in \mathbb{N}^k$, 
%with $|\vec{v}| \geq 2$\chana{2 still relevant?} 
let $\Cv$ be the configuration given by $\Cv(q_i) = \vec{v}_i$ for every $q_i \in I$, and $\Cv(q) = 0$ for every $q \in  Q \setminus I$. 
We call $\Cv$ the \emph{initial configuration for input $\vec{v}$}. 
The protocol $\PP$ \emph{computes the predicate} $\varphi \colon \mathbb{N}^k \rightarrow \{0,1\}$, if for every $\vec{v} \in \mathbb{N}^k$, every fair run starting at $\Cv$ converges to $\varphi(\vec{v})$.

%remark: Notice that an RBN protocol may not compute a predicate.
%Indeed some **thing that can go wrong**.
\begin{figure}
\begin{center}
    \begin{tikzpicture}[->, thick]
      \node[place] (a1) {$q_1$};
	  \node[place] (a2) [right =of a1] {$q_2$};
	  \node[place] (a3) [right =of a2] {$q_3$};

      \path[->]
      (a1) edge[bend left = 40] node[above] {$?a$} (a2)
      (a2) edge[bend left = 40] node[below] {$!b$} (a1)
      (a2) edge node[above] {$?b$} (a3)
      (a1) edge[loop above] node[above] {$!a$} (a1)
      ;
      
      \path[->,dashed]
      (a1) edge[bend right = 70] node[below] {$?b$} (a3)
      %(a2) edge[bend right = 40] node[below] {$?c$} (a3)
      (a3) edge[loop above] node[above] {$!b$} (a3)
      ;
    \end{tikzpicture}
\end{center}
\caption{An RBN protocol $\PP$.}
\label{fig:protocol}
\end{figure}

\begin{example}
Adding the dashed line transitions to the RBN of Example \ref{ex:rbn} yields the RBN protocol $\PP=(Q, \Sigma, \delta, I, O)$ illustrated in Figure \ref{fig:protocol}.
The initial state is $q_1$, i.e. $I=\set{q_1}$, 
and the  output function is defined such that $O(q_1)=O(q_2)=0$ and $O(q_3)=1$.
If there is a process in $q_3$, it can ``attract" the rest of the processes there using the new dashed transitions.
As with the  RBN of Example \ref{ex:rbn}, a process can be put in $q_3$ starting from the initial  configuration $\multiset{k \cdot q_1}$ if and only if $k\ge 3$.
This RBN protocol computes the predicate $x \ge 3$: if there are less than $3$ processes originally in $q_1$ then they stay  in states with output $0$, and if there are more, then in a fair run a process eventually enters $q_3$, and eventually the others follow, thus converging to $1$.
\end{example}

%The \emph{correctness problem} asks, given an RBN protocol $\PP$ with  $k$ initial states and a predicate $\varphi: \N^k \rightarrow \set{0,1}$, whether $\PP$ computes $\varphi$.
%By a result from \cite{DC}, there is a characterization of correctness in terms of sets of configurations. 
%For $b \in \set{0,1}$, define the following sets of configurations:
%\begin{itemize}
%\item Let $\cons_b$ be the set of $b$-consensus configurations.
%\item Let $\stable_b$ be the set $\overline{\pre^*\left(\overline{\cons_b}\right)}$ of stable $b$-consensuses. These are the configurations from which one can reach only $b$-consensuses.
%\item Let $\initcube_b$ be the set of initial configurations $\Cv$ for inputs $\vec{v}$ such that  $\varphi(\vec{v})=b$.
%\end{itemize}
%Proposition 2.12 of \cite{DC} entails the following result.
%\begin{restatable}{theorem}{ThmComputeCharac}\label{thm:compute-characterization}
%	An RBN protocol $\PP$ computes predicate $\varphi$ if and only if
%	$
%	\post^*(\initcube_{b}) \subseteq \pre^*(\stable_b )
%	$
%	holds for every $b \in \{0,1\}$.	
%\end{restatable}

%%%%%%%%%%%%%%%%%%%%%%%%%%%%%%%%
\subsection{Expressivity}

In this section, we show that RBN protocols compute exactly the predicates definable by counting sets. A predicate $\varphi:\N^k \rightarrow \set{0,1}$ is \emph{definable by counting sets} if for every $b \in \{0,1\}$, the sets $\set{\vec{v} \ | \ \varphi(\vec{v})=b}$ are counting sets.

For $b \in \set{0,1}$, define the following sets of configurations:
\begin{itemize}
\item Let $\cons_b$ be the set of $b$-consensus configurations.
\item Let $\stable_b$ be the set $\overline{\pre^*\left(\overline{\cons_b}\right)}$ of stable $b$-consensuses. These are the configurations from which one can reach only $b$-consensuses.
\item Let $\initcube_b$ be the set of initial configurations $\Cv$ for inputs $\vec{v}$ such that  $\varphi(\vec{v})=b$.
\end{itemize}

The next lemma states that every predicate computed by a protocol is definable by counting sets.

\begin{restatable}{lemma}{LemAtMostCC}\label{lm:at-most-cc}
Let $\PP$ be a RBN protocol that computes the predicate $\varphi:\N^k \rightarrow \set{0,1}$.
Then for every $b \in \{0,1\}$, the sets $\initcube_b, \cons_b$ and $\stable_b$ are all counting sets.
This entails that $\varphi$ is definable by counting sets.
%\set{\vec{v} \ | \ \varphi(\vec{v})=b}
%$\initcube_1$, the set of configurations representing inputs satisfying $\varphi$,
%and $\initcube_0$, the set of configurations representing inputs not satisfying $\varphi$,
\end{restatable}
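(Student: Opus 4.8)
The plan is to prove that each of the three sets $\initcube_b$, $\cons_b$, and $\stable_b$ is a counting set, and then observe that $\varphi$ being definable by counting sets follows immediately. I would start with the two easy cases. For $\cons_b$, note that a configuration is a $b$-consensus exactly when it places no agent in any state $q$ with $O(q)=1-b$; this is precisely the cube $(L,U)$ with $L\equiv 0$ and $U(q)=0$ for $q$ with $O(q)\neq b$ and $U(q)=\infty$ otherwise, so $\cons_b$ is a single cube and hence a counting set. For $\initcube_b$, I would use the hypothesis that $\PP$ computes $\varphi$. Because every fair run from $\Cv$ converges to $\varphi(\vec v)$, the predicate value $\varphi(\vec v)$ is well-defined, and $\initcube_b = \set{\Cv \ | \ \varphi(\vec v)=b}$. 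The subtlety here is that $\initcube_b$ need not obviously be a counting set from its definition alone — but this set is exactly the intersection of the cube of all initial configurations (those supported on $I$) with $\poststar(\cdot)$-style information; the cleanest route is to express $\initcube_b$ via the stable consensus sets using correctness of the protocol, deferring it until after $\stable_b$ is handled.

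The main work is $\stable_b$. By definition $\stable_b = \overline{\prestar(\overline{\cons_b})}$. I would apply the Closure Corollary (Corollary~\ref{coro:closure}): $\cons_b$ is a counting set (just shown), so $\overline{\cons_b}$ is a counting set by closure under complement, $\prestar(\overline{\cons_b})$ is a counting set by closure under $\prestar$, and finally its complement $\stable_b$ is a counting set again by closure under complement. This is the crux step, and it is where all the heavy machinery of the paper — Theorem~\ref{thm:poststar} and Proposition~\ref{prop:oponconf} feeding into Corollary~\ref{coro:closure} — gets used. The point worth emphasizing is that without the Closure Corollary this would not be routine at all, since $\prestar$ of an infinite set and the two complementations could a priori destroy the counting-set structure; the corollary is exactly what guarantees it does not.

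To finish, I would connect $\initcube_b$ to the sets already shown to be counting sets. Since $\PP$ computes $\varphi$ and a fair run from $\Cv$ converges to $b$ iff $\varphi(\vec v)=b$, correctness of the protocol means that from $\Cv$ one eventually (and along every fair run) reaches a stable $b$-consensus; equivalently, $\Cv$ can reach a configuration in $\stable_b$ and cannot reach one in $\stable_{1-b}$. I would therefore characterize $\initcube_b$ as the intersection of the cube $\mathcal{J}$ of all initial configurations (support contained in $I$, which is plainly a cube) with $\prestar(\stable_b)$, i.e. $\initcube_b = \mathcal{J} \cap \prestar(\stable_b)$. Each ingredient is a counting set — $\mathcal{J}$ trivially, $\prestar(\stable_b)$ by applying $\prestar$-closure to the already-established counting set $\stable_b$ — so closure under intersection (Corollary~\ref{coro:closure} again) gives that $\initcube_b$ is a counting set.

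The last clause, that $\varphi$ is definable by counting sets, is then immediate: by definition $\varphi$ is definable by counting sets precisely when $\set{\vec v \ | \ \varphi(\vec v)=b}$ is a counting set for each $b$, and this set is the projection of $\initcube_b$ onto the input coordinates $I$ (a bijective relabelling of coordinates that preserves the counting-set property). The anticipated main obstacle is the characterization of $\initcube_b$: one must argue carefully from the \emph{convergence} definition of ``computes $\varphi$'' that reachability of a stable consensus captures exactly the inputs with $\varphi(\vec v)=b$, using fairness to rule out the degenerate possibility that some fair run fails to stabilize. Everything else reduces to invoking the Closure Corollary on explicitly described cubes.
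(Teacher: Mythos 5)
Your proof is correct and handles $\cons_b$ and $\stable_b$ exactly as the paper does (a single explicit cube for $\cons_b$, then the Closure Corollary applied to $\stable_b=\overline{\prestar(\overline{\cons_b})}$). The one genuine divergence is your characterization of $\initcube_b$: you use $\initcube_b=\mathcal{J}\cap\prestar(\stable_b)$, whereas the paper uses $\initcube_b=\mathcal{J}\cap\overline{\prestar\bigl(\overline{\prestar(\stable_b)}\bigr)}$, i.e.\ ``every configuration reachable from $\Cv$ can still reach a stable $b$-consensus,'' which is the standard convergence criterion (Proposition~2.12 of~\cite{DC}) that the paper reuses later for the converse direction. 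Your formula is simpler (one $\prestar$ instead of two plus complements) but shifts the burden onto the semantic claim that \emph{mere reachability} of a stable $b$-consensus already forces $\varphi(\vec{v})=b$; this does hold, because any finite run $\Cv\trans{*}C$ with $C\in\stable_b$ extends to a fair run whose tail consists of $b$-consensuses, and a nonempty configuration cannot be both a $b$- and a $(1-b)$-consensus, so the run cannot converge to $1-b$. The forward inclusion ($\varphi(\vec{v})=b$ implies $\Cv\in\prestar(\stable_b)$) needs the observation that a fair run in the finite reachability graph of $\Cv$ settles into a bottom SCC, which must then consist entirely of stable $b$-consensuses; you flag this as the main obstacle without carrying it out, but the paper's proof asserts its own characterization with no more justification, so this is not a gap relative to the paper's standard. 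The only blemish, shared by the paper's formula as well, is the degenerate input $\vec{v}=\vec{0}$, whose (empty) configuration lies in $\prestar(\stable_b)$ for both values of $b$; this does not affect the conclusion that $\initcube_b$ is a counting set, since counting sets are closed under adding or removing a single configuration.
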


\begin{proof}[Proof Sketch]
	Fix a $b \in \{0,1\}$. It is easy to see that $\cons_b$ is a cube.
	Unraveling the definitions of $\initcube_b$ and $\stable_b$, we can express them
	in terms of $\cons_b$ by using boolean operations and $\prestar$. By the Closure Corollary
	(Corollary~\ref{coro:closure}), 
	%we get the desired result.
	they are counting sets. Set $\set{\vec{v} \ | \ \varphi(\vec{v})=b}$ is simply $\initcube_b$ restricted  to $I$,  and so we are done.
\end{proof}

The next lemma states the converse result. It essentially uses the fact that there is a 
sub-class of population protocols called IO protocols which compute exactly the predicates definable by  counting sets (Theorem 7 and Theorem 39 of~\cite{Comp-Power-Pop-Prot,EsparzaRW19}), and
that IO protocols are a sub-class of RBN (Section 6.2 of~\cite{Gandalf21}).

%The next lemma states that for every predicate definable by counting sets, there exists a RBN that computes it.
%To prove this, we use a result from \cite{} \chana{cite Gandalf} that RBN can simulate immediate observation nets. 
%Immediate observation nets \cite{} \chana{cite PN'19?} underly immediate observation population protocols, a subclass of population protocols introduced in \cite{} \chana{cite Angluin}, just like RBN underly RBN protocols.
%In \cite{} \chana{cite Angluin}, Angluin \emph{et al.} show that immediate observation population protocols compute exactly the predicates definable by counting sets.

\begin{restatable}{lemma}{LemAtLeastCC}\label{lm:at-least-cc}
	Let $\varphi:\N^k \rightarrow \set{0,1}$ be a predicate definable by counting sets.
	Then there exists a RBN protocol computing $\varphi$.
\end{restatable}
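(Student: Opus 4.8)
The plan is to prove Lemma~\ref{lm:at-least-cc} by leveraging the existing result for IO protocols rather than constructing an RBN protocol from scratch. The key fact, cited in the paragraph preceding the lemma, is that IO protocols compute exactly the predicates definable by counting sets (Theorem 7 and Theorem 39 of~\cite{Comp-Power-Pop-Prot,EsparzaRW19}), and that IO protocols form a sub-class of RBN (Section 6.2 of~\cite{Gandalf21}). So given a predicate $\varphi$ definable by counting sets, I would first invoke the IO-protocol result to obtain an IO protocol $\mathcal{A}$ computing $\varphi$. Then I would invoke the simulation of IO protocols by RBN to transfer $\mathcal{A}$ into an RBN protocol $\PP$.

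First I would make precise the notion of an IO protocol and recall how it is simulated by an RBN. The simulation in~\cite{Gandalf21} turns each pairwise interaction rule of the IO protocol into a broadcast transition: one agent broadcasts a message encoding its current state, and the (single) receiving agent updates its state according to the interaction, while all other agents ignore the message. The crucial point to verify is that this simulation preserves the computed predicate, namely that fair runs of the RBN protocol converge to the same value as fair runs of the IO protocol. I would carry out this verification by checking that the input states, output function, and consensus/stability notions are preserved under the translation, and that fairness of an RBN run induces fairness of the simulated interaction sequence (and conversely, that every fair interaction sequence is realizable as a fair RBN run).

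The main obstacle I expect is the fairness correspondence. The definition of fairness given in this paper (for infinite runs, every step enabled infinitely often is taken infinitely often) must match up with the fairness condition used in the IO/population-protocol literature, which is typically phrased in terms of configurations reached infinitely often. Since the RBN semantics allows broadcasts with \emph{zero} receivers (the self-loop behaviour seen in the symbolic graph), I would need to argue that such "degenerate" broadcasts do not interfere with convergence, and that the RBN scheduler can always mimic a genuine two-agent interaction of the IO protocol. Concretely, I would show that any fair RBN run of $\PP$ projects onto a fair interaction sequence of $\mathcal{A}$, so that convergence to $\varphi(\vec{v})$ in $\mathcal{A}$ forces convergence to $\varphi(\vec{v})$ in $\PP$.

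Once the fairness correspondence is established, the remainder is routine: the initial configuration $\Cv$ for input $\vec{v}$ in $\PP$ corresponds exactly to the initial configuration of $\mathcal{A}$ on $\vec{v}$ (both place $\vec{v}_i$ agents in each input state $q_i$), and the output function is carried over verbatim. Combining $\mathcal{A}$'s correctness for $\varphi$ with the behaviour-preserving simulation then yields that $\PP$ computes $\varphi$, completing the proof. Together with Lemma~\ref{lm:at-most-cc}, this establishes that RBN protocols compute exactly the predicates definable by counting sets.
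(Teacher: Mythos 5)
Your proposal is correct and follows the same overall decomposition as the paper: obtain an IO protocol for $\varphi$ from the expressivity result of~\cite{Comp-Power-Pop-Prot}, transfer it to an RBN protocol via the simulation of IO nets by RBN from Section 6.2 of~\cite{Gandalf21}, and argue that the simulation preserves the computed predicate. The two arguments diverge only in how that last step is discharged. You plan a direct fairness-transfer argument, projecting fair RBN runs onto fair interaction sequences of the IO protocol and checking that zero-receiver broadcasts do not disturb convergence; this can be made to work (each single IO step is realizable as a single RBN step, so RBN-fairness forces every enabled interaction to be taken infinitely often in the projection, and the projected sequence visits a superset of the configurations of the RBN run), but it is exactly the laborious part, and it requires reconciling the paper's step-based fairness with the configuration-based fairness of the population-protocol literature. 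The paper bypasses fairness entirely: it invokes Proposition 2.12 of~\cite{DC}, which characterizes ``$\PP$ computes $\varphi$'' for any well-behaved generalized protocol purely in terms of reachability, namely $\poststar(\initcube_b) \subseteq \prestar(\stable_b)$ for each $b \in \{0,1\}$ (RBN protocols are well-behaved because the number of agents is invariant along runs, so the reachability graph from any configuration is finite). Since the simulation preserves the relation $\act{*}$ exactly, and $\stable_b = \overline{\prestar(\overline{\cons_b})}$ is itself a reachability-defined set, this inclusion transfers verbatim from the IO protocol to the RBN protocol with no run-level correspondence needed. Your route buys a self-contained argument; the paper's buys brevity and sidesteps precisely the fairness bookkeeping you identify as the main obstacle.
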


By Lemma \ref{lm:at-most-cc} and Lemma \ref{lm:at-least-cc}, 
we get our result.

\begin{theorem}
\label{thm:rbn-compute}
RBN protocols compute exactly the predicates definable by counting sets.
\end{theorem}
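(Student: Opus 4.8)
The plan is to observe that Theorem~\ref{thm:rbn-compute} is precisely the conjunction of the two preceding lemmas: the word ``exactly'' encodes a two-way containment between the class of predicates computed by RBN protocols and the class of predicates definable by counting sets, and each containment is supplied by one of the lemmas. So the only real work of the theorem is to package the two inclusions together; the substance lives in the lemmas themselves, and I would simply cite them.

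For the \emph{soundness} inclusion---every predicate computed by an RBN protocol is definable by counting sets---I would appeal to Lemma~\ref{lm:at-most-cc}. Given a protocol $\PP$ computing $\varphi$, fix $b \in \set{0,1}$. The $b$-consensus set $\cons_b$ is a cube, the stable $b$-consensus set $\stable_b = \overline{\prestar(\overline{\cons_b})}$ is obtained from $\cons_b$ by a complement, a $\prestar$, and another complement, and $\initcube_b$ is likewise a boolean combination involving $\poststar$ and $\stable_b$. By the Closure Corollary (Corollary~\ref{coro:closure}), each of these is a counting set, and restricting $\initcube_b$ to the input coordinates yields that $\set{\vec{v} \mid \varphi(\vec{v}) = b}$ is a counting set. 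Hence every computed predicate is definable by counting sets.

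For the \emph{completeness} inclusion---every counting-set-definable predicate is computed by some RBN protocol---I would appeal to Lemma~\ref{lm:at-least-cc}. The intended route is to reuse the known fact that the sub-class of IO protocols computes exactly the counting-set-definable predicates, together with the fact that IO protocols can be simulated by RBN protocols; composing these gives an RBN protocol for any such $\varphi$. Combining the two inclusions then yields the claimed equality, which is the theorem.

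The main obstacle is not in the theorem's assembly but in the completeness lemma: one must verify that the simulation of an IO protocol by an RBN protocol respects the \emph{fairness} semantics of RBN protocols, so that every fair RBN run converges to the same consensus value the IO protocol would reach. In particular, the reconfigurable broadcast semantics must be shown not to introduce spurious fair runs that escape the target consensus. Once fairness is matched on both sides, the convergence behaviour transfers and the lemma---and therefore the theorem---follows immediately.
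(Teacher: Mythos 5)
Your proposal matches the paper exactly: the theorem's proof is literally the conjunction of Lemma~\ref{lm:at-most-cc} (soundness, via the Closure Corollary) and Lemma~\ref{lm:at-least-cc} (completeness, via simulating IO protocols), and your assembly and your identification of the fairness-preservation issue in the simulation as the real work are both in line with the paper. One tiny slip in your recap of the soundness lemma: the paper defines $\initcube_b$ as $\initcube \cap \overline{\pre^*(\overline{\pre^*(\stable_b)})}$, i.e.\ using $\prestar$ rather than $\poststar$, but this does not affect the theorem's proof.
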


%%%%%%%%%%%%%%%%%%%%%%%%%%%%%%%%%
%\subsection{Correctness Complexity}
%
%By our results, we also get that the correctness problem for RBN protocols is \PSPACE-complete.
%The PSPACE Theorem (Theorem~\ref{thm:pspace-lemma}) gives us the upper bound, and the 
%simulation of IO nets by RBN  gives us the lower bound.
%%now that we know expressivity, we write down predicates as counting sets (see representation of preds discussion above 2.6  in DC).
%
%\begin{restatable}{theorem}{ThmRBNCorrect}\label{thm:rbn-correct}
%	The correctness problem for RBN protocols is \PSPACE-complete.	
%\end{restatable}

%%%%%%%%%%%%%%%%%%%%%%%%%%%%%%%%%
%\section{Conclusion}

\section*{Acknowledgements}
We thank Nathalie Bertrand and Javier Esparza for many helpful discussions.

\bibliographystyle{splncs04}
\bibliography{bib}

\appendix
%!TEX root = main.tex
\section{Proofs for section~\ref{sec:main-sec}}

\LemmaRuns*

\begin{proof} \bala{Can you check if this proof is okay and elaborate if needed?}\chana{done}
Let $C \trans{t+t_1,\dots,t_n} C'$, 
where $t = (p,!a,p')$ and each $t_i = (p_i,?a,p_i')$. 
Let $\theta=(v,S)$ such that $C \in \supp{\theta}$.
Let $t_1, \ldots, t_k$
%$(p_1, ?a, p_1'), \ldots, (p_k, ?a, p_k')$ 
be the subset of receive transitions 
such that $p_1,\ldots,p_k \notin S$, 
and let $t_{k+1}, \ldots, t_n$
%$(p_{k+1}, ?a, p_{k+1}'), \ldots, (p_n, ?a, p_n')$ 
be the subset of receive transitions 
such that $p_{k+1},\ldots,p_n \in S$.	
We have two cases. 
\begin{itemize}
\item  If $p \notin S$, then we  perform a ``broadcast from $v$".
Let $v' = v - \sum_i \vec{p_i} + \sum_{i=1}^k \vec{p_i'} - \vec{p} + \vec{p'}$.
Let $S'= S \cup \set{ p_i' |  i  \in \set{k+1,n} }$.
\item  If $p \in S$, then we  perform a ``broadcast from $S$".
Let $v' = v - \sum_i \vec{p_i} + \sum_{i=k+1}^n \vec{p_i'}$.
Let $S'= S \cup \set{p'} \cup  \set{ p_i' |  i  \in \set{k+1,n} }$.
\end{itemize}
Since $C \trans{t+t_1,\dots,t_n} C'$,
 $C(q)>0$ for $q \in \set{p,p_1,\ldots,p_n}$ and thus $v'$ is well-defined.
Let $\theta'$ be $(v', S')$. 
By our definition of $v', S'$, there is an edge 
$\theta \rightsquigarrow^a \theta'$ in the symbolic graph.
\end{proof}

\LemmaNF*

\begin{proof}
	
	Let $\theta = \theta_0 \actsymb{} \theta_1 \actsymb{} \theta_2 \actsymb{} \dots \theta_{m-1} \actsymb{} \theta_m = \theta'$ be the path between $\theta$ and $\theta'$.
	We proceed by induction on $m$. The claim is clearly true for $m = 0$. Suppose $m > 0$ and the claim is true for $m-1$. By induction hypothesis, we can assume 
	that the path $\theta_0 \actsymb{} \theta_1 \actsymb{} \dots \actsymb{} \theta_{m-1}$ is already in normal form.
	
	Let each $\theta_i = (v_i,S_i)$. Let $l$ be the number of bad pairs in the path between $\theta_0$ and $\theta_m$. If $l = 0$, then the path is already in normal form and we are done.
	Suppose $l > 0$ and let $(w,w')$ be a bad pair. Since the path between $\theta_0$ and $\theta_{m-1}$ is already in normal form, it has to be the case that $w' = m$.
	Hence, we have $Z := (S_{w} \setminus S_{w+1}) \cap S_m \neq \emptyset$.
	
	By Proposition~\ref{prop-monotone}, the following is a valid path: $(v_{w},S_{w}) \actsymb{} (v_{w+1},S_{w+1} \cup Z) \actsymb{} (v_{w+2},S_{w+2} \cup Z) \dots (v_{m-1}, S_{m-1} \cup Z)\actsymb{} (v_m, S_m \cup Z) = (v_m,S_m)$. 
	Let $\theta'_j := \theta_j$ if $j \le w$ and $(v_j,S_j \cup Z)$ otherwise. 
	Hence, we get a path $\theta'_0 \actsymb{} \theta'_1 \actsymb{} \dots \theta'_{m-1} \actsymb{} \theta'_m$.
	
	Let each $\theta'_i = (v_i',S_i')$. We first claim that the path between $\theta'_0$ and $\theta'_{m-1}$ is in normal form. 
	Indeed, suppose there exists $0 \le i < j \le m-1$ such that $(S_i' \setminus S_{i+1}') \cap S_j' \neq \emptyset$. There are four cases:
	\begin{itemize}
		\item $i \le w$ and $j \le w$ : In this case $S_i' \setminus S_{i+1}' = S_i \setminus S_{i+1}$ and $S_j' = S_j$, and since the path between $\theta_0$ and $\theta_{m-1}$ is in normal form, this case cannot happen.
		\item $w < i$ and $w < j$: In this case $S_i' \setminus S_{i+1}' = S_i \setminus S_{i+1}$ and $S_j' = S_j \cup Z$. Since the path between $\theta_0$ and $\theta_{m-1}$ is in normal form, this should then 
		imply that $(S_i \setminus S_{i+1}) \cap Z \neq \emptyset$. By definition this means that $(S_w \setminus S_{w+1}) \cap S_i \neq \emptyset$ which contradicts the fact
		that the path between $\theta_0$ and $\theta_{m-1}$ is in normal form.
		\item $i < w$ and $w < j$: Similar to the case before, this should then imply that $(S_i \setminus S_{i+1}) \cap Z \neq \emptyset$. 
		By definition this means that $(S_i \setminus S_{i+1}) \cap S_w \neq \emptyset$ which contradicts the fact
		that the path between $\theta_0$ and $\theta_{m-1}$ is in normal form.
		\item $i = w$ and $w < j$: In this case $S_i' \setminus S_{i+1}' = S_i \setminus (S_{i+1} \cup Z)$ and $S_j' = S_j \cup Z$.  This
		would then imply that $(S_i \setminus S_{i+1}) \cap S_j \neq \emptyset$ which contradicts the fact
		that the path between $\theta_0$ and $\theta_{m-1}$ is in normal form.
	\end{itemize}
	
	It is then clear by construction, that this new path from $\theta'_0 := \theta_0$ to $\theta'_m := \theta_m$ has at most $l-1$ bad pairs only.
	Hence, we now have a path from $\theta_0$ to $\theta_m$ such that the prefix of length $m-1$ is in normal form and the number of bad pairs has been strictly reduced to $l-1$.
	Repeatedly applying this procedure, leads to a path in normal form between $\theta_0$ and $\theta_m$.
\end{proof}

\TheoremUpperBoundN*

\begin{proof}
	Suppose $\theta \actsymb{*} \theta'$. If the length of the path is 0, then there is nothing to prove.
	Hence, we restrict ourselves to the case when the length of the path is bigger than 0.
	By Lemma~\ref{lemma-nf}, there is a path in normal from from $\theta$ to $\theta'$ (say)
	$\theta = \theta_0 \actsymb{} \theta_1 \actsymb{} \theta_2 \dots \theta_{m-1} \actsymb{} \theta_m = \theta'$ with each
	$\theta_i := (v_i,S_i)$. 
	
	Let $N_0 = 0$ and let $N_i = (N_{i-1}+1) \cdot (|S_{i-1} \setminus S_i|+1)$ for every $1 \le i \le m$. 
	In Lemma 6 of~\cite{Delzanno12}, the following fact has been proved: 
	\begin{quote}
		For every $1 \le i \le m$ and for every $C' \in \supp{\theta_i}_{N_i+1}$, there exists $C \in \supp{\theta_{i-1}}_{N_{i-1}+1}$ such that
		$C \act{*} C'$.
	\end{quote}
	
	This immediately proves that for all $C' \in \supp{\theta'}_{N_m + 1}$, there exists $C \in \supp{\theta}$ such that $C \act{*} C'$. If we prove $N_m \le k \times (2k)^{|Q|} \times (|Q|+1)^{|Q|+1}$, then the proof of the theorem will be complete.
	
	Notice that since the path between $\theta_0$ and $\theta_m$ is in normal form, the number of indices $i$ such that $|S_{i-1} \setminus S_i| > 0$
	is at most $|Q|$. Indeed, suppose $q \in S_{i-1} \setminus S_i$ for some $i$. Then by the normal form property, $q \notin S_j$ for any $j \ge i$. 
	Hence, in the rest of the path $q$ does not appear in the abstract part at all. By definition of the edges in the symbolic graph, if $(v,\emptyset) \actsymb{} (v',S')$ is an edge,
	then $S' = \emptyset$. These two facts then imply that
	the number of indices $i$ such that $|S_{i-1} \setminus S_i| > 0$
	is at most $|Q|$.
	
	It then follows that except for at most $|Q|$ indices, each index $N_i$ is obtained from $N_{i-1}$ by simply adding 1
	and in the remaining indices, $N_i$ is obtained from $N_{i-1}$ by adding 1 and then multiplying by a number which is at most $|Q|+1$.
	The way to maximize $N_m$ by this procedure is by letting $N_i = N_{i-1} + 1$ for every $1 \le i \le m-|Q|$ and then
	letting $N_i = (N_{i-1}+1) \cdot (|Q|+1)$ for every $m-|Q| < i \le m$. This gives an upper bound 
	of $(m-|Q|+1)|Q|(|Q|+1)^{|Q|}$ for $N_m$. Since $m$ is itself the length of the path between $\theta_0$ and $\theta_m$,
	$m$ is upper bounded by the number of symbolic configurations in $\gr_k$ which is at most $|\nn_k^{Q}| \times 2^{|Q|} \le k \times k^{|Q|} \times 2^{|Q|}$.
	Overall we get that $N_m \le k \times (2k)^{|Q|} \times (|Q|+1)^{|Q|+1}$.
\end{proof}

\section{Proofs for Section~\ref{sec:almost-sure}}

\LemAlmostSureUpperBound*

\begin{proof}
	By Theorem~\ref{thm:charac-almost-sure-cov}, it follows that a given RBN has a negative cut-off iff $\cSet = \poststar([\init]) \cap \overline{\prestar(\uparrow \fin)}$ is infinite. 
	We have already seen that $\cSet$ is a counting set such that the norm of $\cSet$ is at most $N := 2^{p(|Q|)}$ for some fixed
	polynomial $p$. 
	
	Let $\cup_i \cube_i$ be a counting constraint for $\cSet$ which minimizes its norm
	and let each $\cube_i = (L_i,U_i)$. Hence, $L_i(q) \le N$ for every state $q$. Further, 
	$\cSet$ is infinite iff there is an index $i$ and a state $q$ such that $U_i(q) = \infty$.
	
	Using these two facts, we claim that $\cSet$ is infinite iff there is a state $q$ and a configuration $C \in \cSet$ such that
	$C(q') \le N$ for every $q' \neq q$ and $C(q) = N+1$. Indeed, if $\cSet$ is infinite, then there is an $i$ and a $q$ such that $U_i(q) = \infty$. 
	If we let $C$ be such that $C(q') = L(q') \le N$ for every $q' \neq q$
	and $C(q) = N+1$ then $C \in \cube_i \in \cSet$. 
	
	For the other direction, suppose such a state $q$ and a configuration $C$ exists. Since $C \in \cSet$ we have that $C \in \cube_i$ for 
	some $i$. Now, since $C(q) = N+1$ and the norm of $\cube_i$ is at most $N$,
	it must be the case that $U_i(q) = \infty$. This then proves that $\cSet$ is infinite.
	
	Hence, to check if $\cSet$ is infinite, we just have to guess a state $q$ and a configuration $C$ such that $C(q') \le N$ for every $q' \neq q$ and $C(q) = N+1$ and check if $C \in \cSet$.
	Since guessing $C$ can be done in polynomial space (by representing every number in binary), by the \PSPACE{} Theorem (Theorem \ref{thm:pspace-lemma}), we can check if $C \in \cSet$ in polynomial space as well,
	which concludes the proof of the theorem.
\end{proof}

\LemAlmostSureLowerBound*

\begin{proof}

	We reduce from the fixed-configuration almost-sure coverability problem for RBN. 
	In this problem, we are given a RBN $\RBN = (Q,\Sigma,\delta)$, a configuration $C$ of $\RBN$
	such that $2 \le |C| \le |Q|$ and a state $q_f \in Q$ and we are asked to decide if 
	$C$ can almost-surely cover $q_f$, i.e., if $\poststar(C) \subseteq \prestar(\uparrow q_f)$.
	\begin{comment}
	We reduce from the coverability problem for \emph{deterministic} RBN systems which is known to be \PSPACE{}-hard. 
	A \emph{deterministic} RBN system is a pair $(\RBN,C)$ where $\RBN = (Q,\Sigma,R)$ is a RBN and $C$ is a configuration of $\RBN$ such that
	$|C| \le |Q|$ and for any $C' \in \poststar(C)$ there is at most one $C''$ such that $C'' \neq C'$ and $C' \act{} C''$.
	Intuitively, this means that after removing self-loops in the configuration graph starting at $C$, the resulting graph
	is simply a path or a cycle.
	
	We say that a deterministic RBN system $(\RBN,C)$ can cover a state $q$ of $\RBN$ if there exists $C'$ such that
	$C \act{*} C'$  and $C'(q) > 0$. In the coverability problem for deterministic RBN systems, we are given as input a 
	deterministic RBN system $(\RBN,C)$ and a state $q$ and we are asked to decide if $C$ can cover $q$.
	\end{comment}
	This problem is \PSPACE{}-hard and the proof is as follows: In Theorem 4 of~\cite{EsparzaRW19},
	the authors give a reduction from the acceptance problem for linear-space bounded Turing machines
	to the problem of covering a state $q_f$ from a given initial configuration $C$ for a subclass of RBN called IO nets which have the following property:
	Starting from the initial configuration of the IO net, there is exactly one 
	execution which is possible. It then follows that the covering the state $q_f$ from $C$
	is equivalent to almost-surely covering $q_f$ from $C$. Since IO nets are a subclass of 
	RBN (Section 6.2 of~\cite{Gandalf21}), it follows that the fixed-configuration
	almost-sure coverability problem for RBN is \PSPACE-hard.
	
	\begin{comment}
	Note that since $(\RBN,C)$ is deterministic, it follows that $C$ can cover $q$ iff $C$ can almost-surely cover $q$, i.e.,
	$\poststar(C) \subseteq \prestar(\uparrow q)$. Hence, the coverability problem for deterministic RBN systems is equivalent 
	to the almost-sure coverability problem for determinisitic RBN systems, which is the problem of checking whether
	$\poststar(C) \subseteq \prestar(\uparrow q)$ for a given deterministic RBN system $(\RBN,C)$ and a state $q$.
	\end{comment}
	
	We now give a reduction from the fixed-configuration almost-sure coverability problem for RBN to the 
	problem of checking if a given RBN has a positive cut-off.
	Let $(\RBN, C, q_f)$ be an instance of the fixed-configuration almost-sure coverability problem for RBN such that $\RBN = (Q,\Sigma,\delta)$ and $C = \multiset{q_1,\dots,q_n}$. 
	The required reduction proceeds in three stages. \\
	
	\textbf{First stage: } We construct a new RBN $\RBN_1 = (Q_1,\Sigma_1,\delta_1)$ as follows: $Q_1 = Q \times \{1,2,\dots,n\} \cup \{\fin\}$ where
	$\fin$ is a new state, $\Sigma_1 = \Sigma \cup \{@\}$ where $@$ is a new letter 
	and $\delta_1 = \{(p,i) \act{!a} (q,i) : p \act{!a} q \in \delta\} \cup \{(p,i) \act{?a} (q,i) : p \act{?a} q \in \delta\} \cup 
	\{(q_f,i) \act{!@} \fin\}$. For each $i$, the set $Q \times \{i\}$, will be called the $i^{th}$ copy of $\RBN$.
	
	Intuitively, $\RBN_1$ contains $n$ copies of $\RBN$ along with a new state $\fin$ such that it is always possible to move
	from any copy of the state $q_f$ to the new state $\fin$. Note that since $n = |C| \le |Q|$, this construction takes polynomial time.
	
	Let $D := \{(q_1,1),(q_2,2),\dots,(q_n,n)\}$. It is straightforward to verify that 
	$C$ can almost-surely cover $q_f$ in $\RBN$ iff
	$D$ can almost-surely cover $\fin$ in $\RBN_1$.\\
	
	\textbf{Second stage: } We now construct a second RBN $\RBN_2 = (Q_2, \Sigma_2, \delta_2)$ as follows: 
	$Q_2 = Q_1 \cup \{\init\}$ where 
	$\init$ is a new state, 
	$\Sigma_2 = \Sigma_1 \cup \{\#\} \cup \{\$_1,\dots,\$_n\}$ where $\#,\$_1,\dots,\$_n$ are $n+1$ new letters and 
	$\delta_2$ contains all the transitions in $\delta_1$ and also the following transitions:
	\begin{itemize}
		\item Type 1 transitions: $\init \act{!\#} (q_i,i)$ and for each $1 \le i \le n$.
		
		\item Type 2 transitions: For every $p \in Q$ and $1 \le i \le n$, we have the transitions 
		$(p,i) \act{!\$_i} (p,i)$ and $(p,i) \act{?\$_i} \fin$.
	\end{itemize}
	
	By combining the Type 1 and Type 2 transitions, it is very easy to verify the following facts:
	\begin{itemize}
		\item Fact 1: If $C' \ge \multiset{(p,i),(q,i)}$ for some $i$ and some $p,q$, then $C'$ can cover $\fin$.
		\item Fact 2: If $C' \ge \multiset{2 \cdot \init}$ or $C \ge \multiset{\init, (p,i)}$ for some $i$ and some $p$, then $C'$ can cover $\fin$.
		\item Fact 3: $n+1$ is a positive cut-off for $\RBN_2$.
		\item Fact 4: If $C' \act[\RBN_2]{*} C''$ is a run such that $C'(\init) = 0$ and $C'$ does not contain two processes in the same
		copy of $\RBN$, then no transitions of type 1 or type 2 could have been fired along this run. Consequently, we have
		$C' \act[\RBN_1]{*} C''$. 
	\end{itemize}
	
	We now claim that 
	\begin{quote}
		$\multiset{n \cdot \init}$ can almost-surely cover $\fin$ in $\RBN_2$ iff $D$ can almost-surely cover $\fin$ in $\RBN_1$.
	\end{quote}
	
	Suppose $\multiset{n \cdot \init}$ can almost-surely cover $\fin$ in $\RBN_2$. We want to show that $D$ can almost-surely cover $\fin$ in $\RBN_1$.
	To do this, we have to show that if $D \act[\RBN_1]{*} C'$, then $C'$ can cover $\fin$ in $\RBN_1$.
	Notice that $C'(\init) = 0$ and $C'$ does not contain two processes in the same copy of $\RBN$.
	
	Notice that, by using the type 1 transitions, we have $\multiset{n \cdot \init} \act[\RBN_2]{*} D$ and so we have $\multiset{n \cdot \init} \act[\RBN_2]{*} D \act[\RBN_2]{*} C'$. By assumption, this means that $C' \act[\RBN_2]{*} C''$ with $C''(\fin) > 0$.
	By Fact 4, we have $C' \act[\RBN_1]{*} C''$ and so $C'$ can cover $\fin$ in $\RBN_1$.
	%Indeed, for a rule of type 1 to have been fired, there must have been a process at $\init$ at some point during the run, 
	%but $C'$ has no such process and 
	%there is no incoming rules to the state $\init$. Similarly, for a rule of type 2 to have been fired, there must have been a pair of processes
	%in the same copy of $\RBN$, but $C'$ does not have any such pair and there is no rule which transfers a process from one copy of $\RBN$ to 
	%another copy of $\RBN$. This means that the run between $C'$ and $C''$ is actually a run in $\RBN_1$ and so $C'$ can cover $\fin$ in $\RBN_1$.
	
	Suppose $D$ can almost-surely cover $\fin$ in $\RBN_1$. We want to show that $\multiset{n \cdot \init}$ can almost-surely cover $\fin$ in $\RBN_2$.
	To do so, we have to show that if $\multiset{n \cdot \init} \act[\RBN_2]{*} C'$, then $C'$ can cover $\fin$ in $\RBN_2$.
	By means of Fact 1 and Fact 2, it suffices to show that this is the case when $C'$ contains exactly one process in each copy of $\RBN$.
	In this case, we will prove that $D \act[\RBN_1]{*} C'$ and so by assumption, this means that $C'$ can cover $\fin$ in $\RBN_1$ and 
	hence in $\RBN_2$ as well.
	
	All that remains to show that is that $D \act[\RBN_1]{*} C'$, which is what we do now. Consider the run $\multiset{n \cdot \init} \act[\RBN_2]{} C_1 \act[\RBN_2]{} C_2 \dots C_k \act[\RBN_2]{} C'$.
	Since $C'$ has exactly one process in each copy of $\RBN$, it must be the case that along this run, no type 2 transitions were fired,
	and each type 1 transition was fired exactly once, i.e., for each $1 \le i \le n$, the transition $\init \act{!\#} (q_i,i)$ occured exactly once
	along this run. Notice that if for some $j$, we have $C_j \act[\RBN_2]{r_j + r^1_j, \dots, r^{l_j}_j} C_{j+1} \act[\RBN_2]{r_{j+1} + r^1_{j+1},\dots,r^{l_{j+1}}_{j+1}} C_{j+2}$ where $r_j$ is not a type 1 transition and $r_{j+1}$ is a type 1 transition, then 
	$C_j \act[\RBN_2] {r_{j+1} + r^1_{j+1},\dots,r^{l_{j+1}}_{j+1}} C'' \act[\RBN_2]{r_j + r^1_j, \dots, r^{l_j}_j} C_{j+2}$. 
	This means that we can push all the occurences of type 1 transitions along this run to the beginning. But then notice that after the first $n$ steps
	we would have reached the configuration $D$ from $\multiset{n \cdot \init}$. This means that $D \act[\RBN_2]{*} C'$
	and by Fact 4, we have $D \act[\RBN_1]{*} C'$, which finishes the proof.
	
	Notice that by Fact 3, we have actually shown the following
	\begin{quote}
		Fact 5: $D$ can almost-surely cover $\fin$ in $\RBN_1$ iff $\multiset{n \cdot \init}$ can almost-surely cover $\fin$ in $\RBN_2$ iff
		$n$ is a positive cut-off for $\RBN_2$.
	\end{quote}

	\textbf{Third stage: } We now construct our final RBN $\RBN_3 = (Q_3,\Sigma_3,\delta_3)$ as follows: $Q_3 = Q_2 \cup \{s_1,s_2,\dots,s_n\}$ where
	$s_1,\dots,s_n$ are $n$ new states, 
	$\Sigma_3 = \Sigma_2 \cup \{a_1,\dots,a_n,b\}$ where $a_1,\dots,a_n,b$ are $n+1$ new letters and 
	$\delta_3$ contains all the transitions in $\delta_2$ and also the following transitions:
	
	\begin{itemize}
		\item Type 3 transitions: For each $i \in \{1,\dots,n\}$, we have $(q_i,i) \act{!a_i} (q_i,i)$ and $s_{i-1} \act{?a_i} s_i$. (Here and in the sequel, $s_0$ is taken to be $\init$).
		\item Type 4 transitions: For each $i \in \{1,\dots,n-1\}$, we have $s_i \act{!b} \init$.
	\end{itemize}
	
	Since $s_n$ is a sink state which does not broadcast anything and since the only way to reach $s_n$ is through $s_{n-1}$, it is easy to verify the following:
	\begin{itemize}
		\item Fact 6: Suppose $C' \act[\RBN_3]{*} C''$ such that $C'(s_n) = i_n$ and $C''(s_n) = j_n$.
		Then $i_n \le j_n$ and $C' - i_n \cdot s_n  \act[\RBN_3]{*} C'' - j_n \cdot s_n + (j_n - i_n) \cdot s_{n-1} \act[\RBN_3]{*} 
		C'' - j_n \cdot s_n + (j_n - i_n) \cdot \init$.
		%\item Fact 6: If $C'$ and $C''$ are configurations of $\RBN_2$ such that $C' + \multiset{i_1 \cdot s_1,\dots, i_n \cdot s_n}
		%\act[\RBN_3]{*} C'' + \multiset{j_1 \cdot s_1,\dots, j_n \cdot s_n}$ with $i := \sum_{1 \le l \le n} i_l, j := \sum_{1 \le l \le n} j_l$,
		%then $C' + \multiset{i \cdot \init} \act[\RBN_2]{*} C'' + \multiset{j \cdot \init}$, $C' + \multiset{i-j \cdot \init} \cdot \init \act[\RBN_2]{*} C''$. 
	\end{itemize}
	
	We now claim that
	\begin{quote}
		There is a positive cut-off for $\RBN_3$ iff $n$ is a positive cut-off for $\RBN_2$.
	\end{quote}
	
	Suppose $n$ is a positive cut-off for $\RBN_2$. We claim that $n$ is also a positive cut-off for $\RBN_3$. To show this,
	we have to prove that for all $h \ge n$, if $\multiset{h \cdot \init} \act[\RBN_3]{*} C'$, then $C'$ can cover $\fin$ in $\RBN_3$.
	Let $C'(s_l) = i_l$ for every $1 \le l \le n$ and let $i = \sum_{1 \le l \le n} i_l$.
	We consider two cases:
	\begin{itemize}
		\item Case 1: $C'(s_n) = 0$. Let $\tilde{C} = C' - (\sum_{1 \le l \le n} i_l \cdot s_l) + i \cdot \init$. Notice that $\tilde{C}$ can be reached from $\multiset{h \cdot \init}$
		in $\RBN_2$ - Simply use the same run from $\multiset{h \cdot \init}$ to $C'$,
		but remove all the type 3 and type 4 transitions. By assumption then, $\tilde{C}$ can 
		cover $\fin$ in $\RBN_2$ and so in $\RBN_3$ as well.
		
		Further, notice that $C'$ can reach $\tilde{C}$ in $\RBN_3$ means of type 4 transitions.
		This means that $C'$ can also cover $\fin$ in $\RBN_3$.
		
		\item Case 2: $C'(s_n) > 0$. Let $h' = h - i_n$. By Fact 6, we have $\multiset{h' \cdot \init} \act[\RBN_3]{*} C' - i_n \cdot s_n$.
		If we show that $h' \ge n$, then we can apply the same argument as Case 1 to finish this case as well.
		Indeed, for a process to reach the state $s_n$, it is easy to see by construction that, in $C'$, there should be at least 
		one process in each copy of $\RBN$. Hence, there are at least $n$ processes in $C'$ which are not in $s_n$ and so $h' \ge n$.
	\end{itemize}
	
	Suppose there is a positive cut-off for $\RBN_3$. We need to show that $n$ is a positive cut-off for $\RBN_2$.
	By Fact 5, it suffices to show that $D$ can almost-surely cover $\fin$ in $\RBN_1$. To show this, we need
	to show that if $D \act[\RBN_1]{*} C'$ then $C'$ can cover  $\fin$ in $\RBN_1$. 
	
	Let $N$ be the positive cut-off for $\RBN_3$ and let $h \ge \max\{n,N\}$. By assumption $\multiset{h \cdot \init}$ can almost-surely
	cover $\fin$ in $\RBN_3$. By using Type 1 and Type 3 transitions, it is easy to see that, $\multiset{h \cdot \init} \act[\RBN_3]{*} 
	D + (h-n) \cdot s_n$ and so $\multiset{h \cdot \init} \act[\RBN_3]{*} C' + (h-n) \cdot s_n$.
	By assumption, $C' + (h-n) \cdot s_n$ can cover $\fin$ in $\RBN_3$ and so we have a run $C' + (h-n) \cdot s_n \act[\RBN_3]{*} C''$ with $C''(\fin) > 0$.
	By induction on the run, it is easy to prove that along this run, all the configurations $\tilde{C''}$ satisfy 
	$\tilde{C''}(s_n) = h-n$ and $\tilde{C''}(s_i) = 0$ for every $0 \le i \le n-1$.
	Hence, by Fact 6, we have $C' \act[\RBN_3]{*} C'' - (h-n) \cdot s_n$. Notice that along this run, there is no possibility of
	firing any transition of type 1, 2 or 4. Further, if a transition of type 3, i.e., a transition of the form $(q_i,i) \act{!a_i} (q_i,i)$ is fired, 
	then there could have been no process which received that message. It follows that transitions of type 3 do not change
	the configuration along this run. Hence, we can assume that no transitions belonging to type 3 are fired along this run.
	This then implies that $C' \act[\RBN_1]{*} C'' - (h-n) \cdot s_n$ and so $C'$ can cover $\fin$ in $\RBN_1$.

	This chain of constructions then proves the desired result.
\end{proof}

\section{Proof for section~\ref{sec:computation}}

%\ThmComputeCharac*
%
%\begin{proof}
%	Proposition 2.12 of~\cite{DC} states the above result for “well-behaved generalized
%	protocols” (Definition 2.1 of~\cite{DC}). Fix an arbitrary RBN protocol $\prot$. 
%	By definition 2.1 of~\cite{DC}, it is a generalized
%	protocol by setting $Conf$ to be the set of configurations of $\prot$, $\Sigma$ to
%	be its set of states, and $Step$ to be the step relation of the underlying RBN. 
%	By definition 2.8 of~\cite{DC}, it is well-behaved, i.e.,
%	every fair execution eventually ends up in a bottom strongly connected component of the reachability graph. 
%	This is because the number of processes does not change along a run, so the reachability graph
%	from any configuration is finite. Proposition 2.12 of~\cite{DC} then gives the required result.
%\end{proof}

\LemAtMostCC*

\begin{proof}
	Let $\PP=(Q, \Sigma, \delta, I, O)$ be a RBN protocol computing $\varphi$ and let $b \in \{0,1\}$.
	First we show that $\stable_b$ is a counting set.
	The set $\cons_b$ is equal to  the cube such that there are 0 processes in states $q$ with $O(q)= 1-b$ (i.e. an upper and a lower bound of $0$), 
	and an arbitrary number of processes elsewhere (i.e. an upper bound of $\infty$ and a lower bound of $0$). 
	By the Closure Corollary \ref{coro:closure}, 
	$\stable_b=\overline{\pre^*\left(\overline{\cons_b}\right)}$ is a counting set.
	
	Let $\initcube$ be the counting set of initial configurations defined by
	the cube which puts an arbitrary number of processes in initial states of $I$, 
	and $0$ elsewhere.
	The set $\initcube \cap \overline{\pre^*(\overline{\pre^*(\stable_b)})}$ is the set of initial configurations from which all runs of $\PP$ converge to $b$.  
	By the Closure Corollary \ref{coro:closure}, it is a counting set. 
	Since $\PP$ computes $\varphi$, by definition $\initcube_b=\initcube \cap \overline{\pre^*(\overline{\pre^*(\stable_b)})}$.
	The set $\set{\vec{v} \ | \ \varphi(\vec{v})=b}$ is equal to $\initcube_b$ restricted  to the initial states  $I$,  and so we are done.
\end{proof}

\LemAtLeastCC*

\begin{proof}
	To prove this, we first need the notion of an immediate observation net.
	An immediate observation (IO) net is a tuple $\net = (Q,\delta)$ 
	where $Q$ is a finite set of states 
	and $\delta \subseteq Q \times Q \times Q$ is the transition relation.
	A configuration of $\net$ is a multiset over $Q$, and there is a step between two configurations $C,C'$ if there exists 
	$(p,q,p') \in \delta$
	such that $C \ge \multiset{p,q}$ and $C' = C - \vec{p} + \vec{p'}$. 
	
	In Section 6.2 of~\cite{Gandalf21}, it is shown that RBN can simulate IO nets. 
	%Intuitively, IO nets are RBN in which the messages are the states and every broadcast transition is a self-loop sending the current state, i.e. is of the form $(q, !q, q)$.
	More specifically, given an IO net $\net = (Q,\delta)$, Section 6.2 of \cite{Gandalf21} shows that we can compute in polynomial time, a RBN $\RBN=(Q, \Sigma, \delta')$ with the same set of states
	such that $C \act{*} C'$ in $\net$ if and only if $C \act{*} C'$ in $\RBN$ for any two configurations $C,C'$.
	This implies that for any two subsets of configurations $\cube, \cube'$, $\poststar(\cube) \subseteq \prestar(\cube')$ in $\net$ if and only if
	$\poststar(\cube) \subseteq \prestar(\cube')$ in $\RBN$.
	
	Given an IO net $\net = (Q,\delta)$, a subset $I \subseteq Q$ and an output function $O : Q \to \{0,1\}$, 
	the tuple $(\net, I, O)$ defines an immediate observation (IO) population protocol, a subclass of population protocols introduced in~\cite{Comp-Power-Pop-Prot}. 
	Similar to the definition of RBN protocols, we can define the notion of an IO protocol computing a predicate and 
	Theorem 7 and Theorem 39 of~\cite{Comp-Power-Pop-Prot} shows that IO population protocols compute exactly the predicates definable by counting sets.
	
	Proposition 2.12 of~\cite{DC} entails that an IO protocol computes a predicate $\varphi$ if and only if $\poststar(\initcube_b) \subseteq \prestar(\stable_b)$
	for every $b \in \{0,1\}$. 
	%new because of thm removal
This is also true for RBN. 
Indeed Proposition 2.12 of~\cite{DC} states the above result for “well-behaved generalized
	protocols” (Definition 2.1 of~\cite{DC}). Fix an arbitrary RBN protocol $\prot$. 
	By definition 2.1 of~\cite{DC}, it is a generalized
	protocol by setting $Conf$ to be the set of configurations of $\prot$, $\Sigma$ to
	be its set of states, and $Step$ to be the step relation of the underlying RBN. 
	By definition 2.8 of~\cite{DC}, it is well-behaved, i.e.,
	every fair execution eventually ends up in a bottom strongly connected component of the reachability graph. 
	This is because the number of processes does not change along a run, so the reachability graph
	from any configuration is finite. 
	%end new
	
	Since RBN can simulate IO nets, it follows 
	%by Theorem~\ref{thm:compute-characterization}, 
	that RBN protocols can compute any predicate computable by IO protocols and this concludes
	the proof.
	
	%Using the property of the simulation in \cite{} \chana{cite Gandalf}, this is also true for RBN protocols.\chana{enough?}
\end{proof}

%\ThmRBNCorrect*
%
%\begin{proof}
%	Let $\PP=(Q, \Sigma, \delta, I, O)$ be an RBN protocol with $|I|=k$.
%	Let $\varphi:\N^k \rightarrow \set{0,1}$ be a predicate definable by counting sets.
%	By Theorem \ref{thm:compute-characterization}, 
%	$\PP$ computes $\varphi$ if and only if
%	$$
%	\post^*(\initcube_{b}) \subseteq \pre^*(\stable_b ).
%	$$
%	This can be rewritten as testing whether 
%	$
%	\post^*(\mathcal{I}_{b}) \cap \overline{\pre^*(\mathcal{ST}_b )} = \emptyset
%	$
%	holds.
%	By Lemma~\ref{lm:at-most-cc}, $\initcube_{b}$ and $\stable_b$ are counting sets.
%	The \PSPACE{} Theorem (Theorem~\ref{thm:pspace-lemma})
%	%Thus, by Proposition~\ref{prop:oponconf} and Theorem~\ref{thm:poststar},
%	%$\post^*(\mathcal{I}_{b}) \cap \overline{\pre^*(\mathcal{ST}_b )}$ is also a counting set.
%	%The \PSPACE{} Theorem \ref{thm:pspace-lemma} 
%	tells us that there is a \PSPACE{} algorithm for testing emptiness of this set.
%	
%	For \PSPACE{}-hardness, we return to the simulation of IO nets by RBN (Section 6.2 of~\cite{Gandalf21}), as in the proof of Lemma \ref{lm:at-least-cc}. 
%	We have seen that given an IO protocol $(\net,I,O)$ and a predicate $\varphi$,
%	we can compute in polynomial time an RBN protocol $\PP$ such that $\PP$ computes
%	$\varphi$ iff $(\net,I,O)$ computes $\varphi$. Since the correctness problem for IO protocols 
%	is \PSPACE-complete \cite{EsparzaRW19}, this implies our desired hardness result.
%\end{proof}

\section{Asynchronous shared-memory systems}

We now consider another model of distributed computation called
asynhcronous shared-memory systems (ASMS)~\cite{Hague11,JACM16}. Here, we have a set 
of finite-state, anonymous agents which can communicate by means of a single shared register,
i.e., agents can either write a value to the register or read the value currently written on the
register. The definitions and notations in this section are taken from~\cite{Gandalf21}.

\begin{definition}
	An asynchronous shared-memory system (ASMS) is a tuple $\ASMS = (Q,\Sigma,\delta)$ where
	$Q$ is a finite set of states, $\Sigma$ is a finite alphabet,
	and $\delta \subseteq Q \times \{R,W\} \times \Sigma \times Q$ is the set of transitions.
	Here $R$ stands for \emph{read}, and
	$W$ stands for \emph{write}.
\end{definition}

We use $p \trans{R(d)} q$ (resp. $p \trans{W(d)} q$) to denote
that $(p,R,d,q) \in \delta$ (resp. $(p,W,d,q) \in \delta$). 
A configuration $C$ of an ASMS is a multiset over $Q \cup \Sigma$
such that $\sum_{d \in \Sigma} C(d) = 1$, i.e., $C$ contains
exactly one element from the set $\Sigma$. 
Hence, we sometimes denote a configuration $C$ as $(M,d)$
where $M$ is a multiset over $Q$ (which counts the number of processes
in each state) and $d \in \Sigma$ (which denotes the content of the shared register).
The value $d$ will be denoted by $\data(C)$.

A \emph{step} between configurations $C = (M,d)$ and $C' = (M',d')$ exists
if there is $t = (p,\oper,d'',q) \in \delta$ such that
$M(p) > 0$, $M' = M - \vec{p} + \vec{q}$
and either $\oper = R$ and $d = d' = d''$ or 
$\oper = W$ and $d' = d''$. If such a step exists,
we denote it by $C \trans{t} C'$ and we let $\trans{*}$ denote
the reflexive transitive closure of the step relation. 
A \emph{run} is then a sequence of steps.
%Given a sequence of transitions $\sigma = t_1,\dots,t_n$, we 
%sometimes use $C \act{\sigma} C'$ to denote that there is 
%a run of the form $C \act{t_1} C_1 \act{t_2} \dots C_{n-1} \act{t_n} C'$.

A cube $\cube = (L,U)$ of an ASMS $\ASMS = (Q,\Sigma,\delta)$ is defined to be a cube over $Q \cup \Sigma$
satisfying the following property : There exists $d \in \Sigma$ such that $L(d) = U(d) = 1$ and $L(d') = U(d') = 0$ for every other $d' \in \Sigma$.
Hence, we sometimes denote a cube $\cube$ as $(L,U,d)$ where $(L,U)$ is a cube over $Q$ and $d \in \Sigma$.
Membership of a configuration $C$ in a cube $\cube$ is then defined in a straightforward manner.

The cube-reachability problem for ASMS is then to decide, given $\ASMS$ and two cubes $\cube, \cube'$,
whether $\cube$ can reach $\cube'$, i.e., whether there are configurations $C \in \cube, C' \in \cube'$
such that $C \act{*} C'$. In Section 4 of~\cite{Gandalf21}, it was shown that the
cube-reachability problems for RBN and ASMS are polynomial-time equivalent. By Theorem~\ref{thm:cube-reachability}, we get

\begin{theorem}
	The cube-reachability problem for ASMS is \PSPACE-complete.
\end{theorem}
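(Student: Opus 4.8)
The plan is to obtain the result as a direct corollary of two facts already available to us: the \PSPACE-completeness of cube-reachability for RBN (Theorem~\ref{thm:cube-reachability}), and the polynomial-time equivalence between the cube-reachability problems for RBN and for ASMS established in Section~4 of~\cite{Gandalf21}. Since \PSPACE{} is closed under polynomial-time reductions, and the equivalence supplies polynomial-time reductions in both directions, \PSPACE-completeness transfers from RBN to ASMS. I would therefore argue membership and hardness separately, each by composing one of the two reductions with the corresponding half of Theorem~\ref{thm:cube-reachability}.

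For membership, I would take an arbitrary instance $(\ASMS, \cube, \cube')$ of ASMS cube-reachability and apply the polynomial-time reduction from ASMS to RBN given in~\cite{Gandalf21}, producing an equivalent RBN instance of polynomial size. Running the \PSPACE{} algorithm for RBN cube-reachability (which exists by Theorem~\ref{thm:cube-reachability}, itself a consequence of the \PSPACE{} Theorem) on this polynomially larger instance stays within polynomial space, so ASMS cube-reachability is in \PSPACE. For hardness, I would start from an arbitrary RBN cube-reachability instance, which is \PSPACE-hard by Theorem~\ref{thm:cube-reachability}, and apply the polynomial-time reduction from RBN to ASMS of~\cite{Gandalf21}; this exhibits a polynomial-time reduction from a \PSPACE-hard problem to ASMS cube-reachability, giving \PSPACE-hardness.

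There is essentially no mathematical obstacle here: the only point requiring attention is that the reductions of~\cite{Gandalf21} map cubes to cubes, respecting the extra ASMS constraint that exactly one register symbol is present (encoded by $L(d) = U(d) = 1$ for a single $d \in \Sigma$ and $L(d') = U(d') = 0$ elsewhere), so that the preserved question is genuinely cube-reachability and not some coarser reachability problem. As this preservation is precisely what Section~4 of~\cite{Gandalf21} establishes, the theorem follows immediately by combining that equivalence with Theorem~\ref{thm:cube-reachability}.
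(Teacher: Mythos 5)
Your proposal is correct and follows exactly the paper's own argument: the paper likewise obtains the result by combining the polynomial-time equivalence of cube-reachability for RBN and ASMS from Section~4 of~\cite{Gandalf21} with Theorem~\ref{thm:cube-reachability}. Your extra remarks on composing the two reduction directions and on the reductions preserving the cube format are just a more explicit spelling-out of the same one-line derivation.
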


\subsection*{The almost-sure coverability problem}

Similar to RBN, we can define the almost-sure coverability problem for ASMS.
Let $\ASMS = (Q,\Sigma,\delta)$ be an ASMS with two special states $\init$ and $\fin$
and a special initial letter $\# \in \Sigma$.
Let $\uparrow \fin$ denote the set of all configurations $C$ with $C(\fin) \ge 1$.
For any $k \ge 1$, we say that the configuration $(\multiset{k \cdot \init},\#)$ almost-surely covers 
$\fin$ iff $\poststar(\multiset{k \cdot \init},\#) \subseteq \prestar(\uparrow \fin)$.
Now, similar to RBN, it is easy to define the notion of a cut-off for ASMS. The following
fact is known (Theorem 3 of~\cite{ICALPPatricia}).

\begin{theorem}
	Given an ASMS $\ASMS$ with two state $\init, \fin$ and a letter $\#$, a cut-off always exists.
	Whether the cut-off is positive or negative can be decided in \EXPSPACE \ and \PSPACE-hard.
\end{theorem}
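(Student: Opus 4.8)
The plan is to transfer the RBN results of Section~\ref{sec:almost-sure} to ASMS through a configuration-faithful simulation, in the same spirit as the polynomial-time equivalence for cube-reachability in Section 4.2 of~\cite{Gandalf21}. First I would give a polynomial-time construction relating an ASMS $\ASMS=(Q,\Sigma,\delta)$ (with distinguished $\init,\fin$ and initial letter $\#$) to an RBN $\RBN$ so that the step relations correspond configuration-by-configuration: the single shared register is encoded by a dedicated ``register agent'' whose state records $\data(C)$, a write $p\act{W(d)}q$ becomes a broadcast updating this agent to $d$, and a read $p\act{R(d)}q$ becomes a broadcast that the register agent (currently holding $d$) acknowledges. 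Because almost-sure coverability is the purely qualitative condition $\poststar(\multiset{k\cdot\init})\subseteq\prestar(\uparrow\fin)$, such a correspondence between reachability relations is exactly what is needed to make the two almost-sure coverability problems --- and hence the two cut-offs --- coincide.

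Given this equivalence, existence of a cut-off together with its decision procedure follow by reusing the counting-set machinery. By closure of counting sets under $\poststar$, $\prestar$ and Boolean operations (Corollary~\ref{coro:closure}), which carries over to $\ASMS$ through the simulation, the set $\cSet := \poststar([\init])\cap\overline{\prestar(\uparrow\fin)}$ (with register value fixed to $\#$) is a counting set whose norm is at most $2^{p(|Q|)}$ for a fixed polynomial $p$, by the \PSPACE{} Theorem (Theorem~\ref{thm:pspace-lemma}). Exactly as in Theorem~\ref{thm:charac-almost-sure-cov}, $\ASMS$ has a positive cut-off iff $\cSet$ is finite, and $|Q|\cdot\norm{\cSet}$ bounds the cut-off; this already yields existence of an at-most-exponential cut-off. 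To decide positivity, I would test finiteness of $\cSet$ by guessing a single witness configuration of size $\norm{\cSet}+1$ concentrated in one state and checking its membership in $\cSet$ via the \PSPACE{} membership algorithm, exactly as in Lemma~\ref{lem:almost-sure-upper-bound}. This runs in \PSPACE, which in particular gives the claimed \EXPSPACE{} bound.

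For the \PSPACE-hardness I would argue in the reverse direction, reducing RBN almost-sure coverability --- already shown \PSPACE-hard in Section~\ref{sec:almost-sure} --- to the ASMS problem: under the same correspondence an RBN broadcast is realized by a write immediately followed by the neighbours reading the written symbol, and positivity of the cut-off is preserved, so hardness transfers. I expect the main obstacle to be the simulation itself. Unlike cube-reachability, almost-sure coverability is sensitive to \emph{all} reachable configurations through the $\prestar(\uparrow\fin)$ term, so the read/write gadget must introduce no spurious runs that either create new ways of reaching $\fin$ or block existing ones. The crucial point to verify is therefore that the register-agent gadget induces a tight correspondence on the full reachability relation, not merely on reachability of a single target, so that both $\poststar$ and $\prestar$ are faithfully preserved; once this is established, every step above goes through and in fact upgrades the cited \EXPSPACE{} bound to \PSPACE-completeness.
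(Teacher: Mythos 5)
Your proposal is essentially correct, but it takes a different route from the paper for a simple reason: the paper does not prove this statement at all. It is quoted verbatim as a known fact, namely Theorem~3 of~\cite{ICALPPatricia}, and serves only as the baseline that the paper then improves. What you have written is, in substance, the paper's proof of the \emph{next} theorem in that appendix (the \PSPACE-completeness of the ASMS cut-off problem): the paper likewise characterizes positivity of the cut-off by finiteness of $\cSet = \poststar(([\init],\#)) \cap \overline{\prestar(\uparrow \fin)}$, transfers this set to the RBN side via the Section~4.2 construction of~\cite{Gandalf21}, and decides finiteness exactly as in Lemma~\ref{lem:almost-sure-upper-bound}. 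So your argument proves something strictly stronger than the stated theorem and then reads off the \EXPSPACE{} bound and the hardness as corollaries; the paper's citation, by contrast, is the historically accurate attribution of a result that predates (and does not need) the counting-set machinery. Two points in your sketch deserve more care if you want it to stand on its own. First, the closure results (Corollary~\ref{coro:closure}) are proved for RBN, not ASMS; the paper handles this by intersecting the translated sets with the counting set of ``good'' configurations (exactly one agent encoding the register) before applying the RBN machinery, i.e.\ it works with $(\poststar(\cube) \cap \initcube) \cap \overline{\prestar(\uparrow \fin) \cap \initcube}$, and your phrase ``carries over to $\ASMS$ through the simulation'' is hiding precisely this step. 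Second, you correctly identify that the register-agent gadget must preserve the \emph{full} reachability relation (both $\poststar$ and $\prestar$), not just reachability of a target; the paper does not re-verify this either but leans on the configuration-by-configuration correspondence already established in~\cite{Gandalf21}, so your deferral is acceptable provided you cite that correspondence rather than re-deriving the gadget informally.
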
 

The main result of this subsection is that
\begin{theorem}
	Deciding whether the cut-off of a given ASMS is positive or negative is in \PSPACE-complete.
\end{theorem}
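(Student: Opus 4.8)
The plan is to treat the two bounds separately. The lower bound is already available: Theorem 3 of~\cite{ICALPPatricia} states that deciding the sign of the cut-off for an ASMS is \PSPACE-hard, so it remains only to place the problem in \PSPACE. For this I would replay the development of Section~\ref{sec:almost-sure} in the ASMS setting. Let $[\init]$ now denote the cube of ASMS configurations that place an arbitrary number of agents in $\init$, no agent in any other state, and the letter $\#$ in the register, and let $\uparrow\fin$ be the cube of configurations $C$ with $C(\fin)\ge 1$. Set $\cSet := \poststar([\init]) \cap \overline{\prestar(\uparrow\fin)}$. The characterization of Theorem~\ref{thm:charac-almost-sure-cov} then transfers verbatim, since its proof uses only the reachability identity $\poststar(\multiset{h\cdot\init},\#)\subseteq\prestar(\uparrow\fin)$ that defines almost-sure coverability: the ASMS has a positive cut-off iff $\cSet$ is finite, and $|Q|\cdot\norm{\cSet}$ bounds a cut-off.

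The key ingredient is an ASMS analogue of the \PSPACE\ Theorem, namely that $\poststar$ and $\prestar$ of a cube over an ASMS are again counting sets of exponentially bounded norm, admitting \PSPACE\ membership. I would obtain this by transferring Theorem~\ref{thm:poststar} and Theorem~\ref{thm:pspace-lemma} through the polynomial-time simulation underlying the reduction of Section 4.2 of~\cite{Gandalf21} already invoked above for cube-reachability. Concretely, that simulation represents the single register value by one distinguished agent, so an ASMS configuration $(M,d)$ corresponds, in a reachability-preserving way, to an RBN configuration $M'+\vec{r}_d$; hence $\poststar$ and $\prestar$ over the ASMS correspond to the RBN operators restricted to configurations carrying exactly one register agent. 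Since the constraint ``exactly one register agent'', equivalently the single-letter constraint $\sum_{d\in\Sigma}C(d)=1$, is itself a counting set, intersecting with it and projecting back keeps us among counting sets, with only a polynomial blow-up in norm coming from the reduction. Boolean closure then follows from Proposition~\ref{prop:oponconf} and the Closure Corollary~\ref{coro:closure}, which hold for multisets over $Q\cup\Sigma$ as well. Consequently $\cSet$ is a counting set of norm $N$ at most exponential in the size of $\ASMS$, and membership $C\in\cSet$ reduces, exactly as in the \PSPACE\ Theorem, to a guessed reachability check in polynomial space.

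Finiteness of $\cSet$ is then decided in \PSPACE\ precisely as in Lemma~\ref{lem:almost-sure-upper-bound}: fixing a norm-minimal constraint $\cSet=\bigcup_i\cube_i$ with each $\cube_i=(L_i,U_i)$, the set $\cSet$ is infinite iff some $U_i(q)=\infty$, and this holds iff there exist an index $q\in Q\cup\Sigma$ and a configuration $C\in\cSet$ with $C(q')\le N$ for all $q'\neq q$ and $C(q)=N+1$. Guessing such a $q$ and $C$, with each coefficient written in binary, and verifying $C\in\cSet$ with the \PSPACE\ membership algorithm decides the sign of the cut-off in polynomial space, which together with the hardness above yields \PSPACE-completeness.

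I expect the main obstacle to be the transfer step. One has to check that the simulation of~\cite{Gandalf21} is reachability-preserving at the level of individual configurations, not merely between cubes, so that it genuinely commutes with $\poststar$ and $\prestar$, and that the encoding of the single shared register by a distinguished agent is faithful: the register is a global resource with no native counterpart in an RBN, and it is precisely this encoding that must respect the parameterization by the number of agents so that the notion of cut-off is preserved, up to the additive constant introduced by the extra register agent.
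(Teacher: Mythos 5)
Your proposal is correct and follows essentially the same route as the paper: the lower bound is taken from~\cite{ICALPPatricia}, the positive cut-off is characterized by finiteness of $\cSet = \poststar(([\init],\#)) \cap \overline{\prestar(\uparrow \fin)}$, the problem is transferred to the RBN produced by the Section 4.2 reduction of~\cite{Gandalf21} by intersecting with the counting set of ``good'' configurations carrying exactly one register agent, and finiteness is then decided exactly as in Lemma~\ref{lem:almost-sure-upper-bound}. The configuration-level reachability preservation you flag as a potential obstacle is precisely what that reduction guarantees, so no further work is needed there.
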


Note that it suffices only to prove the upper bound, since the lower bound is already known.
Let $\ASMS = (Q,\Sigma,\delta)$ be a fixed ASMS with $\init, \fin \in Q$ and $\# \in \Sigma$.
Let $([\init],\#)$ denote the cube which has an arbitrary number of agents in the state $\init$
and 0 elsewhere. Similar to the model of RBN, we first show that, 
\begin{theorem}
	$\ASMS$ has a positive cut-off iff $\cSet := \poststar(([\init],\#)) \cap \overline{\prestar(\uparrow \fin)}$ is finite. 
\end{theorem}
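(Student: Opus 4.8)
The plan is to mirror the proof of Theorem~\ref{thm:charac-almost-sure-cov} for RBN, exploiting one structural feature specific to ASMS: a step $(M,d) \act{} (M',d')$ moves a single agent from one state to another, so $|M'| = |M|$; that is, the number of agents is invariant along every run. Since the only configuration in the cube $([\init],\#)$ with exactly $h$ agents is $(\multiset{h \cdot \init},\#)$, this invariant yields the key correspondence on which both directions rest: a configuration $C = (M,d)$ lies in $\poststar(([\init],\#))$ if and only if $C \in \poststar((\multiset{h \cdot \init},\#))$, where $h := |M|$ is its number of agents. I would establish this correspondence first, before either implication.

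For the forward direction, I would assume $\ASMS$ has a positive cut-off $k$ and show $\cSet$ is finite by bounding the number of agents of every element. Suppose some $C = (M,d) \in \cSet$ had $h := |M| \ge k$ agents. Then $C \in \poststar(([\init],\#))$, so by the correspondence $C \in \poststar((\multiset{h \cdot \init},\#))$; as $h \ge k$ and $k$ is a positive cut-off, $(\multiset{h \cdot \init},\#)$ almost-surely covers $\fin$, i.e. $\poststar((\multiset{h \cdot \init},\#)) \subseteq \prestar(\uparrow \fin)$, whence $C \in \prestar(\uparrow \fin)$, contradicting $C \in \overline{\prestar(\uparrow \fin)}$. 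Hence every element of $\cSet$ has fewer than $k$ agents, and since there are only finitely many configurations $(M,d)$ with $|M| < k$ (finitely many multisets of bounded size over the finite set $Q$, times the $|\Sigma|$ choices for the register letter $d$), $\cSet$ is finite.

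For the backward direction, I would assume $\cSet$ is finite and exhibit a positive cut-off. Let $k := 1 + \max\{\, |M| : (M,d) \in \cSet \,\}$, taking $k := 1$ if $\cSet = \emptyset$, so that every element of $\cSet$ has strictly fewer than $k$ agents. I claim $k$ is a positive cut-off. Fix any $h \ge k$ and any $C = (M,d)$ with $(\multiset{h \cdot \init},\#) \act{*} C$. Then $C \in \poststar(([\init],\#))$ and $|M| = h \ge k$ by the agent invariant, so $C \notin \cSet$. As $C$ lies in $\poststar(([\init],\#))$ but not in $\cSet = \poststar(([\init],\#)) \cap \overline{\prestar(\uparrow \fin)}$, it must satisfy $C \in \prestar(\uparrow \fin)$, i.e. $\fin$ is coverable from $C$. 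Since $C$ was an arbitrary successor of $(\multiset{h \cdot \init},\#)$, that configuration almost-surely covers $\fin$, and as $h \ge k$ was arbitrary, $k$ is a positive cut-off.

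I expect no serious obstacle: the argument is elementary once the conservation-of-agents invariant is isolated, and it does not even require the counting-set machinery (the counting-set structure of $\cSet$ will instead be what powers the subsequent \PSPACE{} upper bound). The only points demanding a little care are the boundary cases ($\cSet$ empty and $h = 0$, both handled by taking $k \ge 1$) and checking that the register coordinate, which always holds a single letter, never contributes to the agent count and therefore plays no role in the finiteness bounds.
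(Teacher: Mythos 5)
Your proposal is correct and follows essentially the same argument as the paper: both directions rest on the conservation of the number of agents along ASMS runs, which identifies the elements of $\poststar(([\init],\#))$ with $h$ agents with the elements of $\poststar((\multiset{h\cdot\init},\#))$, and then bound the size of configurations in $\cSet$. The only difference is presentational — the paper proves one implication directly and the other by contraposition ($\cSet$ infinite implies no positive cut-off), whereas you prove both directly and state the agent-count invariant explicitly, which the paper leaves implicit.
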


\begin{proof}
	Suppose $\cSet$ is finite. Let $N$ be the largest value appearing in any of the configurations 
	of $\cSet$. It is easy to see that if $h > |Q| \cdot N$, then any configuration of size
	$h$ does not belong in $\cSet$. It follows that if $h > |Q| \cdot N$ and $C \in \poststar((\multiset{h \cdot \init},\#))$ then $C \in \prestar(\uparrow \fin)$ and 
	so we have a positive cut-off.
	
	Suppose $\cSet$ is infinite. Then there must be an infinite set of configurations
	which belong to $\poststar(([\init],\#))$ but not in $\prestar(\uparrow \fin)$.
	This means that for infinitely many numbers $h$, there is a configuration $C_h \in \poststar((\multiset{h \cdot \init},\#))$ but $C_h \notin \prestar(\uparrow \fin)$.
	This then implies that $\ASMS$ cannot have a positive cut-off.
\end{proof}

Hence, checking whether $\ASMS$ has a positive cut-off is equivalent to
deciding if $\cSet$ is finite. We shall now show that this is decidable in \PSPACE{}.
To show this, we recall the connection established between RBN and ASMS in Section 4 of~\cite{Gandalf21}.

Given an ASMS $\ASMS =(Q,\Sigma,\delta)$, in section 4.2 of~\cite{Gandalf21}, it is shown that
in polynomial time we can come up with an RBN $\RBN = (Q \cup \Sigma \cup Q', \Sigma', \delta')$
which has a copy of $Q$ and $\Sigma$ as its states and which has the following properties:
\begin{itemize}
	\item A good configuration of $\RBN$ is a configuration $C$ such that $\sum_{a \in \Sigma} C(a) = 1$ and $C(q) = 0$ if $q \notin Q \cup \Sigma$. Notice that there is a natural bijection
	between configurations of $\ASMS$ and good configurations of $\RBN$. 
	\item A configuration $C$ of $\ASMS$ can reach a configuration $C'$ of $\ASMS$ iff 
	$\hat{C}$ can reach $\hat{C'}$ in $\RBN$, where $\hat{C}$ and $\hat{C'}$ are
	the corresponding good configurations of $C$ and $C'$ respectively.
\end{itemize}

Let $\initcube$ denote the set of all good configurations of $\RBN$ and let
$\cube$ denote the set of all good configurations of $\RBN$ which puts 
an arbitrary number of agents in $\init$, exactly one agent in $\#$ and zero agents elsewhere.

It then follows that
the set $\cSet := \poststar(([\init],\#)) \cap \overline{\prestar(\uparrow \fin)}$ over $\ASMS$ is finite iff 
the set $\cSet' := (\poststar(\cube) \cap \initcube) \cap \overline{\prestar(\uparrow \fin) \cap \initcube}$ over $\RBN$ is finite.
Similar to the proof of Lemma~\ref{lem:almost-sure-upper-bound}, we can decide if this set is finite
or not in \PSPACE{}. This gives the required \PSPACE{} upper bound.

\end{document}